\documentclass[10pt,dvips,twoside,letterpaper]{article}
%%%%%%%%%%%%%%%%%%%%%%%%%%%%%%%%%%%%%%%%%%%%%%%%%%%%%%%%%%%%%%%%%%%%%%%%%%%%%%%%%%%%%%%%%%%%%%%%%%%%%%%%%%%%%%%%%%%%%%%%%%%%%%%%%%%%%%%%%%%%%%%%%%%%%%%%%%%%%%%%%%%%%%%%%%%%%%%%%%%%%%%%%%%%%%%%%%%%%%%%%%%%%%%%%%%%%%%%%%%%%%%%%%%%%%%%%%%%%%%%%%%%%%%%%%%%
\usepackage{pslatex}
\usepackage{fancyhdr}
\usepackage{graphicx}
\usepackage{geometry}
\usepackage{amsmath}
\usepackage{amssymb}
\usepackage{amsfonts}
\usepackage{amsthm,amscd}

\setcounter{MaxMatrixCols}{10}
%TCIDATA{OutputFilter=Latex.dll}
%TCIDATA{Version=5.00.0.2552}
%TCIDATA{<META NAME="SaveForMode" CONTENT="1">}
%TCIDATA{LastRevised=Sunday, March 11, 2012 19:38:42}
%TCIDATA{<META NAME="GraphicsSave" CONTENT="32">}

\def\figurename{Figure}
\makeatletter
\renewcommand{\fnum@figure}[1]{\figurename~\thefigure.}
\makeatother
\def\tablename{Table}
\makeatletter
\renewcommand{\fnum@table}[1]{\tablename~\thetable.}
\makeatother
\newtheorem{theorem}{Theorem}[section]

\newtheorem{proposition}[theorem]{Proposition}
\theoremstyle{definition}
\newtheorem{definition}[theorem]{Definition}

\theoremstyle{remark}
\newtheorem{remark}[theorem]{Remark}
\numberwithin{equation}{section}

\setlength{\topmargin}{-0.35in} \setlength{\textheight}{8.5in}
\setlength{\textwidth}{5.5in} \setlength{\oddsidemargin}{0.5in}
\setlength{\evensidemargin}{0.5in} \setlength{\headheight}{26pt}
\setlength{\headsep}{8pt} \addtolength{\textheight}{2cm}
\addtolength{\textwidth}{2cm}
\input{tcilatex}

\begin{document}

\title{\bfseries%
\scshape{Construction of initial data associated to the
characteristic initial value problem for the
Einstein-Yang-Mills-Higgs system}}
\author{\bfseries\scshape Calvin Tadmon$^{1,2}$\thanks{%
E-mail address: tadmonc@yahoo.fr; calvin.tadmon@up.ac.za} \\
%EndAName
$^{1}$ Department of Mathematics and Computer Science, \\
University of Dschang, P.\ O.\ Box 67, Dschang, Cameroon\\
$^{2}$ Department of Mathematics and Applied Mathematics,\\
University of Pretoria, Pretoria 0002, South Africa}
\date{}
\maketitle

\begin{abstract}
We show how to assign initial data for the characteristic
Einstein-Yang-Mills-Higgs system on two intersecting smooth null
hypersurfaces. We successfully adapt the hierarchical method set
up by A. D. Rendall to solve the same problem for the Einstein
equations in vacuum and with perfect fluid source. Unlike the work
of Rendall, many delicate calculations and expressions are given
in details so as to address, in a forthcoming work, the issue of
global resolution of the characteristic initial value problem for
the Einstein-Yang-Mills-Higgs system. The\ method\ obviously\
applies\ to\ the\ Einstein-Maxwell\ and\ the\ Einstein-scalar\
field\ models.
\end{abstract}

%\vskip 0.4in

% ------- [First Page Running Head] - place it immediately after title! ------
\thispagestyle{empty} \fancyhead{} \fancyfoot{} \renewcommand{%
\headrulewidth}{0pt}

\noindent \textbf{AMS Subject Classification:} [2000] 35L15,
35L70, 46E35, 46J10, 81T13, 83C10, 83C20.

\vspace{0.08in} \noindent \textbf{Keywords}:
Einstein-Yang-Mills-Higgs system, Characteristic initial value
problem, Null or characteristic hypersurfaces, constraints
problem, Harmonic gauge, Lorentz gauge.

\section{Introduction}

Before going straight to the point it is worth noting from the
outset that, following the publication of two recent joint works
\cite{14,15} with Marcel Dossa, some readers contacted us asking
for some supplementary explanations that may enable them
understand the resolution of the constraints problem associated to
the characteristic initial value problem for the
Einstein-Yang-Mills-Higgs system (EYMH). Instead of responding to
those readers alone, we have preferred to satisfy the whole
mathematical community by providing the complete details
concerning those aspects of the resolution of the constraints
problem associated to the characteristic initial value problem for
the EYMH system which were missing in \cite{14,15}. The
presentation is made in such a way that the paper, devoted to the
resolution of the constraints problem associated to the
characteristic initial value problem for the EYMH system on two
intersecting smooth null hypersurfaces, is almost self-contained.
The interests and physical motivations for studying characteristic
initial value problems have been widely mentioned in \cite{13, 16,
17, 20, 22, 23, 24, 25, 26, 27}. Among those interests and
physical motivations we can mention the\ following:

\begin{itemize}
\item In a certain way data on a null cone represent our present
knowledge of the cosmos better than data on spacelike
hypersurfaces and are closely related to observable quantities
(see \cite{16, 22, 24}).

\item In General Relativity (GR) where existence, non existence,
uniqueness play an essential role, specified hypersurfaces on
which data are posed are rather characteristic than spacelike
(horizons, null cones, past\ or\ future null infinity, etc\ldots ,
see \cite{13, 20, 23}).

\item The characteristic problem is quite natural in the case of
shock waves where discontinuities necessarily appear on
characteristics (see \cite{17, 22}).

\item Characteristic data have the important advantage that there
exists a set of quantities that can be given independently of each
other and that determine the solution uniquely. (see \cite{22,
24})

\item A lot of quantities have to be obtained by solving transport
equations along the bicharacteristics: For second order systems,
the first derivatives are not given on the initial null
hypersurfaces, they are obtained through transport equations. (see
\cite{23})

\item In the case of the Einstein equations, the constraints for
the gravitational field as well as for the harmonic coordinates
reduce to explicit ordinary differential equations (ODEs) instead
of the elliptic partial differential equations (PDEs) which arise
when the classical Cauchy problem (where initial data are assigned
on a spacelike hypersurface) is considered (see \cite{24, 25}). So
one interesting application of the characteristic initial value
problem is in numerical GR (see \cite{26, 27}).
\end{itemize}

As for the EYMH system, it is a mathematically and physically
interesting model in GR and Gauge field Theory. The EYMH equations
form a basis for the well-known self-consistent model of
interaction of gravitational, gauge and scalar fields (see
\cite{1, 29} for review and basic references). The
Einstein-Yang-Mills-Higgs theory unifies two important trends in
the theory of gravity. The first one is the Einstein-Yang-Mills
model, which is a non-abelian generalization of the well known
Einstein-Maxwell model of relativistic electromagnetism theory.
The second trend is connected with the investigations of
interaction of gravitational and scalar fields. The importance of
the EYMH model in the theory of dark matter an in the theories of
dark energy has been underlined by A.\ B.\ Balakin et al. \cite{1}
and M.S.\ Volkov and D. V.\ Gal'tsov \cite{29}.

It is well known that, under a suitable choice of gauge
conditions, the initial value problem for the EYMH system splits
into two parts called the evolution problem and the constraints
problem. Throughout all the work we will use harmonic gauge for
the gravitational field and Lorentz gauge for the Yang-Mills
potential.

Some important works on characteristic initial value problems with
initial
data prescribed on two intersecting null hypersurfaces can be found in \cite%
{2, 3, 4, 5, 6, 10, 11, 12, 13, 14, 15, 19, 20, 22, 23, 24, 25}.

When attempting to solve the constraints problem for the
characteristic EYMH system by the hierarchical method of Rendall
(see \cite{24, 25}), some obstacles occur due to the form of\ the
stress-energy tensor which is more complex compared to the vacuum
Einstein equations or the Einstein-perfect fluid cases. The
feature of the present work resides in the fact that we have
patiently overcome all these difficulties by judiciously
performing very tedious calculations. This seems to be, to our
knowledge, a true advance and in all points of view, new.
Moreover, unlike the work of Rendall \cite{24, 25}, many delicate
calculations and expressions are given in details in such a way
that one can foresee promising resolution of the global
characteristic EYMH system by adapting recent methods developed by
H. Lindblad and I. Rodnianski \cite{21} and C.\ Svedberg
\cite{28}\ to study ordinary spatial Cauchy problems for Einstein
equations in vacuum and Einstein-Maxwell-Scalar field system
respectively.

Since the reduced EYMH equations stand as a second order
quasilinear hyperbolic system, the resolution of the corresponding
evolution problem can obviously be achieved directly by just
referring to Theorem 1 of the paper of A.\ D.\ Rendall \cite{24}.
So, in the present work, we mainly focus our attention on the
resolution of the constraints problem associated to the EYMH
system.

The paper is organized as follows. In section 2, we give some
preliminaries about the EYMH system. The complete form as well as
the reduced form of the EYMH system, suitable\ for the resolution
of the constraints problem, are given explicitly under harmonic
gauge and Lorentz gauge conditions. The concern of section 3 is
the resolution of the constraints problem for the characteristic
EYMH system i.e., the construction, from arbitrary choice of some
components of the unknown function (called free data) on the
initial null hypersurfaces, of the complete initial data for the
reduced EYMH system such that the harmonic gauge and the Lorentz
gauge conditions are satisfied
on the initial null hypersurfaces. For sake of simplicity, only the case of $%
C^{\infty }$ data will be discussed. Data of finite
differentiability order may be constructed in Sobolev type spaces
using energy inequalities and other classical tools as described
in \cite{13, 14, 15, 23, 24}. The conclusion of the work is given
in section 4 where compatibility conditions are discussed and the
final result stated. In appendices A to D we provide the proofs
that were missing in \cite{15} (according to some readers as noted
at the beginning the lack of these important ingredients, among
others, rendered the work in \cite{15} difficult to understand).

\section{The Einstein-Yang-Mills-Higgs system}

% ------------ [Running Heads - for odd and even pages] - please insert it only on page 2!
\pagestyle{fancy} \fancyhead{} \fancyhead[EC]{C. Tadmon} \fancyhead[EL,OR]{%
\thepage} \fancyhead[OC]{Initial data for the characteristic EYMH
system} \fancyfoot{}
\renewcommand\headrulewidth{0.5pt}

We first introduce some more comprehensible geometric tools that
are necessary for the good understanding of the deep structure of
the EYMH equations. The framework where the constraints problem
for the EYMH equations will be solved is presented. Throughout all
the section, the notations of \cite{23} are used.

\subsection{Geometric tools and notations}

$L$ denotes a compact domain of $\mathbb{R}^{4}$ with a piecewise
smooth boundary $\partial L,$ $G^{1}$ and $G^{2}$ are two
$3$-dimensional surfaces
such that $G^{\omega }\subset L$ for $\omega =1,$ $2$. We assume that $%
G^{\omega }$ are defined by
\begin{equation}
G^{\omega }=\left\{ x\in L:x^{\omega }=0\right\} ,\qquad \omega =1,\text{ }2,%
\text{ }  \tag{2.1}
\end{equation}%
where $x=\left( x^{a}\right) =\left( x^{1},...,x^{4}\right) $ is
the global canonical coordinates system of $\mathbb{R}^{4}$. In
addition we assume that $G^{1}\cup G^{2}\subset \partial L$. Set
\begin{equation}
\tau \left( x\right) =x^{1}+x^{2}\text{ and }T_{0}=\underset{x\in L}{\sup }%
\tau \left( x\right) .  \tag{2.2}
\end{equation}%
For $t\in \left[ 0,T_{0}\right] $, define the following point sets
\begin{equation}
L_{t}=\left\{ x\in L:0\leq \tau \left( x\right) \leq t\right\} ,\text{\quad }%
G_{t}^{\omega }=\left\{ x\in G^{\omega }:0\leq \tau \left(
x\right) \leq t\right\} .  \tag{2.3}
\end{equation}

\begin{remark}
\textit{The initial data will be constructed on} $G_{T}^{1}\cup
G_{T}^{2},$ \textit{for} $T\in (0,T_{0}].$
\end{remark}

The basic geometric framework is a space-time $\left(
\mathcal{M},g\right) $ i.e., a $4-$dimensional manifold
$\mathcal{M}$ equipped with a lorentzian metric $g$ of signature
$-+++$.

A Yang-Mills potential is usually represented by a $1-$form $A$ defined on $%
\mathcal{M}$ with values in the Lie algebra $\mathcal{G}$ of a Lie
group $G.$ We assume\ that the\ Lie\ group\ $G$\ admits\ a\
non-degenerate\ bi-invariant\ metric (it is the case if $G$ is the
product of abelian and semi-simple groups (see \cite{8, 9})).\ The
Lie algebra $\mathcal{G}$ admits then an $Ad$-invariant non
degenerate scalar product, denoted by a dot \textquotedblleft
$.$\textquotedblright , which enjoys the following
property:%
\begin{equation}
f.\left[ k,l\right] =\left[ f,k\right] .l,\text{ }\forall f,k,l\in \mathcal{G%
}.  \tag{2.4}  \label{66}
\end{equation}%
Here $\left[ ,\right] $ denote the Lie brackets of the Lie algebra $\mathcal{%
G}$. It is assumed that $\mathcal{G}$ is an $N$-dimensional $\mathbb{R}$%
-based Lie algebra. For simplicity $\left( x^{i}\right)
_{i=1,...,4}$ also denote the local coordinates in $\mathcal{M}$
and $\left( \varepsilon _{I}\right) _{I=1,...,N}$ denotes an
orthogonal basis of $\mathcal{G}$. Then the Yang-Mills potential
is locally defined by the following equality:

\begin{equation}
A=A_{i}^{I}dx^{i}\otimes \varepsilon _{I},\text{ with }A_{i}^{I}:\mathcal{M}%
\rightarrow \mathbb{R}\text{.}  \tag{2.5}  \label{67}
\end{equation}%
The Yang-Mills field is the curvature of the Yang-Mills potential.
It is
represented by a $\mathcal{G}$-valued antisymmetric $2-$form $F$ defined on $%
\mathcal{M}$ by the following equality:%
\begin{equation}
F=dA+\frac{1}{2}\left[ A,A\right] .  \tag{2.6a}  \label{68a}
\end{equation}%
In the local coordinates $\left( x^{i}\right) $ and basis $\left(
\varepsilon _{I}\right) $ the above equality $\left(
\ref{68a}\right) $
reads:%
\begin{equation}
F_{ij}^{I}=\nabla _{i}A_{j}^{I}-\nabla _{j}A_{i}^{I}+\left[ A_{i},A_{j}%
\right] ^{I}=\nabla _{i}A_{j}^{I}-\nabla
_{j}A_{i}^{I}+C_{JK}^{I}A_{i}^{J}A_{j}^{K},  \tag{2.6b}
\label{68b}
\end{equation}%
or in the summary form%
\begin{equation}
F_{ij}=\nabla _{i}A_{j}-\nabla _{j}A_{i}+\left[ A_{i},A_{j}\right]
, \tag{2.6c}  \label{68c}
\end{equation}%
where $\nabla $ denotes the covariant derivative w.r.t. the
space-time metric, and $C_{JK}^{I}$ are the structure constants of
the Lie group $G$.

In addition to the Yang-Mills field, many physical theories
consider a Higgs field or a scalar multiplet which is represented
by a $\mathcal{G}$-valued function $\Phi $ defined on
$\mathcal{M}.$ In the local basis $\left(
\varepsilon _{I}\right) ,$ $\Phi $ is defined as follows:%
\begin{equation}
\Phi =\Phi ^{I}\varepsilon _{I},\text{ with }\Phi ^{I}:\mathcal{M}%
\rightarrow \mathbb{R}.  \tag{2.7}  \label{69}
\end{equation}

\subsection{The EYMH equations}

Throughout the paper Roman indices vary from $1$ to $4$ and the
standard
convention of summing over repeated indices is used i.e., $u_{i}v^{i}=%
\underset{i=1}{\overset{4}{\sum }}u_{i}v^{i}$. Comma denotes
partial
derivative i.e., $u_{,i}=\frac{\partial u}{\partial x^{i}}$. We also denote $%
Du=\left( \frac{\partial u}{\partial x^{i}}\right) _{i=1,...,4}$.
For sake of clarity, we omit the multiplicative physical constant
that usually
appears beside the stress-energy tensor and take the cosmological constant $%
\Lambda =0$. In the local coordinates $\left( x^{i}\right) $ on $\mathcal{M}$%
, the EYMH system reads as follows (see \cite{1, 8, 14, 15, 29})%
\begin{equation}
\begin{array}{l}
R_{ij}-\frac{1}{2}Rg_{ij}=\rho _{ij}\text{, } \\
\widehat{\nabla }_{i}F^{ij}=J^{j}\text{, } \\
\widehat{\nabla }_{i}\widehat{\nabla }^{i}\Phi =H\text{. }%
\end{array}
\tag{2.8}  \label{215}
\end{equation}%
Recall that $\left( R_{ij}\right) $ and $R$ are respectively the
Ricci tensor and the scalar curvature of the metric $g$ i.e.,

\begin{equation}
R_{ij}=\Gamma _{ij,k}^{k}-\Gamma _{ik,j}^{k}+\Gamma
_{kl}^{k}\Gamma _{ij}^{l}-\Gamma _{jl}^{k}\Gamma _{ik}^{l},\quad
R=g^{ij}R_{ij},  \tag{2.9} \label{217}
\end{equation}%
where $\Gamma _{ij}^{k}\ $are the Christofell symbols relative to
the space-time metric $g$. $F^{ij}$ are the contravariant
components of the Yang-Mills field (here and throughout the
section, indices are raised or lowered w.r.t. the space-time
metric $g$ i.e., $F^{ij}=g^{ia}g^{jb}F_{ab}$). $\Phi $ is the
Higgs field. $\rho _{ij}$ is the energy-momentum or the
stress-energy tensor which is defined\ by

\begin{equation}
\rho _{ij}=F_{ik}.F_{j}^{\quad
k}-\frac{1}{4}g_{ij}F_{kl}.F^{kl}+\Phi _{ij}, \tag{2.10a}
\label{216}
\end{equation}%
where

\begin{equation}
\Phi _{ij}=\widehat{\nabla }_{i}\Phi .\widehat{\nabla }_{j}\Phi -\frac{1}{2}%
g_{ij}\left( \widehat{\nabla }_{k}\Phi .\widehat{\nabla }^{k}\Phi
+V\left( \Phi ^{2}\right) \right) ,  \tag{2.10b}  \label{216a}
\end{equation}%
with
\begin{equation}
\Phi ^{2}=\Phi .\Phi ,  \tag{2.10c}  \label{216b}
\end{equation}%
$V\ $is a $C^{\infty }$ real valued function defined on
$\mathbb{R}$ (often called the self interaction potential).
$J^{k}$\ is the Yang-Mills current
defined\ by%
\begin{equation}
J^{k}\left( A,\Phi ,D\Phi \right) =\left[ \Phi ,\widehat{\nabla }^{k}\Phi %
\right] .  \tag{2.10d}  \label{216d}
\end{equation}%
$\widehat{\nabla }$ is the gauge covariant derivative or the
Yang-Mills operator; it acts on $\Phi $ and $F^{ij}$ as follows

\begin{equation}
\widehat{\nabla }_{i}\Phi =\nabla _{i}\Phi +\left[ A_{i},\Phi
\right] ,\quad \widehat{\nabla }_{i}F^{ij}=\nabla
_{i}F^{ij}+\left[ A_{i},F^{ij}\right] . \tag{2.11}  \label{219}
\end{equation}%
$H\left( \Phi \right) $ is the Higgs potential; it is a $C^{\infty }$\ $%
\mathcal{G}$-valued function given\ by (see \cite{8}):%
\begin{equation}
H^{I}\left( \Phi \right) =V^{\prime }\left( \Phi ^{2}\right) \Phi
^{I}, \tag{2.12}  \label{220}
\end{equation}%
where $V^{\prime }$ is the derivative of $V.$

\begin{remark}
\textbf{\ }$\left( i\right) $ \textit{Due to the following
consequences of
Bianchi identities }%
\begin{equation*}
\nabla ^{i}\left( R_{ij}-\frac{1}{2}Rg_{ij}\right) =0,\quad \widehat{\nabla }%
_{i}\widehat{\nabla }_{j}F^{ij}=0,
\end{equation*}%
\textit{it is easy to see that: if the EYMH system }$\left(
\ref{215}\right)
$\textit{\ is satisfied, then the stress-energy tensor }$\rho _{ij}$\textit{%
\ and the current }$J^{a}$\textit{\ satisfy the following conservation laws}%
\begin{equation}
\nabla _{i}\rho ^{ij}=0,\quad \widehat{\nabla }_{i}J^{i}=0.
\tag{2.13} \label{221}
\end{equation}%
$\left( ii\right) $\textit{\ Due to the expression }$\left( \ref{216d}%
\right) $\textit{\ of} $J^{a},$\textit{\ the Higgs potential
}$H$\textit{\
must satisfy the following algebraic structural condition}%
\begin{equation}
\left[ H\left( \Phi \right) ,\Phi \right] =0.  \tag{2.14}
\end{equation}%
\textit{The relation }$\left( 2.14\right) $\textit{\ is fulfilled by }$%
H\left( \Phi \right) $\textit{\ given by }$\left( \ref{220}\right)
.$

$\left( iii\right) $ \textit{If the YMH system\ is satisfied then
a direct
calculation shows that the conservation laws }$\left( \ref{221}\right) $%
\textit{\ are fulfilled by the stress-energy tensor given by }$\left( \ref%
{216}\right) $\textit{\ and the current given by }$\left(
\ref{216d}\right)
, $ \textit{for }$H\left( \Phi \right) $\textit{\ given by }$\left( \ref{220}%
\right) $. It results that\textit{\ the EYMH system is
coherent}$.$

$\left( iv\right) $ \textit{The proof of} $\left( iii\right) $
\textit{is provided in Appendix A}.
\end{remark}

\subsection{\textbf{The reduced EYMH system}}

It is a well known fact that system $\left( \ref{215}\right) $ is
not an evolution system as it stands. In order to reduce it to an
evolution system, one needs to impose to the unknown functions
(the components of the unknown metric and those of the unknown
Yang-Mills potential) some supplementary conditions called gauge
conditions or to choose a special or preferred system of
coordinates. In the present paper we will use the Lorentz gauge
condition and the harmonic coordinates which were historically the
first special coordinates (e.g., in 1952, Y.\ Choquet-Bruhat
\cite{7} used these special coordinates to prove the local
well-posedness of the vacuum Einstein equations).

\begin{definition}
Let $\left( x^{i}\right) _{i=1,...,4}$ be local coordinates on a
$4-d$ manifold $\mathcal{M}$ endowed with a Lorentzian metric $g.$
$\left( x^{i}\right) _{i=1,...,4}$ are called harmonic coordinates
if they satisfy the following equation
\begin{equation}
\square _{g}x^{i}=0,\text{ }i=1,2,3,4,  \tag{2.15}  \label{2}
\end{equation}%
where $\square _{g}=\nabla _{k}\nabla ^{k}$ is the geometric wave operator, $%
\nabla $ representing the covariant derivative relative to the
metric $g.$
\end{definition}

Let $\left( x^{i}\right) _{i=1,...,4}$ be local coordinates on a
$4-d$ manifold $\mathcal{M}$ equipped with a Lorentzian metric
$g$. Recall the definition of the Christoffel symbols
\begin{equation}
\Gamma _{ij}^{k}=\frac{1}{2}g^{km}\left(
g_{mj,i}+g_{mi,j}-g_{ij,m}\right) , \tag{2.16}  \label{3}
\end{equation}%
and set
\begin{equation}
\Gamma ^{k}=g^{ij}\Gamma _{ij}^{k},  \tag{2.17}  \label{4}
\end{equation}%
where $g^{ij}$ denotes the inverse of $g_{ij}$ i.e.,
\begin{equation*}
g^{ij}g_{ik}=\delta _{k}^{j}=\left\{
\begin{array}{c}
1\text{ if }j=k, \\
0\text{ if }j\neq k.%
\end{array}%
\right.
\end{equation*}%
It is easy to see by a simple calculation that the local
coordinates $\left( x^{i}\right) _{i=1,...,4}$ are harmonic if and
only if the metric $g$
satisfy the following so-called harmonic gauge condition%
\begin{equation}
\Gamma ^{k}=0,\text{ }\forall k=1,...,4.  \tag{2.18}  \label{5}
\end{equation}%
The following different equivalent forms of the harmonic gauge condition $%
\left( \ref{5}\right) $ will be used judiciously:%
\begin{equation}
\Gamma ^{k}=0\Leftrightarrow g^{lm}g_{nm,l}=\frac{1}{2}g^{lm}g_{lm,n}%
\Leftrightarrow g_{,l}^{lk}=-\frac{1}{2}g^{kn}g^{lm}g_{lm,n}=\frac{1}{2}%
g^{kn}g_{lm}g_{,n}^{lm}.  \tag{2.19}  \label{6}
\end{equation}%
We now define the Lorentz gauge condition.

\begin{definition}
Relative to the coordinates $\left( x^{i}\right) _{i=1,...,4},$
the
Yang-Mills potential $A$ satisfies the Lorentz gauge condition if%
\begin{equation}
\Delta \equiv \nabla _{i}A^{i}=0.  \tag{2.20}
\end{equation}
\end{definition}

Let us now consider the complete Einstein-Yang-Mills-Higgs equations $\left( %
\ref{215}\right) $ in arbitrary local coordinates $\left(
x^{i}\right) _{i=1,...,4}$. The following proposition provides the
reduction of the EYMH equations, with unknowns $\left(
g_{ij},A_{p},\Phi \right) ,$ modulo the harmonic gauge and the
Lorentz gauge conditions.

\begin{proposition}
Let $\left( g_{ij},A_{p},\Phi \right) $ be such that the complete
EYMH equations $\left( \ref{215}\right) $ are satisfied together
with the harmonic gauge condition $\left( \ref{5}\right) $ and the
Lorentz gauge condition $\left( 2.20\right) $. Then $\left(
g_{ij},A_{p},\Phi \right) $
solves the following system of reduced EYMH equations:%
\begin{equation}
\begin{array}{l}
\widetilde{R}_{ij}=\tau _{ij}\left( g,A,\Phi ,Dg,DA,D\Phi \right) , \\
LA_{p}=J_{p}\left( A,\Phi ,D\Phi \right) , \\
S\Phi =H\left( \Phi \right) ,%
\end{array}
\tag{2.21}
\end{equation}%
where%
\begin{equation}
\begin{array}{l}
\widetilde{R}_{ij}\equiv R_{ij}-\frac{1}{2}\left( g_{ki}\Gamma
_{,j}^{k}+g_{kj}\Gamma _{,i}^{k}\right) \\
\text{ \ \ \ \ \ \ }=-\frac{1}{2}g^{km}g_{ij,mk}+Q_{ij}(g,Dg), \\
\tau _{ij}=F_{ik}F_{j}^{\quad k}-\frac{1}{4}g_{ij}F_{kl}F^{kl}+\widehat{%
\nabla }_{i}\Phi .\widehat{\nabla }_{j}\Phi
+\frac{1}{2}g_{ij}V\left( \Phi
^{2}\right) , \\
LA_{p}\equiv g_{jp}\widehat{\nabla }_{i}F^{ij}+\left( \Delta
_{,p}+\Gamma
_{,p}^{l}A_{l}+\Gamma ^{l}A_{l,p}\right) \\
\text{ \ \ \ \ \ \
}=g^{ik}A_{p,ik}+g_{,p}^{ki}A_{k,i}+g^{ik}\left[
A_{k},A_{p}\right] _{,i} \\
\text{ \ \ \ \ \ \ }+g_{jp}\left( g^{ik}g^{jl}\right) _{,i}\left[
A_{l,k}-A_{k,l}+\left[ A_{k},A_{l}\right] \right] \\
\text{ \ \ \ \ \ \ }+g_{jp}\Gamma _{im}^{i}F^{mj}+g_{jp}\Gamma
_{im}^{j}F^{im}+g_{jp}\left[ A_{i},F^{ij}\right] , \\
S\Phi \equiv \widehat{\nabla }_{i}\widehat{\nabla }^{i}\Phi
+\Gamma ^{l}\Phi
_{,l}-\left[ \Delta ,\Phi \right] \\
\text{ \ \ \ \ \ \ }=g^{ij}\Phi _{,ij}+2\left[ A_{i},\nabla
^{i}\Phi \right]
+\left[ A_{i},\left[ A^{i},\Phi \right] \right] .%
\end{array}
\tag{2.22}
\end{equation}%
Here $Q_{ij}$\ is a rational function of its arguments depending
quadratically on $Dg$, given\ by%
\begin{equation}
\begin{array}{l}
Q_{ij}\left( g,Dg\right) =\frac{1}{2}\left(
g_{ki,j}+g_{kj,i}\right) \Gamma ^{k}+\frac{1}{2}g^{km}g^{nl}\left(
g_{nk,j}g_{im,l}+g_{nk,i}g_{jm,l}\right)
\\
\text{ \ \ \ \ \ \ \ \ \ \ \ \ \ \ \ }-\frac{1}{4}%
g^{km}g^{nl}g_{kn,i}g_{lm,j}-\frac{1}{2}g^{km}g^{nl}g_{mn,k}\left(
g_{lj,i}+g_{li,j}-g_{ij,l}\right) \\
\text{ \ \ \ \ \ \ \ \ \ \ \ \ \ \
}+\frac{1}{4}g^{km}g^{nl}g_{km,l}\left(
g_{in,j}+g_{jn,i}-g_{ij,n}\right)
-\frac{1}{2}g^{km}g^{nl}g_{ki,n}\left(
g_{lj,m}-g_{mj,l}\right) .%
\end{array}
\tag{2.23}
\end{equation}
\end{proposition}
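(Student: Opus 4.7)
The proof naturally decomposes into three independent pieces, one for each reduced equation. The unifying observation is that the reduced operators $\widetilde{R}_{ij}$, $LA_p$ and $S\Phi$ are obtained from the corresponding ``complete'' expressions $R_{ij}$, $g_{jp}\widehat{\nabla}_iF^{ij}$, $\widehat{\nabla}_i\widehat{\nabla}^i\Phi$ by \emph{adding} a combination of $\Gamma^k$, $\Gamma^k_{,j}$ and $\Delta$ (together with terms obtained by commuting the Lorentz-gauge expression with $\Phi$ or multiplying it by $A$). Since by assumption $\Gamma^k=0$ and $\Delta=0$ throughout, each of these added contributions vanishes identically, so it suffices to (a) bring the complete equation into the requested shape and (b) expand the resulting principal part in local coordinates.

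For the first line of $(2.21)$ I would first extract the Ricci tensor from the Einstein equation. Taking the trace of $R_{ij}-\tfrac12 Rg_{ij}=\rho_{ij}$ gives $R=-g^{ij}\rho_{ij}$, and substituting back one obtains $R_{ij}=\rho_{ij}-\tfrac12 g_{ij}g^{kl}\rho_{kl}$. A direct (and pleasant) computation on $(2.10a)$--$(2.10b)$ shows that the $F_{kl}.F^{kl}$ contributions cancel, that the $(\widehat{\nabla}\Phi)^{2}$ terms recombine, and that the $V(\Phi^{2})$ term changes sign, yielding exactly $\tau_{ij}$. At this stage, because the harmonic gauge $\Gamma^{k}=0$ trivially implies $\Gamma^{k}_{,j}=0$ on the whole domain, one has $\widetilde{R}_{ij}=R_{ij}$, so the equation $\widetilde{R}_{ij}=\tau_{ij}$ is established. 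The second identity $\widetilde{R}_{ij}=-\tfrac12 g^{km}g_{ij,mk}+Q_{ij}(g,Dg)$ is an algebraic identity that does \emph{not} use the field equations: starting from $(2.9)$, one replaces each Christoffel symbol by its expression $(2.16)$ in the metric, isolates the single second-derivative term $-\tfrac12 g^{km}g_{ij,mk}$ coming from $\Gamma^{k}_{ij,k}$, and collects \emph{all} remaining quadratic-in-$Dg$ pieces together with the subtracted $\tfrac12(g_{ki}\Gamma^{k}_{,j}+g_{kj}\Gamma^{k}_{,i})$; the matching with $(2.23)$ is then a line-by-line verification.

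For the second and third lines of $(2.21)$ the strategy is similar but purely algebraic, with no trace manipulation required. Multiplying $\widehat{\nabla}_iF^{ij}=J^j$ by $g_{jp}$ yields $g_{jp}\widehat{\nabla}_iF^{ij}=g_{jp}J^j=J_p$, and adding the terms $\Delta_{,p}+\Gamma^{l}_{,p}A_l+\Gamma^{l}A_{l,p}$ (all zero under the two gauges) produces $LA_p$. To reach the explicit expansion one uses $(2.11)$ together with the identity $\nabla_iF^{ij}=\partial_i F^{ij}+\Gamma^{i}_{im}F^{mj}+\Gamma^{j}_{im}F^{im}$, expresses $F^{ij}=g^{ik}g^{jl}F_{kl}$ so as to produce the $(g^{ik}g^{jl})_{,i}$ factor, and writes out $\widehat{\nabla}^{i}A^{i}$ using $(2.5)$--$(2.6)$; the bracket terms arrange themselves via the Jacobi-like structural identity $(2.4)$. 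The Higgs equation is the easiest: inserting the definitions $(2.11)$ into $\widehat{\nabla}_i\widehat{\nabla}^i\Phi$ produces precisely $g^{ij}\Phi_{,ij}+2[A_i,\nabla^i\Phi]+[A_i,[A^i,\Phi]]$ plus terms proportional to $\Gamma^{l}\Phi_{,l}$ and $[\Delta,\Phi]$ which are absorbed into the definition of $S\Phi$. The main obstacle throughout is purely bookkeeping: making the Christoffel-symbol expansion in the computation of $Q_{ij}$ transparent, and keeping track of the antisymmetry of $F$ and of the $Ad$-invariance $(2.4)$ when rearranging brackets in $LA_p$, which is where the calculation is most error-prone.
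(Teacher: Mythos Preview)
Your proposal is correct and follows the standard route: trace-reverse the Einstein equation to obtain $R_{ij}=\tau_{ij}$, observe that the added gauge terms $\Gamma^{k}_{,j}$, $\Delta_{,p}$, $\Gamma^{l}A_{l,p}$, $\Gamma^{l}\Phi_{,l}$, $[\Delta,\Phi]$ all vanish under the hypotheses, and then verify the coordinate expansions $(2.22)$--$(2.23)$ as algebraic identities independent of the field equations. The paper itself does not give a proof here but simply refers to \cite{15}, so your sketch is in fact more explicit than what the present paper records; the argument you outline is precisely the one carried out in detail in that reference.
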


\begin{proof}
See \cite{15}.
\end{proof}

\begin{remark}
$\left( i\right) $\textit{\ Due to }$\left( 2.22\right)
,$\textit{\ any solution }$\left( g_{ij},A_{p},\Phi \right)
$\textit{\ of the reduced EYMH
system }$\left( 2.21\right) $\textit{\ that satisfies the constraints }$%
\Gamma ^{k}\equiv g^{ij}\Gamma _{ij}^{k}=0$\textit{\ and }$\Delta
\equiv
\nabla _{i}A^{i}=0$\textit{\ is also a solution of the complete EYMH system }%
$\left( \ref{215}\right) $.

$\left( ii\right) $\textit{\ For the constraints }$\Gamma
^{k}=0$\textit{\ and }$\Delta =0$\textit{\ to be satisfied
everywhere, it is enough that they
are satisfied on }$G^{1}\cup G^{2}$\textit{\ }$\left( \text{see \cite{18}}%
\right) $\textit{: one uses the Bianchi identities\ to show that\
}$\left( \Gamma ^{k},\Delta \right) $\textit{\ solves a second
order homogeneous linear system.}

$\left( iii\right) $\textit{\ The reduced EYMH system }$\left( 2.21\right) $%
\textit{\ constitutes the evolution system associated to the EYMH system }$%
\left( \ref{215}\right) $\textit{.}

$\left( iv\right) $\textit{\ The resolution of the constraints
problem consists in constructing, from arbitrary choice of some
components of the
gravitational potentials and Yang-Mills potentials (called free data) on }$%
G^{1}\cup G^{2}$\textit{, of all initial data for the reduced EYMH
such that the constraints }$\Gamma ^{k}=0$\textit{\ and }$\Delta
=0$\textit{\ are satisfied on }$G^{1}\cup G^{2}$ \textit{for the
solution of the corresponding evolution problem}.
\end{remark}

\section{The constraints problem for the characteristic EYMH system}

The goal here is to construct $C^{\infty }$ initial data for the
reduced EYMH system such that the constraints $\Gamma ^{k}=0$\ and
$\Delta =0$\ are satisfied on $G^{1}\cup G^{2}$ for the solution
of the corresponding evolution problem. The problem is addressed
in three main steps through a judicious adaptation of the
hierarchical method set up by\ Rendall \cite{24} to construct, for
the Einstein equations in vacuum and with perfect fluid source,
$C^{\infty }$ data satisfying the harmonic gauge conditions
$\Gamma
^{k}=0$ on $G^{1}\cup G^{2}$. The construction of the data is done fully on $%
G^{1}$ and it will be clear that data on $G^{2}$ are constructed
in quite a similar way. The novelty here is that the data are
constructed for the EYMH model whereas those of \cite{24} were
constructed either for the vacuum Einstein or Einstein-perfect
fluid models. Moreover all calculations, though very tedious and
lengthy, are performed in details. This work constitute an
important step towards the global resolution, by energy methods,
of the Goursat problem associated to the EYMH equations in spaces
of functions of finite differentiability order. The construction
will be made in a standard harmonic coordinates system. The
existence of such standard harmonic coordinates system has been
established by A.\ D.\ Rendall \cite{24}.

\begin{definition}
\textit{Let\ }$M$\textit{\ be a }$4$\textit{-dimensional manifold
endowed
with a lorentzian metric }$\mu ,$\textit{\ }$N^{1}$\textit{\ and }$N^{2}$%
\textit{\ two intersecting null hypersurfaces, }$S=N^{1}\cap
N^{2}$\textit{. Consider a local coordinates system }$\left(
x^{i}\right) $\textit{\ in a neighborhood of }$N^{1}\cup
N^{2}$\textit{. }$\left( x^{i}\right) $\textit{\ is a standard
harmonic system w.r.t. }$\mu ,$\textit{\ }$N^{1}$\textit{\ and
}$N^{2}$\textit{\ if the following conditions are satisfied:}

$\left( i\right) $\textit{\ }$\left( x^{i}\right) $\textit{\ is a
harmonic system w.r.t. }$\mu $ i.e.$,$ $\mu ^{ij}\overline{\Gamma
}_{ij}^{k}=0$ \textit{for all }$k,$\textit{\ where
}$\overline{\Gamma }_{ij}^{k}$\textit{\ are the Christoffel
symbols relative to the metric }$\mu $ in local coordinates
$\left( x^{i}\right) .$

$\left( ii\right) $\textit{\ }$N^{1}$\textit{\ and
}$N^{2}$\textit{\ are
locally defined by }$x^{1}=0$\textit{\ and }$x^{2}=0$ \textit{respectively}$%
, $

$\left( iii\right) $\textit{\ }$x^{1}$\textit{\ is an affine
parameter along the null geodesics that generate }$N^{2},$

$\left( iv\right) $\textit{\ }$x^{2}$\textit{\ is an affine
parameter along the null geodesics that generate }$N^{1},$

$\left( v\right) $\textit{\ }$x^{3}$\textit{\ and
}$x^{4}$\textit{\ are
constant along the null geodesics that generate }$N^{1}$\textit{\ or }$%
N^{2}. $
\end{definition}

\textbf{Consequence of the above definition (see \cite{24, 25})}

If $\left( x^{i}\right) $ is a standard harmonic system w.r.t. $\gamma ,$ $%
N^{1}$ and $N^{2},$ then the following relations hold:

On $N^{1},$%
\begin{equation}
\gamma _{2i}=0\text{ for }i\neq 1,\quad \gamma _{22,1}=2\gamma
_{12,2}. \tag{3.1}
\end{equation}

On $N^{2}$,
\begin{equation}
\gamma _{1i}=0\text{ for }i\neq 2,\quad \gamma _{11,2}=2\gamma
_{12,1}. \tag{3.2}
\end{equation}

For the sake of completeness of the paper we recall the
implementation of the method of A.\ D.\ Rendall \cite{24} to
construct $C^{\infty }$ initial data on $G_{T}^{1}$ for the
characteristic EYMH system as in \cite{15}. In the course of doing
this, we provide proofs of some important statements that were
missing in \cite{15}. Those proofs constitute the main
contribution of the present work, in comparison with paragraph 7.4 of \cite%
{15}. We assume the following conditions for the free data%
\begin{equation}
\begin{array}{l}
g_{22}=g_{23}=g_{24}=0,\quad A_{2}=0\text{ on }G_{T}^{1}, \\
\Phi ,\text{ }A_{3}\text{ and }A_{4}\text{ are given }C^{\infty
}\text{
functions on }G_{T}^{1}.%
\end{array}
\tag{3.3}  \label{4.14}
\end{equation}

\begin{remark}
\textit{\ The conditions }$g_{22}=g_{23}=g_{24}=0$\textit{\ on }$G_{T}^{1}$%
\textit{\ are in accordance with }$\left( 3.1\right) $\textit{\
since the
general idea is to produce a space-time for which the given coordinates in }$%
\mathbb{R}^{4}$\textit{\ are standard coordinates.}
\end{remark}

\subsection{Construction of\ $\left( g_{\protect\alpha \protect\beta %
}\right) _{\protect\alpha ,\protect\beta \in \left\{ 3,4\right\} }$\ and $%
g_{12}$, arrangement of relations $\Gamma ^{1}=0$\ and
$g_{22,1}=2g_{12,2}$ on $G_{T}^{1}$}

Let $T\in (0,T_{0}]$, $\left( h_{\alpha \beta }\right) $ a matrix
function with determinant $1$ at each point of $G_{T}^{1}$. Set
$g_{\alpha \beta }=\Omega h_{\alpha \beta }$, where $\Omega >0$ is
an unknown function called
the conformity factor. From the free data given above in $\left( \ref{4.14}%
\right) $ one easily sees that the following algebraic relations hold on $%
G_{T}^{1}$

\begin{equation}
\begin{array}{l}
g_{12}g^{12}=1,\quad g^{11}=g^{1\alpha }=0, \\
g^{2\beta }g_{\alpha \beta }=-g^{12}g_{1\alpha },\quad g_{\lambda
\beta
}g^{\alpha \beta }=\delta _{\lambda }^{\alpha }.%
\end{array}
\tag{3.4}  \label{4.15}
\end{equation}%
At this level, the expression of $R_{22}$ and $\tau _{22}$ are
needed. A straightforward calculation shows that on $G_{T}^{1}$
the following equalities hold (see appendix B)

\begin{equation}
\begin{array}{l}
R_{22}=\frac{1}{4}g^{12}g^{\alpha \beta }g_{\alpha \beta ,2}\left(
2g_{12,2}-g_{22,1}\right) +\frac{1}{4}g_{,2}^{\beta \lambda
}g_{\lambda \beta ,2}-\frac{1}{2}\left( g^{\alpha \beta }g_{\alpha
\beta ,2}\right)
_{,2}, \\
\tau _{22}=\Omega ^{-1}h^{\alpha \beta }A_{\alpha ,2}.A_{\beta
,2}+\left(
\Phi _{,2}\right) ^{2}.%
\end{array}
\tag{3.5}  \label{4.16}
\end{equation}%
If\textit{\ }in addition we assume $g_{22,1}=2g_{12,2}$\ on
$G_{T}^{1}$, then $\Gamma ^{1}=0$\ is equivalent to

\begin{equation}
g_{12,2}=\frac{1}{2}g_{12}\frac{\Omega _{,2}}{\Omega }.  \tag{3.6}
\label{4.17}
\end{equation}%
The equation

\begin{equation}
\frac{1}{4}g_{,2}^{\alpha \beta }g_{\alpha \beta
,2}-\frac{1}{2}\left( g^{\alpha \beta }g_{\alpha \beta ,2}\right)
_{,2}=\tau _{22},  \tag{3.7} \label{4.18}
\end{equation}%
provides the following non linear second order ODE with the
conformity factor $\Omega $\ as unknown

\begin{equation}
-\left( \frac{\Omega _{,2}}{\Omega }\right)
^{2}+\frac{1}{2}h_{\alpha \beta ,2}h_{,2}^{\alpha \beta }-2\left(
\frac{\Omega _{,2}}{\Omega }\right) _{,2}=\Omega ^{-1}h^{\alpha
\beta }A_{\alpha ,2}.A_{\beta ,2}.  \tag{3.8} \label{4.19}
\end{equation}%
If we set $\Omega =e^{v}$, then $\left( \ref{4.19}\right) $ reads

\begin{equation}
2v_{,22}=f\left( x,v,v_{,2}\right) ,  \tag{3.9}  \label{4.20}
\end{equation}%
\medskip where
\begin{equation}
f\left( x,v,v_{,2}\right) =-\left( v_{,2}\right)
^{2}-2e^{-v}h^{\alpha \beta }A_{\alpha ,2}.A_{\beta
,2}+\frac{1}{2}h_{\alpha \beta ,2}h_{,2}^{\alpha \beta }-2\left(
\Phi _{,2}\right) ^{2}.  \tag{3.10}  \label{4.20a}
\end{equation}%
The following proposition provides the construction of the
conformity factor. Its proof follows directly from known local
existence and uniqueness results concerning non linear ODEs
(depending on parameters with $C^{\infty } $ coefficients and
initial data).

\begin{proposition}
\textit{Let }$T\in (0,T_{0}]$\textit{\ and assume the following
smoothness
condition for the free data }%
\begin{equation}
h_{33},h_{34},h_{44},A_{3},A_{4},\Phi \in C^{\infty }\left(
G_{T}^{1}\right) .  \tag{3.11}
\end{equation}%
\textit{\ Take }$v_{0}$\textit{, }$v_{1}\in C^{\infty }\left(
\Gamma \right) ,$\textit{\ where }$\Gamma \equiv G_{T}^{1}\cap
G_{T}^{2}.$\textit{\ Then there exists }$T_{1}\in (0,T]$\textit{\
such that }$\left( \ref{4.20}\right) $\textit{\ has a unique
solution }$v\in C^{\infty }\left(
G_{T_{1}}^{1}\right) $\textit{\ satisfying }$v=v_{0}$\textit{\ and }$%
v_{,2}=v_{1}$\textit{\ on }$\Gamma .$
\end{proposition}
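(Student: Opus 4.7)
The plan is to recast the second-order nonlinear ODE (3.9) on $G_T^1$ as a first-order system for $(v,w)$ with $w:=v_{,2}$, viewing the tangential coordinates $(x^3,x^4)$ as parameters and $x^2$ as the evolution variable along the null generators of $G_T^1$. The system reads
\begin{equation*}
v_{,2}=w,\qquad 2w_{,2}=f(x,v,w),
\end{equation*}
with $f$ given by (3.10), subject to the initial conditions $v=v_0$ and $w=v_1$ on $\Gamma=\{x^2=0\}\cap G_T^1$.

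The first step is to verify that $f$ is $C^\infty$ in all its arguments on a neighborhood of the initial set. The terms $-w^{2}$, $\tfrac12 h_{\alpha\beta,2}h^{\alpha\beta}_{,2}$ and $-2(\Phi_{,2})^{2}$ are smooth by hypothesis (3.11) together with the normalization $\det(h_{\alpha\beta})=1$, which makes $h^{\alpha\beta}$ a smooth rational function of $h_{\alpha\beta}$; the factor $e^{-v}$ is entire in $v$. Consequently $f$ is smooth in $(x,v,w)$, and depends smoothly on the parameters $(x^3,x^4)$ through the free data.

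The second step is to apply the classical Picard--Lindel\"of theorem with smooth dependence on parameters and initial data. For each fixed $(x^3,x^4)$ this yields a unique maximal $C^\infty$ solution $v(x^2,x^3,x^4)$ of (3.9) taking the prescribed data on $\Gamma$, together with joint smoothness of $v$ in $(x^2,x^3,x^4)$ on the common domain of existence. The only delicate point, which I would treat carefully, is extracting a \emph{uniform} existence time $T_1\in(0,T]$ rather than one depending on $(x^3,x^4)$: since $f$ is locally Lipschitz in $(v,w)$ with Lipschitz constants continuous in the parameters, and $v_0,v_1$ are smooth hence bounded on compact subsets of $\Gamma$, the domain of the flow is open and its intersection with $[0,T]\times\overline{\Gamma}$ contains a strip $[0,T_1]\times\overline{\Gamma}$ by continuity of the maximal existence time. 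This produces the required $v\in C^\infty(G_{T_1}^1)$ and uniqueness is immediate; no further obstacles arise, which is why the proposition is stated as a direct consequence of standard ODE theory.
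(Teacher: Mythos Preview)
Your proof is correct and follows exactly the approach the paper indicates: the paper simply states that the result ``follows directly from known local existence and uniqueness results concerning non linear ODEs (depending on parameters with $C^{\infty}$ coefficients and initial data),'' and you have spelled out precisely what that entails, including the compactness argument needed for a uniform $T_1$. There is nothing to add.
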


As the conformity factor is already known, we now consider the
following first order linear ODE with unknown $g_{12}$

\begin{equation}
g_{12,2}=\frac{1}{2}g_{12}v_{,2}.  \tag{3.12}  \label{4.22}
\end{equation}%
\medskip The following proposition provides the construction of $g_{12}$.
Its proof follows straightforwardly from known global existence
and
uniqueness results concerning linear ODEs (depending on parameters with $%
C^{\infty }$ coefficients and initial data).

\begin{proposition}
\textit{Let }$w_{0}\in C^{\infty }\left( \Gamma \right) .$ \textit{Then }$%
\left( \ref{4.22}\right) $\textit{\ has a unique solution
}$g_{12}\in
C^{\infty }\left( G_{T_{1}}^{1}\right) $\textit{\ satisfying }$g_{12}=w_{0}$%
\textit{\ on }$\Gamma .$
\end{proposition}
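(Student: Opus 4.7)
The plan is to solve the linear homogeneous ODE \eqref{4.22} explicitly by quadrature and then read off existence, smoothness, and uniqueness from the closed-form expression. Since the previous proposition gives $v \in C^{\infty}(G_{T_1}^1)$, the coefficient $\tfrac{1}{2}v_{,2}$ is smooth on $G_{T_1}^1$. Recalling that $G_{T_1}^1$ is parameterized by $(x^2,x^3,x^4)$ with $x^1=0$, and $\Gamma$ by $(x^3,x^4)$, the equation $g_{12,2}=\tfrac{1}{2}g_{12}v_{,2}$ is, for each fixed $(x^3,x^4)$, a linear scalar first-order ODE in $x^2\in[0,T_1]$.

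The explicit candidate is
\begin{equation*}
g_{12}(0,x^2,x^3,x^4)\;:=\;w_0(x^3,x^4)\,\exp\!\left(\tfrac{1}{2}\int_0^{x^2} v_{,2}(0,s,x^3,x^4)\,ds\right).
\end{equation*}
Differentiation under the integral sign immediately gives $g_{12,2}=\tfrac{1}{2}g_{12}v_{,2}$, and evaluating at $x^2=0$ returns $g_{12}|_{\Gamma}=w_0$, so existence and the prescribed trace on $\Gamma$ are built into the formula. For the $C^{\infty}$ regularity, the integrand $v_{,2}$ is smooth jointly in $(s,x^3,x^4)$; hence the indefinite integral is smooth in $(x^2,x^3,x^4)$, and since $w_0$ and the exponential are smooth, so is $g_{12}$.

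For uniqueness, suppose $g_{12}$ and $\widetilde{g}_{12}$ both satisfy \eqref{4.22} with the same trace $w_0$ on $\Gamma$. The difference $\delta:=g_{12}-\widetilde{g}_{12}$ solves the same linear homogeneous ODE with $\delta|_{\Gamma}=0$; for each fixed $(x^3,x^4)$, Gr\"onwall's inequality (or equivalently, the fact that the integrating factor $\exp(-\tfrac{1}{2}\int_0^{x^2}v_{,2}\,ds)$ makes $\delta$ constant in $x^2$) forces $\delta\equiv 0$ on $G_{T_1}^1$.

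There is essentially no obstacle here: the linearity of \eqref{4.22} rules out any finite-time blow-up, so the solution automatically extends to the full domain $G_{T_1}^1$ on which the coefficient $v_{,2}$ is defined, and no further shrinking of $T_1$ is required. The only thing worth emphasizing is the smooth dependence on the transverse parameters $(x^3,x^4)$, which is a standard consequence of the smoothness of $v$ and of $w_0$ combined with the explicit quadrature formula above.
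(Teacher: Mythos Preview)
Your proof is correct and follows the same line as the paper, which simply states that the result ``follows straightforwardly from known global existence and uniqueness results concerning linear ODEs (depending on parameters with $C^{\infty}$ coefficients and initial data).'' You have merely made this invocation explicit by writing down the quadrature formula and checking regularity and uniqueness by hand.
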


The condition $g_{22,1}-2g_{12,2}=0$ on $G_{T_{1}}^{1}$\ is now
arranged in the following proposition.\textbf{\ }

\begin{proposition}
On $G_{T_{1}}^{1},$\ the reduced equation $\widetilde{R}_{22}\
=\tau _{22}$\
is equivalent to the following homogenous ODE with unknown $%
g_{22,1}-2g_{12,2}$%
\begin{equation}
\left( g^{12}\right) ^{2}g_{12,2}\left( g_{22,1}-2g_{12,2}\right)
-g^{12}\left( g_{22,1}-2g_{12,2}\right) _{,2}=0.  \tag{3.13}
\label{4.23}
\end{equation}%
\ Assume $g_{22,1}=2g_{12,2}$\ on $\Gamma $.\ Then
$g_{22,1}-2g_{12,2}=0$\ on $G_{T_{1}}^{1}$ and so $\Gamma ^{1}=0$\
on $G_{T_{1}}^{1}$.
\end{proposition}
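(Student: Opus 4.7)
The plan is to rewrite the reduced equation $\widetilde{R}_{22} = \tau_{22}$ on $G_{T_1}^{1}$ as a linear, first-order, homogeneous ODE in $x^2$ with unknown $\phi := g_{22,1} - 2 g_{12,2}$, and then to invoke ODE uniqueness together with the initial condition $\phi|_{\Gamma} = 0$. First I would use the free data $g_{22} = g_{23} = g_{24} = 0$ on $G_{T_1}^{1}$ to identify $\widetilde{R}_{22} = R_{22} - g_{12}\Gamma^1_{,2}$, then substitute the expression $(3.5)$ for $R_{22}$ and peel off the $\phi$-dependent piece of $(2g_{12,2} - g_{22,1})$ to get
\begin{equation*}
R_{22} = -\tfrac14 g^{12}g^{\alpha\beta}g_{\alpha\beta,2}\,\phi + \Bigl[\tfrac14 g_{,2}^{\alpha\beta}g_{\alpha\beta,2} - \tfrac12 (g^{\alpha\beta}g_{\alpha\beta,2})_{,2}\Bigr].
\end{equation*}
Since the conformity factor $\Omega = e^v$ was constructed in Proposition 3.2 precisely so that $(3.7)$ holds on $G_{T_1}^{1}$, the bracketed piece is exactly $\tau_{22}$, and the equation $\widetilde{R}_{22} = \tau_{22}$ is equivalent, on $G_{T_1}^{1}$, to $-\tfrac14 g^{12}g^{\alpha\beta}g_{\alpha\beta,2}\,\phi = g_{12}\,\Gamma^1_{,2}$.

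The decisive step is to compute $\Gamma^1$ on $G_{T_1}^{1}$ in closed form. Starting from $\Gamma^k = g^{ij}g^{km}g_{mj,i} - \tfrac12 g^{ij}g^{km}g_{ij,m}$, the relation $g^{1m}=0$ for $m\neq 2$ from $(3.4)$ kills every term with $m\neq 2$, and the vanishing of $g_{22}, g_{2\alpha}$ on $G_{T_1}^{1}$ makes their tangential derivatives $g_{22,2}, g_{22,\alpha}, g_{2\alpha,2}, g_{2\alpha,\beta}$ all vanish. Collecting what survives, I would arrive at
\begin{equation*}
\Gamma^1 = (g^{12})^2 g_{22,1} - \tfrac12 g^{12} g^{\alpha\beta}g_{\alpha\beta,2} \quad \text{on } G_{T_1}^{1}.
\end{equation*}
I would then invoke two consequences of the previous constructions: from $g_{\alpha\beta} = e^v h_{\alpha\beta}$ with $\det(h_{\alpha\beta})=1$ one has $g^{\alpha\beta}g_{\alpha\beta,2} = 2 v_{,2}$, and from $(3.12)$ one has $g_{12,2} = \tfrac12 g_{12} v_{,2}$. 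Writing $g_{22,1} = \phi + 2 g_{12,2}$ and substituting, the two $v_{,2}$-terms cancel and leave the clean identity $\Gamma^1 = (g^{12})^2 \phi$ on $G_{T_1}^{1}$.

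Finally I would differentiate this identity in $x^2$ (using $g^{12}_{,2} = -(g^{12})^2 g_{12,2}$), insert $\Gamma^1_{,2}$ into the reduced equation derived above, and once more use $g^{\alpha\beta}g_{\alpha\beta,2} = 2v_{,2}$ and $g_{12,2} = \tfrac12 g_{12} v_{,2}$ to cancel common terms; one lands exactly on $(3.13)$, with every manipulation reversible so that the equivalence with $\widetilde{R}_{22} = \tau_{22}$ is established. The ODE $(3.13)$ is linear, first-order, homogeneous in $x^2$ with $C^\infty$ coefficients and with leading coefficient $-g^{12}$, which never vanishes, so standard uniqueness for linear ODEs propagates $\phi|_{\Gamma} = 0$ to $\phi \equiv 0$ on $G_{T_1}^{1}$, whence $\Gamma^1 = (g^{12})^2\phi \equiv 0$.

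The main technical obstacle is the careful accounting in extracting $\Gamma^1$ on $G_{T_1}^{1}$: one must recognize exactly which tangential derivatives of $g$ are killed by the free data, which inverse-metric components vanish via $(3.4)$, and then deploy the construction relations $(3.8)$ (for $\Omega$) and $(3.12)$ (for $g_{12}$) in just the right places so that the a priori distinct terms containing $g_{12,2}$ and $g^{\alpha\beta}g_{\alpha\beta,2}$ collapse into the single quantity $(g^{12})^2\phi$. Once this collapse is in hand the rest is bookkeeping plus a one-line invocation of ODE uniqueness.
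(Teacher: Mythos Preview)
Your proposal is correct and follows essentially the same route as the paper's proof: both compute $\Gamma^1=(g^{12})^2(g_{22,1}-2g_{12,2})$ on $G_{T_1}^1$ via the identity $g^{\alpha\beta}g_{\alpha\beta,2}=4g^{12}g_{12,2}$ (which is just your $g^{\alpha\beta}g_{\alpha\beta,2}=2v_{,2}$ combined with $(3.12)$), differentiate in $x^2$, use $(3.5)$ together with $(3.7)$ to reduce $\widetilde{R}_{22}=\tau_{22}$ to $(3.13)$, and conclude by uniqueness for linear homogeneous ODEs. The only cosmetic difference is that you phrase the key cancellation through $v_{,2}$ while the paper writes it directly as $g^{\alpha\beta}g_{\alpha\beta,2}=4g^{12}g_{12,2}$.
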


\begin{proof}
In view of $\left( \ref{4.22}\right) $ and $\left(
\ref{4.15}\right) ,$\ it
holds that $g^{\alpha \beta }g_{\alpha \beta ,2}=4g^{12}g_{12,2}$ on $%
G_{T_{1}}^{1}$. Thus on $G_{T_{1}}^{1}$ it holds that%
\begin{eqnarray*}
2\Gamma ^{1} &=&g^{ij}g^{1k}\left( 2g_{ki,j}-g_{ij,k}\right)
=g^{ij}g^{12}\left( 2g_{2i,j}-g_{ij,2}\right) \\
&=&g^{12}\left[ g^{12}\left( g_{12,2}\right) +g^{21}\left(
2g_{22,1}-g_{21,2}\right) +g^{\alpha \beta }\left( -g_{\alpha
\beta
,2}\right) \right] \\
&=&g^{12}\left( 2g^{12}g_{22,1}-g^{\alpha \beta }g_{\alpha \beta
,2}\right)
\\
&=&g^{12}\left( 2g^{12}g_{22,1}-4g^{12}g_{12,2}\right) \\
&=&2\left( g^{12}\right) ^{2}\left( g_{22,1}-2g_{12,2}\right) .
\end{eqnarray*}%
Thus%
\begin{eqnarray*}
\Gamma _{,2}^{1} &=&\left[ \left( g^{12}\right) ^{2}\left(
g_{22,1}-2g_{12,2}\right) \right] _{,2} \\
&=&2g^{12}g_{,2}^{12}\left( g_{22,1}-2g_{12,2}\right) +\left(
g^{12}\right) ^{2}\left( g_{22,1}-2g_{12,2}\right) _{,2}.
\end{eqnarray*}%
A simple calculation shows that $g_{,2}^{12}=-\left( g^{12}\right)
^{2}g_{12,2}$ on $G_{T_{1}}^{1},$ since $g^{12}g_{12}=1$ on
$G_{T_{1}}^{1}$. Hence
\begin{equation*}
\Gamma _{,2}^{1}=-2\left( g^{12}\right) ^{3}g_{12,2}\left(
g_{22,1}-2g_{12,2}\right) +\left( g^{12}\right) ^{2}\left(
g_{22,1}-2g_{12,2}\right) _{,2}.
\end{equation*}%
It is easy to see that $g_{k2}\Gamma _{,2}^{k}=$ $g_{12}\Gamma
_{,2}^{1}$ on
$G_{T_{1}}^{1},$ since $g_{2k}=0$ for $k\neq 1.$ In view of $\left( \ref%
{4.16}\right) $ we have
\begin{equation*}
R_{22}-g_{12}\Gamma _{,2}^{1}=\left( g^{12}\right)
^{2}g_{12,2}\left( g_{22,1}-2g_{12,2}\right) -g^{12}\left(
g_{22,1}-2g_{12,2}\right) _{,2}+\tau _{22}.
\end{equation*}%
Therefore, since $\widetilde{R}_{22}\equiv
R_{22}-\frac{1}{2}\left(
g_{k2}\Gamma _{,2}^{k}+g_{k2}\Gamma _{,2}^{k}\right) =-\frac{1}{2}%
g^{km}g_{22,mk}+Q_{22}$, the reduced equation%
\begin{equation*}
\widetilde{R}_{22}=\tau _{22}
\end{equation*}%
is equivalent to%
\begin{equation*}
\left( g^{12}\right) ^{2}g_{12,2}\left( g_{22,1}-2g_{12,2}\right)
-g^{12}\left( g_{22,1}-2g_{12,2}\right) _{,2}=0\text{ on
}G_{T_{1}}^{1}.
\end{equation*}%
The result now follows.
\end{proof}

\begin{remark}
\textit{In the same way, given} \textit{a matrix function }$\left(
h_{\alpha \beta }\right) =\left(
\begin{array}{ll}
h_{33} & h_{34} \\
h_{34} & h_{44}%
\end{array}%
\right) $\textit{\ with determinant }$1$\textit{\ at each point of }$\mathit{%
G}_{T}^{2}$\textit{, set }$g_{\alpha \beta }=\Omega h_{\alpha \beta },$%
\textit{\ where }$\Omega >0$\textit{\ is an unknown function
called the
conformity factor}$.$\textit{\ Assuming}%
\begin{equation*}
\begin{array}{l}
g_{11}=g_{13}=g_{14}=0\text{ \textit{on} }G_{T}^{2},\quad
A_{1}=0\text{
\textit{on} }G_{T}^{2}, \\
\Phi ,\text{ }A_{3}\text{ \textit{and} }A_{4}\text{ \textit{are given }}%
C^{\infty }\text{ \textit{functions\ on} }G_{T}^{2},%
\end{array}%
\end{equation*}%
$g_{12}$\textit{\ and }$g_{\alpha \beta }$\textit{\ are constructed on\ }$%
G_{T_{1}}^{2}$\textit{\ and the relations }$g_{11,2}-2g_{12,1}=0$,\textit{\ }%
$\Gamma ^{2}=0$\textit{\ are\ arranged\ on }$G_{T_{1}}^{2}.$
\end{remark}

We now proceed to the construction of the data $g_{13},$\ $g_{14}$\ and\ $%
A_{1}$\ in $C^{\infty }\left( G_{T_{1}}^{1}\right) $ as\ well\ as\
the\ arrangement\ of\ the\ relations\ $\Gamma ^{\alpha }=0$\ and\
$\Delta =0$\ on $G_{T_{1}}^{1}$, $\alpha =3,4$.

\subsection{Construction of\ $g_{1\protect\alpha }$\ and\ $A_{1}$,
arrangement of relations $\Gamma ^{\protect\alpha }=0$\ and\
$\Delta =0$\ on $G_{T_{1}}^{1},$ $\protect\alpha =3,4$}

We seek for a combination of $R_{2\alpha }$, $\Gamma ^{\alpha }$,
$\Gamma _{,2}^{\alpha },$ $LA_{2}$, $\Delta $ and $\Delta _{,2}$\
that will provide
a system of ODEs on $G_{T_{1}}^{1}$ with unknowns $g_{1\alpha }$ and $A_{1}$%
. It\ is\ at\ this\ moment\ that\ the\ assumption $A_{2}=0$ on
$G_{T}^{1}$, which\ permits to avoid to deal with $g_{11}$\ at
this level of the construction process, is$\ $needed.\ After
performing tedious and lengthy calculations, we have the following
result:

\begin{proposition}
$\left( i\right) $ On $G_{T_{1}}^{1},$\ the following combinations hold%
\begin{equation}
\begin{array}{l}
R_{2\alpha }+\frac{1}{2}g_{\alpha \beta }\Gamma _{,2}^{\beta
}+\left( g^{12}g_{12,2}g_{\alpha \beta }+\frac{1}{2}g_{\alpha
\beta ,2}\right) \Gamma
^{\beta } \\
=g^{12}g_{1\alpha ,22}+\left( g^{12}\right) ^{2}g_{12,2}g_{1\alpha
,2}-g_{\alpha \beta ,2}g^{\beta \lambda }g^{12}g_{1\lambda ,2} \\
+\left\{ \left( g^{12}\right) ^{2}g_{12,2}g_{\alpha \beta
}g_{,2}^{\beta \lambda }+\frac{1}{2}\left[ g_{\alpha \beta
}g^{12}g_{,22}^{\beta \lambda }-g^{12}g^{\beta \lambda }g_{\alpha
\beta ,22}\right] \right\} g_{1\lambda
}+c_{\alpha },%
\end{array}
\tag{3.14}  \label{4.24}
\end{equation}%
\begin{equation}
\begin{array}{l}
LA_{2}-2\Delta _{,2}-2g^{12}g_{12,2}\Delta +2\left(
g^{12}g_{12,2}A_{\nu
}+A_{\nu ,2}\right) \Gamma ^{\nu }+2A_{\nu }\Gamma _{,2}^{\nu } \\
=-2g^{12}A_{1,22}-2\left( g^{12}\right)
^{2}g_{12,2}A_{1,2}+2g^{12}g^{\alpha
\lambda }A_{\alpha ,2}g_{1\lambda ,2}+K^{\lambda }g_{1\lambda }+A_{g},%
\end{array}
\tag{3.15}  \label{4.25}
\end{equation}%
where
\begin{equation}
\begin{array}{l}
c_{\alpha }=\frac{1}{2}\left( g^{12}\right) g_{12,2}\left[
-2g^{12}g_{12,\alpha }+g^{\mu \theta }\left( 2g_{\alpha \mu
,\theta }-g_{\mu
\theta ,\alpha }\right) \right] \\
\text{ \ \ \ \ \ }+\frac{1}{4}g_{\alpha \beta ,2}\left[ -2g^{\beta
\lambda }g^{12}g_{12,\lambda }+g^{\beta \lambda }g^{\mu \theta
}\left( 2g_{\lambda
\mu ,\theta }-g_{\mu \theta ,\lambda }\right) \right] \\
\text{ \ \ \ \ \ }+\frac{1}{2}\left( g^{\lambda \beta }g_{\alpha
\beta
,2}\right) _{,\lambda }-3\left( g^{12}g_{12,2}\right) _{,\alpha }-\frac{1}{2}%
\left( g^{12}\right) ^{2}g_{12,2}g_{12,\alpha } \\
\text{ \ \ \ \ \ }+\frac{1}{2}g^{12}\left( g_{22,1\alpha
}+g_{12,2\alpha }\right) +\frac{1}{2}g_{,2}^{\beta \lambda }\left(
g_{\lambda \beta ,\alpha
}+g_{\lambda \alpha ,\beta }\right) \\
\text{ \ \ \ \ \ }+\frac{1}{2}g_{\alpha \beta }\left[ -2g^{\beta
\lambda }g^{12}g_{12,\lambda }+g^{\beta \lambda }g^{\mu \theta
}\left( 2g_{\lambda
\mu ,\theta }-g_{\mu \theta ,\lambda }\right) \right] _{,2},%
\end{array}
\tag{3.16}  \label{4.24a}
\end{equation}%
\begin{equation}
\begin{array}{l}
K^{\lambda }=4\left( g^{12}\right) ^{2}g_{12,2}g^{\alpha \lambda
}A_{\alpha ,2}+\left[ 2g^{12}g^{\alpha \lambda }g^{\beta \mu
}g_{\mu \beta ,2}A_{\alpha
}+2g^{12}g^{\alpha \lambda }A_{\alpha ,2}\right] _{,2} \\
\text{ \ \ \ \ \ }-2\left( g^{12}g^{\alpha \lambda }A_{\alpha
}\right)
_{,2}g^{\beta \mu }g_{\mu \beta ,2}-2g^{12}g^{\alpha \lambda }A_{\alpha }%
\left[ g^{\beta \mu }g_{\mu \beta ,2}\right] _{,2}, \\
A_{g}=-2\left( g^{12}g^{\alpha \lambda }A_{\alpha }\right)
_{,2}g_{12,\lambda }-2g^{12}g^{\alpha \lambda }A_{\alpha
}g_{12,2\lambda }
\\
\text{ \ \ \ \ \ }-g^{\beta \alpha }\left( \left[ A_{\beta },A_{\alpha ,2}%
\right] -g^{12}g_{12,\beta }A_{\alpha ,2}\right) \\
\text{ \ \ \ \ \ }+\left( g^{12}g^{\alpha \lambda }A_{\alpha
}\right) _{,2}g_{12}\left[ g^{\beta \mu }\left( g_{\mu \lambda
,\beta }+g_{\lambda
\beta ,\mu }-g_{\mu \beta ,\lambda }\right) \right] \\
\text{ \ \ \ \ \ }+g^{12}g^{\alpha \lambda }A_{\alpha }\left(
g_{12}\left[ g^{\beta \mu }\left( g_{\mu \lambda ,\beta
}+g_{\lambda \beta ,\mu }-g_{\mu
\beta ,\lambda }\right) \right] \right) _{,2} \\
\text{ \ \ \ \ \ }-\left\{ g^{\alpha \beta }\left[ g^{12}g_{12,\beta }+\frac{%
1}{2}g^{\lambda \mu }\left( g_{\mu \beta ,\lambda }+g_{\lambda \mu
,\beta }-g_{\beta \lambda ,\mu }\right) \right] +g_{,\beta
}^{\alpha \beta
}\right\} A_{\alpha ,2} \\
\text{ \ \ \ \ \ }+\left( 2g^{12}g^{\alpha \lambda }g_{12,\lambda
}A_{\alpha }-\left[ g^{\alpha \delta }g^{\beta \mu }\left( g_{\mu
\delta ,\beta }+g_{\delta \beta ,\mu }-g_{\mu \beta ,\delta
}\right) \right] A_{\alpha
}\right) _{,2},%
\end{array}
\tag{3.17}  \label{4.25a}
\end{equation}%
and all the coefficients are known on $G_{T_{1}}^{1}.$

$\left( ii\right) $ On $G_{T_{1}}^{1}$\ the system%
\begin{equation}
\begin{array}{l}
R_{23}+\frac{1}{2}g_{3\beta }\Gamma _{,2}^{\beta }+\left(
g^{12}g_{12,2}g_{3\beta }+\frac{1}{2}g_{3\beta ,2}-A_{\beta
}.A_{3,2}\right)
\Gamma ^{\beta }+A_{3,2}.\Delta =\tau _{23}, \\
R_{24}+\frac{1}{2}g_{4\beta }\Gamma _{,2}^{\beta }+\left(
g^{12}g_{12,2}g_{4\beta }+\frac{1}{2}g_{4\beta ,2}-A_{\beta
}.A_{4,2}\right)
\Gamma ^{\beta }+A_{4,2}.\Delta =\tau _{24}, \\
LA_{2}-2\Delta _{,2}-2g^{12}g_{12,2}\Delta +2\left(
g^{12}g_{12,2}A_{\nu
}+A_{\nu ,2}\right) \Gamma ^{\nu }+2A_{\nu }\Gamma _{,2}^{\nu }=J_{2},%
\end{array}
\tag{3.18}  \label{4.26}
\end{equation}%
is equivalent to the following second order system of ODEs with unknown $%
\left( A_{1},g_{13},g_{14}\right) $%
\begin{equation}
\begin{array}{l}
g^{12}g_{13,22}+\kappa _{3}^{\lambda }g_{1\lambda ,2}+\varkappa
_{3}.A_{1,2}+\chi _{3}^{\lambda }g_{1\lambda }+\digamma _{3}=0, \\
g^{12}g_{14,22}+\kappa _{4}^{\lambda }g_{1\lambda ,2}+\varkappa
_{4}.A_{1,2}+\chi _{4}^{\lambda }g_{1\lambda }+\digamma _{4}=0, \\
-2g^{12}A_{1,22}-2\left( g^{12}\right)
^{2}g_{12,2}A_{1,2}+a^{\lambda
}g_{1\lambda }+b=0,%
\end{array}
\tag{3.19}  \label{4.27}
\end{equation}%
where all the coefficients are known on $G_{T_{1}}^{1}$\ and given as follows%
\begin{equation}
\begin{array}{l}
\kappa _{3}^{3}=\left( g^{12}\right) ^{2}g_{12,2}-g_{3\beta ,2}g^{\beta 3},%
\text{\quad }\kappa _{3}^{4}=-g_{3\beta ,2}g^{\beta 4},\text{\quad
}\kappa
_{4}^{3}=-g_{4\beta ,2}g^{\beta 3}, \\
\kappa _{4}^{4}=\left( g^{12}\right) ^{2}g_{12,2}-g_{4\beta ,2}g^{\beta 4},%
\text{\quad }\varkappa _{3}=2g^{12}A_{3,2},\text{\quad }\varkappa
_{4}=2g^{12}A_{4,2}, \\
\chi _{\alpha }^{\lambda }=\left( g^{12}\right)
^{2}g_{12,2}g_{\alpha \beta }g_{,2}^{\beta \lambda
}+\frac{1}{2}\left[ g_{\alpha \beta
}g^{12}g_{,22}^{\beta \lambda }-g^{12}g^{\beta \lambda }g_{\alpha \beta ,22}%
\right] -2g^{\nu \lambda }g^{12}A_{\alpha ,2}.A_{\nu ,2}, \\
a^{\lambda }=4\left( g^{12}\right) ^{2}g_{12,2}g^{\alpha \lambda
}A_{\alpha ,2}+\left[ 2g^{12}g^{\alpha \lambda }g^{\beta \mu
}g_{\mu \beta ,2}A_{\alpha
}+2g^{12}g^{\alpha \lambda }A_{\alpha ,2}\right] _{,2} \\
\text{ \ \ \ \ \ }-2\left( g^{12}g^{\alpha \lambda }A_{\alpha
}\right)
_{,2}g^{\beta \mu }g_{\mu \beta ,2}-2g^{12}g^{\alpha \lambda }A_{\alpha }%
\left[ g^{\beta \mu }g_{\mu \beta ,2}\right] _{,2}, \\
\digamma _{\alpha }=c_{\alpha }+g^{\beta \lambda }\left( A_{\alpha
,\lambda }-A_{\lambda ,\alpha }+\left[ A_{\lambda },A_{\alpha
}\right] \right)
.A_{\beta ,2},\text{\quad }b=A_{g}-J_{2}.%
\end{array}
\tag{3.20}  \label{4.28}
\end{equation}
\end{proposition}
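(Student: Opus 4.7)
The plan is to establish (3.14) and (3.15) by direct expansion of $R_{2\alpha}$ and $LA_2$ in the standard harmonic coordinates, exploiting the algebraic relations (3.4) together with the constraints already arranged on $G_{T_1}^1$ — namely $g_{22}=g_{23}=g_{24}=0$, $A_2=0$, $\Gamma^1=0$, $g_{22,1}=2g_{12,2}$, and the ODE (3.6) — and then to read off the system (3.19) by substituting the explicit expressions for $\tau_{2\alpha}$ and $J_2$ on the right-hand side of the reduced equations. The guiding principle is that at this stage $g_{11}$ is still unknown, so every term in which $g_{11}$ or $g_{11,2}$ would appear must be killed; the multiples of $\Gamma^\beta$, $\Gamma_{,2}^\beta$, $\Delta$, $\Delta_{,2}$ that are prepended to $R_{2\alpha}$ and $LA_2$ are exactly tailored to absorb such terms, leaving a closed second-order ODE along $\partial_2$ in the unknowns $(g_{13},g_{14},A_1)$.

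For part (i), I would first expand $R_{2\alpha}$ using (2.9) and the Christoffel formula (2.16). After simplification with $g^{11}=g^{1\alpha}=0$ on $G_{T_1}^1$, the principal second-order contribution is $g^{12}g_{1\alpha,22}$, but there is also a spurious term proportional to $g^{12}g_{11,22}$ that shows up through $\Gamma_{2\alpha,2}^2$. Differentiating the identity $\Gamma^\beta=g^{ij}\Gamma_{ij}^\beta$ along $\partial_2$ produces exactly the same $g_{11,22}$ contribution inside $g_{\alpha\beta}\Gamma_{,2}^\beta$, so the combination $R_{2\alpha}+\tfrac12 g_{\alpha\beta}\Gamma_{,2}^\beta$ eliminates it. The coefficient $g^{12}g_{12,2}g_{\alpha\beta}+\tfrac12 g_{\alpha\beta,2}$ of $\Gamma^\beta$ is then chosen to absorb the remaining first-derivative mismatches (in particular those involving $g_{11,\alpha}$ and $g_{1\alpha,1}$), using (3.6) and the identity $g_{,2}^{12}=-(g^{12})^2g_{12,2}$ to simplify; what survives is the principal part $g^{12}g_{1\alpha,22}+(g^{12})^2g_{12,2}g_{1\alpha,2}-g_{\alpha\beta,2}g^{\beta\lambda}g^{12}g_{1\lambda,2}$ plus the coefficient of $g_{1\lambda}$ and the inhomogeneous term $c_\alpha$ in (3.16). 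A strictly parallel computation for $LA_2$ expands the expression given in (2.22) with $p=2$, $A_2=0$ and $g^{1\beta}=0$; here the dangerous second derivatives of $A_1$ appearing both in $g^{ik}A_{p,ik}$ and in $\Delta_{,2}=(\nabla_iA^i)_{,2}$ are combined so as to leave $-2g^{12}A_{1,22}$ as the principal term, while the $\Gamma^\nu$ and $\Delta$ corrections remove the pieces that would otherwise couple to $g_{11}$ and $g_{11,2}$; the residual first-derivative and zero-order coefficients are collected into $K^\lambda$ and $A_g$ as in (3.17).

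For part (ii), I would substitute $\tau_{2\alpha}$ from (2.22) and $J_2=[\Phi,\widehat\nabla_2\Phi]$ from (2.10d) into the three equations of (3.18). Using $A_2=0$ and the vanishing of $g_{2\alpha}$ one finds $F_{2\alpha}.F_\alpha{}^k$ collapsing to terms involving $g^{\nu\lambda}A_{\alpha,2}.A_{\nu,2}$, and this yields the coefficient $-2g^{\nu\lambda}g^{12}A_{\alpha,2}.A_{\nu,2}$ inside $\chi_\alpha^\lambda$ as well as the contribution $g^{\beta\lambda}(A_{\alpha,\lambda}-A_{\lambda,\alpha}+[A_\lambda,A_\alpha]).A_{\beta,2}$ that enters $\digamma_\alpha$. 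The extra terms $-A_\beta.A_{\alpha,2}\,\Gamma^\beta$ and $A_{\alpha,2}.\Delta$ attached to the first two equations of (3.18) are what is needed to cancel the unwanted $A_{1,1}$- and $g_{11}$-type dependence coming from $\tau_{2\alpha}$ and to produce the coefficient $\varkappa_\alpha=2g^{12}A_{\alpha,2}$ of $A_{1,2}$; they do not obstruct the later arrangement $\Gamma^\beta=\Delta=0$ since they carry these quantities as factors. Collecting everything and dividing through by the common factor $g^{12}\neq0$ gives exactly the system (3.19) with the stated coefficients.

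The sole obstacle is computational: the expansion of $R_{2\alpha}$ and $LA_2$ generates many terms, and one must verify that after using (3.4), (3.6), and the identity $g_{,2}^{12}=-(g^{12})^2g_{12,2}$ the surviving coefficients coincide precisely with $c_\alpha$, $K^\lambda$, $A_g$, $\kappa_\alpha^\lambda$, $\chi_\alpha^\lambda$, $\varkappa_\alpha$, $\digamma_\alpha$, $a^\lambda$, $b$ as stated in (3.16), (3.17), (3.20). No new idea is required beyond the cancellation mechanism described, but the bookkeeping must be done carefully and symmetrically in $\alpha\in\{3,4\}$.
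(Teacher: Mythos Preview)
Your overall strategy matches the paper's: expand $R_{2\alpha}$ and $LA_2$ on $G^1$ using (3.4) and the constraints already arranged, then add precisely the multiples of $\Gamma^\beta$, $\Gamma_{,2}^\beta$, $\Delta$, $\Delta_{,2}$ needed to obtain a closed ODE in $(g_{13},g_{14},A_1)$. That is exactly what Appendix~C does.

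However, you have misidentified the obstruction that these correction terms are designed to remove. On $G^1$, the quantity $g_{11}$ (and hence $g^{22}$) does \emph{not} appear in $R_{2\alpha}$, in $\tau_{2\alpha}$, or in $LA_2$: every occurrence of $g^{22}$ is multiplied by a factor that vanishes because $g_{22}=g_{2\alpha}=A_2=0$ on $G^1$. So there is no ``spurious $g^{12}g_{11,22}$ through $\Gamma_{2\alpha,2}^{2}$'' to cancel, and no $g_{11,\alpha}$ or $A_{1,1}$ either. The genuinely dangerous terms are the \emph{transverse} first derivatives $g_{2\lambda,1}$, $g_{2\alpha,12}$ and $A_{2,1}$, $A_{2,12}$, which are unknown on $G^1$ because they involve values off the hypersurface. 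The paper's mechanism is: adding $\tfrac12 g_{\alpha\beta}\Gamma_{,2}^\beta$ cancels $g_{2\alpha,12}$ (compare (C.4) and (C.13)), and the key identity (C.16),
\[
g^{12}g_{2\alpha,1}=g_{\alpha\beta}\Gamma^{\beta}-g^{12}g_{1\alpha,2}-g_{\alpha\beta}g_{,2}^{\beta\lambda}g^{12}g_{1\lambda}-\tfrac12\bigl[-2g^{12}g_{12,\alpha}+g^{\mu\theta}(2g_{\alpha\mu,\theta}-g_{\mu\theta,\alpha})\bigr],
\]
expresses the remaining $g_{2\alpha,1}$ in terms of $\Gamma^\beta$ plus already-constructed quantities; this is where the $\Gamma^\beta$ coefficient in (3.14) arises. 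The analogous identity (C.51) expresses $A_{2,1}$ via $\Delta$ and $\Gamma^\alpha$, which is why $\Delta$ and the $A_\nu\Gamma^\nu$ terms enter (3.15) and (3.18). If you carry out the computation with your stated rationale you will find nothing to cancel and will be puzzled by leftover $g_{2\alpha,1}$ and $A_{2,1}$ terms; once you recognize (C.15)--(C.16) and (C.51) as the substitution rules, the bookkeeping goes through exactly as the paper records.
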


\begin{proof}
See appendix C.
\end{proof}

The proofs of the following statements that provide the construction of $%
\left( g_{13},g_{14},A_{1}\right) $ on $G_{T_{1}}^{1}$ with
$\Gamma ^{\beta }=0$\ and $\Delta =0$\ on $G_{T_{1}}^{1}$ are
direct consequences of Proposition 3.7.

\begin{proposition}
Let $a_{0},\ a_{1},\ b_{0},\ b_{1},\ c_{0},\ c_{1}\in C^{\infty
}\left( \Gamma \right) $. Then system $\left( \ref{4.27}\right) $\
has a unique solution $\left( g_{13},g_{14},A_{1}\right) $ in
$C^{\infty }\left( G_{T_{1}}^{1}\right) $\ satisfying
\begin{equation*}
\left( g_{13},\text{ }g_{14},\text{ }A_{1}\right) =\left( a_{0},\text{ }%
b_{0},\text{ }c_{0}\right) \text{ on }\Gamma ,
\end{equation*}%
\ and
\begin{equation*}
\left( g_{13,2},\text{ }g_{14,2},\text{ }A_{1,2}\right) =\left(
a_{1},\text{ }b_{1},\text{ }c_{1}\right) \text{ on }\Gamma .
\end{equation*}%
\
\end{proposition}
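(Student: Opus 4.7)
The plan is to use Proposition 3.7 to recast the needed equations on $G_{T_1}^1$ as the second-order linear ODE system $(3.19)$, put this system into normal form, and then invoke the classical global existence and smooth-dependence theorem for linear ODEs with $C^\infty$ coefficients, exactly as was done in the preceding propositions for $\Omega$ and $g_{12}$.

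The first step is to check that $g^{12}$ is a smooth, nowhere vanishing coefficient on $G_{T_1}^1$, so that one may legitimately divide each equation of $(3.19)$ by its leading factor $g^{12}$ (respectively $-2g^{12}$). This is automatic: $g_{12}$ was produced in the preceding proposition as the solution of the linear homogeneous ODE $(3.12)$, whence
\begin{equation*}
g_{12}(x^2,x^3,x^4) \;=\; w_0(x^3,x^4)\,\exp\!\left(\tfrac{1}{2}\int_0^{x^2} v_{,2}(s,x^3,x^4)\,ds\right),
\end{equation*}
which is non-vanishing provided the initial datum $w_0\in C^\infty(\Gamma)$ is chosen non-vanishing, a condition already built into the standard harmonic null setup. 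After this division, $(3.19)$ becomes a system
\begin{equation*}
U_{,22} \;=\; M\,U_{,2} + N\,U + P,\qquad U=(g_{13},g_{14},A_1)^{T},
\end{equation*}
whose matrix coefficients $M$, $N$ and source $P$ belong to $C^\infty(G_{T_1}^1)$, by the explicit expressions $(3.20)$ together with the smoothness of $\Omega$, $g_{12}$ and the free data $A_3$, $A_4$, $\Phi$.

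It then remains to apply classical ODE theory. Freezing the parameters $(x^3,x^4)$ reduces the problem to a linear second-order system in the single variable $x^2$ with $C^\infty$ coefficients, for which linearity yields global existence and uniqueness of the solution on the entire null generator through the point of $\Gamma$; this exhausts $G_{T_1}^1$. The prescribed Cauchy data $(a_0,b_0,c_0)$ and $(a_1,b_1,c_1)$ in $C^\infty(\Gamma)$ select that unique solution, and the standard theorem on smooth dependence of solutions of linear ODEs on parameters and initial data then promotes the solution to a triple $(g_{13},g_{14},A_1)\in C^\infty(G_{T_1}^1)$.

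No genuine obstacle is expected: the statement is essentially a corollary of the equivalence established in Proposition 3.7 combined with the textbook theory of linear ODE systems with smooth coefficients and smooth parameter dependence. The only verification worth making explicit is the non-vanishing of $g^{12}$ noted above, which is inherited from the construction of $g_{12}$ and is what guarantees that the reduction of $(3.19)$ to normal form is legitimate on all of $G_{T_1}^1$.
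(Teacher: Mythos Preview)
Your proposal is correct and matches the paper's own treatment: the paper simply states that Proposition~3.8 is a direct consequence of Proposition~3.7, implicitly invoking the standard global existence, uniqueness, and smooth-parameter-dependence theory for linear ODE systems with $C^{\infty}$ coefficients, exactly as in the earlier construction of $g_{12}$. Your write-up is in fact more detailed than the paper's, since you make explicit the non-vanishing of $g^{12}$ needed to bring $(3.19)$ to normal form.
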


Now\ the relations $\Gamma ^{\beta }=0$\ and $\Delta =0$\ on $G_{T_{1}}^{1}$%
\ are arranged in the following proposition. The proof is similar
to the proof of Proposition 3.5.

\begin{proposition}
$\left( i\right) $ On $G_{T_{1}}^{1},$\ the reduced system
\begin{equation*}
\begin{array}{l}
\widetilde{R}_{2\alpha }=\tau _{2\alpha }, \\
LA_{2}=J_{2},%
\end{array}%
\end{equation*}%
\ is equivalent to%
\begin{equation}
\begin{array}{l}
g_{3\beta }\Gamma _{,2}^{\beta }+\left( g^{12}g_{12,2}g_{3\beta }+\frac{1}{2}%
g_{3\beta ,2}-A_{\beta }.A_{3,2}\right) \Gamma ^{\beta
}+A_{3,2}.\Delta =0,
\\
g_{4\beta }\Gamma _{,2}^{\beta }+\left( g^{12}g_{12,2}g_{4\beta }+\frac{1}{2}%
g_{4\beta ,2}-A_{\beta }.A_{4,2}\right) \Gamma ^{\beta
}+A_{4,2}.\Delta =0,
\\
2A_{\beta }\Gamma _{,2}^{\beta }-2\Delta
_{,2}-2g^{12}g_{12,2}\Delta
+2\left( g^{12}g_{12,2}A_{\beta }+A_{\beta ,2}\right) \Gamma ^{\beta }=0.%
\end{array}
\tag{3.21}  \label{4.29}
\end{equation}%
$\left( ii\right) $ if $\Gamma ^{\beta }=0$\ and $\Delta =0$\ on
$\Gamma $\ then $\Gamma ^{\beta }=0$\ and $\Delta =0$\ on
$G_{T_{1}}^{1}.$
\end{proposition}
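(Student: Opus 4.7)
The plan mirrors the structure of Proposition 3.5, adapted now to the triple $(\Gamma^{3},\Gamma^{4},\Delta)$.

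For part $(i)$, I would insert the identities $(3.14)$--$(3.15)$ of Proposition 3.7 into the reduced system $\widetilde{R}_{2\alpha}=\tau _{2\alpha }$, $LA_{2}=J_{2}$, recalling from $(2.22)$ that $\widetilde{R}_{2\alpha }=R_{2\alpha }-\tfrac{1}{2}(g_{k\alpha }\Gamma _{,2}^{k}+g_{k2}\Gamma _{,\alpha }^{k})$. On $G_{T_{1}}^{1}$, the on-surface relations $g_{2k}=0$ for $k\neq 1$, $A_{2}=0$, together with $\Gamma ^{1}\equiv 0$ (Proposition 3.5) and therefore $\Gamma _{,2}^{1}\equiv 0$ (since $x^{2}$ is tangential to $G_{T_{1}}^{1}$), collapse all $k=1,2$ contributions and reduce the effective range of $\beta,\nu $ to $\{3,4\}$. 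After this substitution, the principal (derivative) parts of the resulting equations are precisely the left-hand sides of $(3.19)$ with coefficients $\kappa _{\alpha }^{\lambda },\varkappa _{\alpha },\chi _{\alpha }^{\lambda },\digamma _{\alpha },a^{\lambda },b$ read off from $(3.20)$, and these vanish identically because $(g_{13},g_{14},A_{1})$ were constructed in Proposition 3.9 to solve $(3.19)$. What remains on both sides is linear homogeneous in $(\Gamma ^{\beta },\Gamma _{,2}^{\beta },\Delta ,\Delta _{,2})$ and is exactly the system $(3.21)$; the equivalence in both directions follows from this additive decomposition.

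For part $(ii)$, I would observe that $(3.21)$ is a first-order linear \emph{homogeneous} ODE system along $x^{2}$ in the three unknowns $(\Gamma ^{3},\Gamma ^{4},\Delta )$. Its coefficient matrix acting on the derivative vector $(\Gamma _{,2}^{3},\Gamma _{,2}^{4},\Delta _{,2})^{T}$ is
\[
\begin{pmatrix} g_{33} & g_{34} & 0 \\ g_{34} & g_{44} & 0 \\ 2A_{3} & 2A_{4} & -2 \end{pmatrix},
\]
with determinant $-2\det (g_{\alpha \beta })_{\alpha ,\beta \in \{3,4\}}=-2\Omega ^{2}<0$, since $\Omega >0$ and $\det (h_{\alpha \beta })=1$. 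Inverting this matrix brings the system into the normal form $(\Gamma _{,2}^{3},\Gamma _{,2}^{4},\Delta _{,2})^{T}=M(x)\cdot (\Gamma ^{3},\Gamma ^{4},\Delta )^{T}$ with $M\in C^{\infty }(G_{T_{1}}^{1})$. Standard uniqueness for linear ODEs with smooth coefficients and $C^{\infty }$ dependence on the transverse parameters $(x^{3},x^{4})$ then forces $(\Gamma ^{3},\Gamma ^{4},\Delta )\equiv 0$ on $G_{T_{1}}^{1}$ as soon as the initial data vanish on $\Gamma \equiv G_{T}^{1}\cap G_{T}^{2}=\{x^{1}=x^{2}=0\}$.

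The main obstacle is the bookkeeping in part $(i)$: one must carefully track which $k$-index contributions survive on $G_{T_{1}}^{1}$ (using all of $g_{2k}=0$ for $k\neq 1$, $A_{2}=0$, $\Gamma ^{1}\equiv 0$, $g_{12}g^{12}=1$, $g_{,2}^{12}=-(g^{12})^{2}g_{12,2}$, and $g^{\alpha \beta }g_{\alpha \beta ,2}=4g^{12}g_{12,2}$ as in Proposition 3.5), and verify that after this reduction the principal operators match the ones pre-built into $(3.19)$--$(3.20)$. Once this matching is made explicit, part $(ii)$ is essentially immediate from the invertibility of the $3\times 3$ principal matrix, which is guaranteed by the positivity of the conformity factor $\Omega $.
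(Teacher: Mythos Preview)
Your proposal is correct and follows essentially the same route as the paper's own proof: for part $(i)$ you reduce $\widetilde{R}_{2\alpha }$ to $R_{2\alpha }-\tfrac{1}{2}g_{\alpha \beta }\Gamma _{,2}^{\beta }$ via $\Gamma ^{1}=0$ and $g_{2k}=0$ ($k\neq 1$), then subtract the identities (3.14)--(3.15)/(3.18) (which hold because $(g_{13},g_{14},A_{1})$ were built to solve (3.19)) to leave exactly the homogeneous system (3.21); for part $(ii)$ you invoke uniqueness for linear homogeneous ODEs. Two minor remarks: the construction you cite is Proposition~3.8, not 3.9 (the latter is the statement you are proving), and your explicit computation of the $3\times 3$ principal matrix and its determinant $-2\Omega ^{2}\neq 0$ is a welcome addition that the paper leaves implicit.
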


\begin{proof}
\textbf{Proof of item }$\left( i\right) $.\textbf{\ }By definition of $%
\widetilde{R}_{ij}$ (see $\left( 2.22\right) $) and since $\Gamma
^{1}=0$ on $G^{1}$\ at this step of the construction process,\ the
reduced system
\begin{equation*}
\begin{array}{l}
\widetilde{R}_{23}=\tau _{23}, \\
\widetilde{R}_{24}=\tau _{24}, \\
LA_{2}=J_{2},%
\end{array}%
\end{equation*}%
is equivalent to
\begin{equation*}
\begin{array}{l}
R_{23}-\frac{1}{2}g_{3\beta }\Gamma _{,2}^{\beta }=\tau _{23}, \\
R_{24}-\frac{1}{2}g_{4\beta }\Gamma _{,2}^{\beta }=\tau _{24}, \\
LA_{2}=J_{2}.%
\end{array}%
\end{equation*}%
In view of $\left( \ref{4.26}\right) $, this is equivalent to $\left( \ref%
{4.29}\right) $.

\textbf{Proof of item }$\left( ii\right) $. $\left(
\ref{4.29}\right) $ is a linear homogeneous system of first order
ODEs on $G_{T_{1}}^{1},$\ with
unknown $\left( \Gamma ^{3},\Gamma ^{4},\Delta \right) ,$ with variable $%
x^{2},$ with $C^{\infty }$ coefficients depending smoothly on parameters $%
x^{3}$ and $x^{4}$. This yields $\Gamma ^{\alpha }=0$ and $\Delta =0$ in $%
G_{T_{1}}^{1}$, if $\Gamma ^{\alpha }=0$ and $\Delta =0$ on
$\Gamma $. (The conditions $\Gamma ^{\alpha }=0$ and $\Delta =0$
on $\Gamma $ can be
obtained upon a judicious choice of the data $a_{0},$\textit{\ }$a_{1},$%
\textit{\ }$b_{0},$\textit{\ }$b_{1},$\textit{\ }$c_{0},$\textit{\
}$c_{1}$ on $\Gamma $ (see Proposition\ 3.8).
\end{proof}

\begin{remark}
$\left( i\right) $\textit{\ In the same way, under assumption
}$A_{1}=0\ $on $G_{T}^{2},$ $g_{2\alpha }$\textit{\ and
}$A_{2}$\textit{\ are constructed on }$G_{T_{1}}^{2},$\textit{\
the relations }$\Gamma ^{\beta }=0$\textit{\ and }$\Delta
=0$\textit{\ are established on }$G_{T_{1}}^{2}$\textit{.}

$\left( ii\right) $\textit{\ The determination of }$g_{1\alpha
}$\textit{\ and }$A_{1}$\textit{\ on }$G_{T_{1}}^{1},$\textit{\
with }$\Gamma ^{\beta }=0 $\textit{\ and }$\Delta =0$\textit{\ on
}$G_{T_{1}}^{1},$\textit{\
provides the determination of }$g^{2\beta }$\textit{, }$g_{2\lambda ,1}$%
\textit{\ and }$A_{2,1}$\textit{\ on }$G_{T_{1}}^{1}$\textit{\
thanks to the
following relations that hold on }$G_{T_{1}}^{1}:$%
\begin{equation}
\begin{array}{l}
g^{2\beta }=-g^{12}g^{\beta \alpha }g_{1\alpha }, \\
\Delta =g^{12}\left( A_{2,1}+A_{1,2}\right) +g^{2\alpha }A_{\alpha
,2}+g^{\alpha \beta }A_{\alpha ,\beta }+\Gamma ^{\alpha }A_{\alpha }, \\
g^{12}g_{2\lambda ,1}=g_{\lambda \beta }\Gamma ^{\beta
}-g^{12}g_{1\lambda
,2}-g_{\lambda \beta }g_{,2}^{\beta \mu }g^{12}g_{1\mu } \\
\text{ \ \ \ \ \ \ \ \ \ \ \ \ \ \ }-\frac{1}{2}\left[
-2g^{12}g_{12,\lambda }+g^{\mu \theta }\left( 2g_{\lambda \mu
,\theta }-g_{\mu \theta ,\lambda
}\right) \right] .%
\end{array}
\tag{3.22}  \label{4.30}
\end{equation}%
$\left( iii\right) $\textit{\ Similarly, the determination of }$g_{2\alpha }$%
\textit{\ and }$A_{2}$\textit{\ on }$G_{T_{1}}^{2}$\textit{\ with
}$\Gamma ^{\beta }=0$\textit{\ and }$\Delta =0$\textit{\ on
}$G_{T_{1}}^{2}$\textit{\
provides the determination of }$g^{1\beta }$\textit{,\ }$g_{1\lambda ,2}$%
\textit{\ and }$A_{1,2}$\textit{\ on }$G_{T_{1}}^{2}$\textit{\
thanks to the
following relations that hold on }$G_{T_{1}}^{2}:$%
\begin{equation}
\begin{array}{l}
g^{1\beta }=-g^{12}g^{\beta \alpha }g_{2\alpha }, \\
\Delta =g^{12}\left( A_{2,1}+A_{1,2}\right) +g^{1\alpha }A_{\alpha
,1}+g^{\alpha \beta }A_{\alpha ,\beta }+\Gamma ^{\alpha }A_{\alpha }, \\
g^{12}g_{1\lambda ,2}=g_{\lambda \beta }\Gamma ^{\beta
}-g^{12}g_{2\lambda
,1}-g_{\lambda \beta }g_{,1}^{\beta \mu }g^{12}g_{2\mu } \\
\text{\quad \quad \quad \quad }-\frac{1}{2}\left[
-2g^{12}g_{12,\lambda }+g^{\mu \theta }\left( 2g_{\lambda \mu
,\theta }-g_{\mu \theta ,\lambda
}\right) \right] .%
\end{array}
\tag{3.23}  \label{4.31}
\end{equation}
\end{remark}

The\ last level of the hierarchy\ is now described.

\subsection{Construction of\ $g_{11}$\ on $G_{T_{1}}^{1}$\ and arrangement
of relation $\Gamma ^{2}=0$\ on $G_{T_{1}}^{1}$}

We now consider the reduced equations $\widetilde{R}_{\alpha \beta
}=\tau _{\alpha \beta }$ which are equivalent to $R_{\alpha \beta
}=\tau _{\alpha \beta }$ since
\begin{equation*}
\widetilde{R}_{\alpha \beta }=R_{\alpha \beta }-\frac{1}{2}\left(
g_{k\alpha
}\Gamma _{,\beta }^{k}+g_{k\beta }\Gamma _{,\alpha }^{k}\right) \text{ and }%
\Gamma ^{1}=\Gamma ^{3}=\Gamma ^{4}=0\text{ on }G_{T_{1}}^{1}.
\end{equation*}%
We seek for a combination of $g^{\alpha \beta }R_{\alpha \beta }$ and $%
\Gamma ^{2}$ that will provide an ODE with unknown $g_{11}$.
Analogously to Proposition 3.7, the following proposition holds
true:

\begin{proposition}
$\left( i\right) $ On $G_{T_{1}}^{1}$\ the following combinations hold%
\begin{equation}
\begin{array}{l}
g^{\alpha \beta }R_{\alpha \beta }-2\Gamma
_{,2}^{2}-2g^{12}g_{12,2}\Gamma
^{2} \\
=-2\left( g^{12}\right) ^{2}g_{11,22}+4\left( g^{12}\right)
^{3}g_{12,2}g_{11,2} \\
\text{ \ \ }+\left\{ 4\left( g^{12}\right) ^{4}\left( g_{12,2}\right) ^{2}+%
\frac{1}{2}\left( g^{12}\right) ^{2}\left( g^{\alpha \beta
}g_{\alpha \beta
,2}\right) _{,2}\right\} g_{11} \\
\text{ \ \ }+\frac{1}{4}g^{\alpha \beta }\left( N_{\alpha \beta
}+M_{\alpha
\beta }\right) -2W-2g^{12}g_{12,2}S, \\
g^{\alpha \beta }\tau _{\alpha \beta }=K,%
\end{array}
\tag{3.24}
\end{equation}

where, at this level of the construction process, $N_{\alpha \beta },$ $%
M_{\alpha \beta },$ $W,$ $S,$ $K$ are known on $G_{T_{1}}^{1}$ and given by%
\begin{equation}
\begin{array}{l}
N_{\alpha \beta }=-g_{\alpha \beta ,2}\left[ \left( g^{12}\right)
^{2}g_{22,1}g^{2\mu }g_{1\mu }-g^{12}g^{2\mu }g_{2\mu ,1}\right] \\
\text{ \ \ \ \ \ \ \ }-2\left( g^{12}\right) ^{2}g_{12,2}\left(
g_{1\beta ,\alpha }+g_{1\alpha ,\beta }\right) -g^{12}\left(
2g^{2\mu }g_{12,2}+g^{\mu \lambda }g_{2\lambda ,1}\right) \left(
g_{\beta \mu ,\alpha }+g_{\mu \alpha
,\beta }-g_{\alpha \beta ,\mu }\right) \\
\text{ \ \ \ \ \ \ \ }+g^{12}\left( g_{2\beta ,1\alpha
}+g_{2\alpha ,1\beta }\right) +g_{\alpha \beta ,2}\left(
g^{12}g^{2\mu }g_{1\mu }\right)
_{,2}+g^{12}g^{2\mu }g_{1\mu }g_{\alpha \beta ,22} \\
\text{ \ \ \ \ \ \ \ }+g_{,2}^{12}\left( g_{1\beta ,\alpha
}+g_{1\alpha ,\beta }\right) +g_{,2}^{2\mu }\left( g_{\mu \beta
,\alpha }+g_{\mu \alpha ,\beta }-g_{\alpha \beta ,\mu }\right)
+g^{12}\left( g_{1\beta ,2\alpha
}+g_{1\alpha ,2\beta }\right) \\
\text{ \ \ \ \ \ \ \ }+g^{2\mu }\left( g_{\beta \mu ,2\alpha
}+g_{\mu \alpha ,2\beta }-g_{\alpha \beta ,2\mu }\right)
-g_{,\lambda }^{2\lambda }g_{\alpha \beta ,2}+g_{,\lambda
}^{\lambda \mu }\left( g_{\mu \beta ,\alpha }+g_{\mu \alpha ,\beta
}-g_{\alpha \beta ,\mu }\right) -g^{2\lambda }g_{\alpha \beta
,2\lambda } \\
\text{ \ \ \ \ \ \ \ }+g^{\lambda \mu }\left( g_{\mu \beta
,\lambda \alpha
}+g_{\mu \alpha ,\lambda \beta }-g_{\alpha \beta ,\lambda \mu }\right) -%
\left[ 2g^{12}g_{12,\alpha }+g^{\lambda \mu }\left( g_{\mu \lambda
,\alpha }+g_{\mu \alpha ,\lambda }-g_{\alpha \lambda ,\mu }\right)
\right] _{,\beta
},%
\end{array}
\tag{3.25}
\end{equation}%
\begin{equation}
\begin{array}{l}
M_{\alpha \beta }=-g^{12}g_{\alpha \beta ,2}\left[
-g^{12}g_{22,1}g^{2\mu }g_{1\mu }+2g^{2\lambda }g_{2\lambda
,1}+g^{\lambda \mu }\left( g_{1\mu
,\lambda }-g_{1\lambda ,\mu }\right) \right] \\
\text{ \ \ \ \ \ \ }+\left[ g^{12}g^{2\mu }g_{1\mu }g_{\alpha
\beta ,2}+g^{12}\left( g_{1\beta ,\alpha }+g_{1\alpha ,\beta
}\right) +g^{2\mu }\left( g_{\mu \beta ,\alpha }+g_{\mu \alpha
,\beta }-g_{\alpha \beta ,\mu }\right) \right] \left(
3g^{12}g_{22,1}+g^{\lambda \mu }g_{\lambda \mu
,2}\right) \\
\text{ \ \ \ \ \ \ }+\left[ 2g^{12}g_{12,\lambda }+g^{\mu \theta
}\left( g_{\mu \theta ,\lambda }+g_{\theta \lambda ,\mu }-g_{\mu
\lambda ,\theta }\right) \right] \left[ -g^{2\lambda }g_{\alpha
\beta ,2}+g^{\mu \lambda }\left( g_{\mu \beta ,\alpha }+g_{\mu
\alpha ,\beta }-g_{\alpha \beta ,\mu
}\right) \right] \\
\text{ \ \ \ \ \ \ }-\left( g^{12}\right) ^{2}\left( g_{12,\beta
}+g_{2\beta ,1}-g_{1\beta ,2}\right) \left( g_{12,\alpha
}+g_{2\alpha ,1}-g_{1\alpha
,2}\right) \\
\text{ \ \ \ \ \ \ }+g^{12}g_{\mu \alpha ,2}\left[ g^{2\mu }\left(
g_{2\beta ,1}+g_{12,\beta }-g_{1\beta ,2}\right) +g^{\lambda \mu
}\left( g_{1\lambda
,\beta }-g_{1\beta ,\lambda }\right) \right] \\
\text{ \ \ \ \ \ \ }-2g^{12}g^{2\lambda }g_{1\lambda }g^{\theta
\mu }g_{\theta \beta ,2}g_{\alpha \mu ,2}-g^{12}g^{\lambda \mu
}g_{\lambda \beta ,2}\left( g_{1\mu ,\alpha }+g_{1\alpha ,\mu
}\right) -g^{\theta \mu }g_{\theta \beta ,2}g^{2\lambda }\left(
g_{\lambda \mu ,\alpha }+g_{\lambda
\alpha ,\mu }-g_{\alpha \mu ,\lambda }\right) \\
\text{ \ \ \ \ \ \ }-\left[ g^{12}\left( g_{12,\beta }+g_{1\beta
,2}-g_{2\beta ,1}\right) +g^{2\mu }g_{\mu \beta ,2}\right] \left[
g^{12}\left( g_{12,\alpha }+g_{1\alpha ,2}-g_{2\alpha ,1}\right)
+g^{2\lambda }g_{\lambda \alpha ,2}\right] \\
\text{ \ \ \ \ \ \ }-g^{12}g^{\lambda \theta }g_{\theta \alpha
,2}\left( g_{1\beta ,\lambda }+g_{1\lambda ,\beta }\right)
-g^{2\mu }g^{\lambda \theta }g_{\theta \alpha ,2}\left( g_{\mu
\beta ,\lambda }+g_{\mu \lambda ,\beta
}-g_{\lambda \beta ,\mu }\right) +m_{\alpha \beta }%
\end{array}
\tag{3.26}  \label{4.34}
\end{equation}%
with%
\begin{equation}
\begin{array}{l}
m_{\alpha \beta }=g^{12}g_{\lambda \beta ,2}\left[ g^{2\lambda
}\left( g_{2\alpha ,1}+g_{12,\alpha }-g_{1\alpha ,2}\right)
+g^{\lambda \mu }\left(
g_{1\mu ,\alpha }-g_{1\alpha ,\mu }\right) \right] \\
-\left[ -g^{2\mu }g_{\lambda \beta ,2}+g^{\theta \mu }\left(
g_{\theta \lambda ,\beta }+g_{\theta \beta ,\lambda }-g_{\lambda
\beta ,\theta }\right) \right] \left[ -g^{2\lambda }g_{\alpha \mu
,2}+g^{\delta \lambda }\left( g_{\delta \mu ,\alpha }+g_{\delta
\alpha ,\mu }-g_{\alpha \mu
,\delta }\right) \right] ,%
\end{array}
\tag{3.27}  \label{4.34a}
\end{equation}%
\begin{equation}
\begin{array}{l}
S=2g^{12}g^{2\lambda }g_{1\lambda ,2}+\frac{1}{2}g^{12}g^{\lambda
\mu
}\left( 2g_{1\lambda ,\mu }\right) \\
\text{ \ \ \ \ }+\frac{1}{2}g^{2\mu }\left[ g^{\lambda 2}\left(
2g_{\mu \lambda ,2}-g_{\lambda 2,\mu }\right) +g^{\lambda \theta
}\left( 2g_{\mu
\lambda ,\theta }-g_{\lambda \theta ,\mu }\right) \right] , \\
W=\left[ 2g^{12}g^{2\lambda }g_{1\lambda ,2}+g^{12}g^{\lambda \mu
}\left(
g_{1\lambda ,\mu }\right) \right] _{,2} \\
\text{ \ \ \ \ }+\frac{1}{2}\left\{ g^{2\mu }\left[ g^{\lambda
2}\left( 2g_{\mu \lambda ,2}-g_{\lambda 2,\mu }\right) +g^{\lambda
\theta }\left( 2g_{\mu \lambda ,\theta }-g_{\lambda \theta ,\mu
}\right) \right] \right\}
_{,2}, \\
K=2g^{\alpha \beta }g^{2\lambda }F_{2\alpha }.F_{\lambda \beta
}+g^{\alpha \beta }g^{\mu \lambda }F_{\mu \alpha }.F_{\lambda
\beta
}-F_{12}.F^{12}-F_{34}.F^{34} \\
\text{ \ \ \ \ }-F_{2\lambda }.\left[ g^{21}g^{\lambda
2}F_{12}+g^{23}g^{\lambda 2}F_{32}+g^{23}g^{\lambda
4}F_{34}+g^{24}g^{\lambda 2}F_{42}+g^{24}g^{\lambda 3}F_{43}\right] \\
\text{ \ \ \ \ }+g^{\alpha \beta }\left( \Phi _{,\alpha }+\left[
A_{\alpha },\Phi \right] \right) .\left( \Phi _{,\beta }+\left[
A_{\beta },\Phi \right]
\right) +V\left( \Phi ^{2}\right) .%
\end{array}
\tag{3.28}  \label{4.35}
\end{equation}%
\medskip $\left( ii\right) $ The equation%
\begin{equation}
g^{\alpha \beta }R_{\alpha \beta }-2\Gamma
_{,2}^{2}-2g^{12}g_{12,2}\Gamma ^{2}=g^{\alpha \beta }\tau
_{\alpha \beta },  \tag{3.29}  \label{4.36}
\end{equation}%
is equivalent to the following second order ODE on
$G_{T_{1}}^{1}$\ with
unknown $g_{11}$,%
\begin{equation}
-2\left( g^{12}\right) ^{2}g_{11,22}+4\left( g^{12}\right)
^{3}g_{12,2}g_{11,2}+\chi g_{11}+\psi =0,  \tag{3.30}
\label{4.37}
\end{equation}%
where
\begin{equation}
\begin{array}{l}
\chi =4\left( g^{12}\right) ^{4}\left( g_{12,2}\right) ^{2}+\frac{1}{2}%
\left( g^{12}\right) ^{2}\left( g^{\alpha \beta }g_{\alpha \beta
,2}\right)
_{,2}, \\
\psi =\frac{1}{4}g^{\alpha \beta }\left( N_{\alpha \beta
}+M_{\alpha \beta
}\right) -2W-2g^{12}g_{12,2}S-K.%
\end{array}
\tag{3.31}  \label{4.38}
\end{equation}%
\medskip \medskip
\end{proposition}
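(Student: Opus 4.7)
The plan is to mirror the direct-computation strategy of Proposition~3.7 (whose proof the author defers to Appendix~C). At this stage of the construction on $G_{T_1}^1$, every component of the metric except $g_{11}$ is already known, $A_1,\ldots,A_4$ and $\Phi$ are known, the constraints $\Gamma^1=\Gamma^3=\Gamma^4=0$ and $\Delta=0$ have been arranged, the conformity-factor relation $g_{12,2}=\tfrac12 g_{12} v_{,2}$ holds, and on $G^1$ the block structure $g_{22}=g_{23}=g_{24}=0$ with $g^{11}=g^{1\alpha}=0$ and $g^{12}g_{12}=1$ is in force. I would proceed in four steps: (a) expand $R_{\alpha\beta}$ via (2.9) in terms of Christoffel symbols; (b) contract with $g^{\alpha\beta}$ and compute $\Gamma^2$ and $\Gamma^2_{,2}$ on $G^1$ from the known data; (c) form the combination $g^{\alpha\beta}R_{\alpha\beta}-2\Gamma^2_{,2}-2g^{12}g_{12,2}\Gamma^2$ and systematically separate the terms carrying $g_{11}$ from those depending only on already-constructed quantities, producing the identity in (3.24) with the coefficients $N_{\alpha\beta}$, $M_{\alpha\beta}$, $S$, $W$; (d) expand $g^{\alpha\beta}\tau_{\alpha\beta}$ from (2.10a,b) using the block splitting to obtain $K$ as in~(3.28).

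The crucial structural point that explains why an ODE in $g_{11}$ alone arises is the dependence of $g^{22}$ on $g_{11}$. Solving $g^{2i}g_{ij}=\delta^2_j$ on $G^1$ determines $g^{23}$ and $g^{24}$ independently of $g_{11}$, and then the relation for $j=1$ gives $g^{22}=-(g^{12})^2 g_{11}-g^{12}g^{2\alpha}g_{1\alpha}$, so $g^{22}$ is affine in $g_{11}$ with slope $-(g^{12})^2$. Consequently the only place $g_{11,22}$ can enter $g^{\alpha\beta}R_{\alpha\beta}$ is through the second $x^2$-derivative of $g^{22}$ multiplying an undifferentiated (or once-differentiated) $g_{\alpha\beta}$; this pins down the coefficient $-2(g^{12})^2$ of $g_{11,22}$ in (3.30). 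A parallel accounting gives the coefficient $4(g^{12})^3 g_{12,2}$ of $g_{11,2}$, and the coefficient $\chi$ of $g_{11}$ in (3.31) picks up contributions both from derivatives of $g^{22}$ and from the $\Gamma^2$-piece. Every other term, being independent of $g_{11}$, collects into the known remainder $\psi$. On the matter side, the expansion of $F_{kl}F^{kl}$ according to the block structure of $g^{ij}$ on $G^1$, together with $g^{11}=g^{1\alpha}=0$, produces exactly the ``$-F_{12}.F^{12}-F_{34}.F^{34}$'' contribution and the cross terms displayed in $K$; the Higgs piece $\widehat\nabla_\alpha\Phi.\widehat\nabla^\alpha\Phi+V(\Phi^2)$ is already a known function of the data, so $K$ is entirely determined at this stage.

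Part~(ii) then falls out: since $\Gamma^1=\Gamma^3=\Gamma^4=0$ on $G_{T_1}^1$, the $g^{\alpha\beta}$-contracted reduced equation $\widetilde R_{\alpha\beta}=\tau_{\alpha\beta}$ reduces to $g^{\alpha\beta}R_{\alpha\beta}-2\Gamma^2_{,2}-2g^{12}g_{12,2}\Gamma^2=g^{\alpha\beta}\tau_{\alpha\beta}$, and substituting the identities of~(i) collapses it to the second-order ODE~(3.30) with coefficients~(3.31). The main obstacle is the bookkeeping of step~(c): the tensorial expansion produces a very large number of terms, and one must organize them so that the cancellations between $\Gamma^2_{,2}$ and the derivatives of Christoffel symbols inside $R_{\alpha\beta}$ leave exactly one clean $g_{11,22}$ term, the remainder of the $g_{11}$-dependence being captured by the explicit coefficients of $g_{11,2}$ and $g_{11}$ in~(3.31). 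The rest of the algebra, while lengthy, is mechanical once the $g^{22}$-observation has localized where $g_{11}$ can appear.
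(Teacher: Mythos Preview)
Your approach is essentially the same as the paper's (Appendix~D): expand $R_{\alpha\beta}$ from the Christoffel definition, compute $\Gamma^2$ and $\Gamma^2_{,2}$ explicitly on $G^1$, form the combination (3.29), and separate the $g_{11}$-dependent terms from the known remainder; then expand $g^{\alpha\beta}\tau_{\alpha\beta}$ via the block structure of $g^{ij}$ on $G^1$ to obtain $K$.

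One concrete point in your heuristic needs correction. You write that ``the only place $g_{11,22}$ can enter $g^{\alpha\beta}R_{\alpha\beta}$ is through the second $x^2$-derivative of $g^{22}$.'' This is not where it comes from: the Ricci tensor involves at most one derivative of the inverse metric (through $\partial_k\Gamma^k_{\alpha\beta}$), so only $(g^{22})_{,2}$ --- hence $g_{11,2}$ --- appears in $g^{\alpha\beta}R_{\alpha\beta}$. In the paper's computation (D.6), $g^{\alpha\beta}R_{\alpha\beta}$ contains $g_{11,2}$ and $g_{11}$ but \emph{no} $g_{11,22}$. The second derivative $g_{11,22}$ enters solely through $\Gamma^2_{,2}$: from (D.7) one has $\Gamma^2=(g^{12})^2 g_{11,2}-\tfrac12 g^{12}g^{\lambda\mu}g_{\lambda\mu,1}+S$, and differentiating in $x^2$ yields the $(g^{12})^2 g_{11,22}$ term, whence the coefficient $-2(g^{12})^2$ in (3.30). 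Your affine-in-$g_{11}$ observation about $g^{22}$ is correct and does govern the $g_{11}$ and $g_{11,2}$ terms in $R_{\alpha\beta}$, but not the highest-order one. Also, for part~(ii) note that on $G^1$ one has $\widetilde R_{\alpha\beta}=R_{\alpha\beta}$ (since $g_{2\alpha}=0$ and $\Gamma^1=\Gamma^3=\Gamma^4=0$), so the reduced equation is $R_{\alpha\beta}=\tau_{\alpha\beta}$; the $\Gamma^2$-terms in (3.29) are deliberately added to produce the ODE, not present in the reduced equation itself.
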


\begin{proof}
See appendix D.
\end{proof}

The proof of the following statement that provides the construction of $%
g_{11}$ on $G_{T_{1}}^{1}$ is straightforward.

\begin{proposition}
Let $d_{0},$\ $d_{1}\in C^{\infty }\left( \Gamma \right) .$ Then $\left( \ref%
{4.37}\right) $\ has a unique solution $g_{11}\in C^{\infty
}\left(
G_{T_{1}}^{1}\right) $\ satisfying $g_{11}=d_{0}$\ and $g_{11,2}=d_{1}$\ on $%
\Gamma .$
\end{proposition}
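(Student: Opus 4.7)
The plan is to read equation (3.30) as a second-order linear ordinary differential equation in the variable $x^{2}$, with $(x^{3},x^{4})$ playing the role of smooth parameters, and then to invoke the classical existence-uniqueness theorem for linear ODEs with $C^{\infty}$ coefficients that depend smoothly on parameters. This is precisely the same template used in the proofs of Propositions 3.3, 3.4 and 3.8, the only difference being that the present ODE is \emph{linear}, which is what allows the solution to be defined on the whole of $G_{T_{1}}^{1}$ rather than on a smaller subset.

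First I would put (3.30) into normal form. The leading coefficient of $g_{11,22}$ is $-2(g^{12})^{2}$; by the algebraic relation $g^{12}g_{12}=1$ from (3.4), the function $g^{12}$ is nowhere zero on $G_{T_{1}}^{1}$, so division by $-2(g^{12})^{2}$ is legitimate and yields the equivalent equation
\begin{equation*}
g_{11,22} = 2g^{12}g_{12,2}\,g_{11,2} + \tfrac{1}{2}(g^{12})^{-2}\bigl(\chi\,g_{11}+\psi\bigr).
\end{equation*}
Next I would verify that the coefficients $g^{12}$, $g_{12,2}$, $\chi$, $\psi$ all belong to $C^{\infty}(G_{T_{1}}^{1})$. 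This is exactly what Proposition 3.11 asserts: at the present stage of the hierarchy the data $\Omega$, $h_{\alpha\beta}$, $g_{12}$, $g_{1\alpha}$, $A_{i}$ and $\Phi$, together with all of their tangential derivatives along $G_{T_{1}}^{1}$, are already known and smooth by Propositions 3.3, 3.4, 3.8, so the explicit formulas (3.25)--(3.28) and (3.31) for $N_{\alpha\beta}$, $M_{\alpha\beta}$, $W$, $S$, $K$, $\chi$, $\psi$ produce smooth functions on $G_{T_{1}}^{1}$.

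Having put the equation in standard linear form with smooth coefficients, I would apply the classical global existence-uniqueness theorem for linear ODE systems: for each fixed $(x^{3},x^{4})$ in the projection of $G_{T_{1}}^{1}$ onto $\Gamma$, the Cauchy problem with initial data $g_{11}=d_{0}(x^{3},x^{4})$ and $g_{11,2}=d_{1}(x^{3},x^{4})$ at $x^{2}=0$ admits a unique solution defined on the full $x^{2}$-interval on which the ODE is posed, because linear ODEs with continuous coefficients do not blow up in finite time. Smooth dependence of solutions of linear ODEs on parameters and initial data, together with the smoothness of $d_{0}, d_{1}\in C^{\infty}(\Gamma)$, then upgrades this solution to an element of $C^{\infty}(G_{T_{1}}^{1})$.

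There is no genuine obstacle here; the only work is the bookkeeping needed to see that $\chi$ and $\psi$, despite the rather long expressions (3.25)--(3.28), do reduce to already-constructed data. That verification is precisely the content of Proposition 3.11 (whose proof is deferred to Appendix D), and once it is in hand the present proposition follows as a direct corollary. This is the sense in which the author's remark that the proof is ``straightforward'' should be understood.
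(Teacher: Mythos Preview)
Your proposal is correct and matches the paper's intended argument: the paper itself offers no detailed proof, remarking only that the statement is ``straightforward'', and the surrounding text (compare the comments preceding Propositions~3.3 and~3.4) makes clear that this is shorthand for invoking the classical global existence--uniqueness theorem for linear ODEs with $C^{\infty}$ coefficients depending smoothly on parameters. You have simply made that invocation explicit, correctly noting that $g^{12}\neq 0$ permits reduction to normal form and that linearity is what guarantees existence on all of $G_{T_{1}}^{1}$.
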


In the following proposition the relation $\Gamma ^{2}=0$\ on
$G_{T_{1}}^{1}$ is arranged. Its proof is similar to the proof of
Proposition 3.5.

\begin{proposition}
$\left( i\right) $ On $G_{T_{1}}^{1}$ the reduced system
\begin{equation}
\widetilde{R}_{\alpha \beta }=\tau _{\alpha \beta },  \tag{3.32}
\label{4.39}
\end{equation}%
implies the following homogenous ODE on $G_{T_{1}}^{1}$\ with unknown $%
\Gamma ^{2}$%
\begin{equation}
\Gamma _{,2}^{2}+g^{12}g_{12,2}\Gamma ^{2}=0.  \tag{3.33}
\label{4.40}
\end{equation}%
\medskip $\left( ii\right) $ Assume $\Gamma ^{2}=0$\ on $\Gamma $. Then $%
\Gamma ^{2}=0$\ on $G_{T_{1}}^{1}$.
\end{proposition}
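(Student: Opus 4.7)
The plan is to mirror the strategy of Proposition 3.5: combine the reduced Ricci-type equations to isolate a linear homogeneous ODE for the one remaining unfulfilled constraint, namely $\Gamma^{2}$, and then invoke ODE uniqueness to propagate the vanishing initial value.

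For item $(i)$, I would first justify the equality $\widetilde{R}_{\alpha\beta} = R_{\alpha\beta}$ on $G_{T_{1}}^{1}$ for $\alpha, \beta \in \{3,4\}$. Writing $\widetilde{R}_{\alpha\beta} = R_{\alpha\beta} - \tfrac{1}{2}(g_{k\alpha}\Gamma_{,\beta}^{k} + g_{k\beta}\Gamma_{,\alpha}^{k})$, I would note that the term $g_{2\alpha}\Gamma_{,\beta}^{2}$ vanishes because $g_{2\alpha} = 0$ for $\alpha \in \{3,4\}$ by the free-data choice (3.3), while the terms $g_{k\alpha}\Gamma_{,\beta}^{k}$ with $k \in \{1,3,4\}$ vanish because by the preceding steps of the hierarchy $\Gamma^{1} = \Gamma^{3} = \Gamma^{4} \equiv 0$ on $G_{T_{1}}^{1}$, hence their tangential derivatives $\partial_{\alpha}$, $\partial_{\beta}$ (with $\alpha,\beta \in \{3,4\}$) are identically zero on $G_{T_{1}}^{1}$. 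Consequently the reduced system $\widetilde{R}_{\alpha\beta} = \tau_{\alpha\beta}$ is equivalent to $R_{\alpha\beta} = \tau_{\alpha\beta}$, which upon contraction with $g^{\alpha\beta}$ yields $g^{\alpha\beta}R_{\alpha\beta} = g^{\alpha\beta}\tau_{\alpha\beta}$.

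Next I would invoke Proposition 3.11 together with the construction of $g_{11}$ in Proposition 3.12: equation (3.30) is satisfied on $G_{T_{1}}^{1}$ by the constructed $g_{11}$, which by Proposition 3.11(ii) is exactly the equation (3.29):
\begin{equation*}
g^{\alpha\beta}R_{\alpha\beta} - 2\Gamma_{,2}^{2} - 2g^{12}g_{12,2}\Gamma^{2} = g^{\alpha\beta}\tau_{\alpha\beta}.
\end{equation*}
Substituting the contracted identity $g^{\alpha\beta}R_{\alpha\beta} = g^{\alpha\beta}\tau_{\alpha\beta}$ obtained from the reduced system immediately produces $-2\Gamma_{,2}^{2} - 2g^{12}g_{12,2}\Gamma^{2} = 0$, which is precisely the homogeneous ODE (3.33).

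For item $(ii)$, equation (3.33) is a first-order linear homogeneous ODE in the variable $x^{2}$ for the unknown $\Gamma^{2}$, with $x^{3}, x^{4}$ acting as smooth parameters and $g^{12}g_{12,2} \in C^{\infty}(G_{T_{1}}^{1})$ (the factor $g^{12}$ is bounded away from zero on a small enough $G_{T_{1}}^{1}$). Standard uniqueness for linear ODEs with smooth coefficients and initial data then guarantees that the zero initial condition $\Gamma^{2}|_{\Gamma} = 0$ propagates, giving $\Gamma^{2} \equiv 0$ on $G_{T_{1}}^{1}$. I do not expect any serious obstacle here: the substantive content was already packed into Proposition 3.11, and the only point requiring care is the vanishing of the tangential derivatives of the already-arranged constraints, which is a direct consequence of those constraints being identically zero functions of $(x^{2}, x^{3}, x^{4})$ on $G_{T_{1}}^{1}$.
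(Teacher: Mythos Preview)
Your proposal is correct and follows essentially the same approach as the paper: reduce $\widetilde{R}_{\alpha\beta}=\tau_{\alpha\beta}$ to $R_{\alpha\beta}=\tau_{\alpha\beta}$ using $g_{2\alpha}=0$ and the vanishing of tangential derivatives of $\Gamma^{1},\Gamma^{3},\Gamma^{4}$, contract with $g^{\alpha\beta}$, and combine with the identity (3.29) (which holds because $g_{11}$ was constructed via (3.30)) to obtain the homogeneous ODE for $\Gamma^{2}$, then invoke linear ODE uniqueness. Your treatment of the tangential-derivative point is in fact slightly more explicit than the paper's, but the logic is identical.
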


\begin{proof}
Since $g_{23}=g_{24}=0,$\quad $\Gamma ^{1}=\Gamma ^{3}=\Gamma ^{4}=0$ on $%
G^{1}$ at this final step of the construction process, it follows
from the definition of $\widetilde{R}_{ij}$\ (see $\left(
2.22\right) $) that the reduced equation $\left( \ref{4.39}\right)
$ reads $R_{\alpha \beta }=\tau _{\alpha \beta }$.\ Thus, in view
of $\left( \ref{4.36}\right) $, equation
\begin{equation}
g^{\alpha \beta }R_{\alpha \beta }=g^{\alpha \beta }\tau _{\alpha
\beta }, \tag{4.1.30}
\end{equation}%
implies $\left( \ref{4.40}\right) $. $\left( \ref{4.40}\right) $
is a linear homogenous first order ODE on $G_{T_{1}}^{1}$, with
unknown function $\Gamma ^{2}$, of the real variable $x^{2}$, with
$C^{\infty }$ coefficients
depending smoothly on real parameters $x^{3}$ and $x^{4}$. Thus, assuming $%
\Gamma ^{2}=0$ on $\Gamma $ gives $\Gamma ^{2}=0$ on
$G_{T_{1}}^{1}$.
\end{proof}

\begin{remark}
\textit{In the same way }$g_{22}$\textit{\ is constructed on }$G_{T_{1}}^{2}$%
\textit{\ and the relation }$\Gamma ^{1}=0$\textit{\ is established on }$%
G_{T_{1}}^{2}$\textit{.}
\end{remark}

\section{Conclusion and compatibility conditions on\ $\Gamma \equiv
G_{T_{1}}^{1}\cap G_{T_{1}}^{2}$}

We have successfully adapted Rendall method through which, given a
positive
real number $0<T\leq T_{0}$, appropriate free data $\underset{\omega }{h}%
_{\alpha \beta },$ $\underset{\omega }{A}_{\alpha }$\ and $\underset{\omega }%
{\Phi }$\ in $C^{\infty }\left( G_{T}^{\omega }\right) $ and some
adequate conditions, initial data for the reduced
Einstein-Yang-Mills-Higgs system are constructed on $G_{T}^{\omega
},$ $\omega =1,2$. For those data we have established that the
solution of the evolution problem with those initial
data satisfies the relations $\Gamma ^{i}=0$ and $\Delta =0$ on $%
G_{T_{1}}^{\omega }$ for some $T_{1}\in (0,T]$. In fact, setting
$g_{\alpha
\beta }=\Omega h_{\alpha \beta }$ on $G_{T}^{1}\cup G_{T}^{2}$,\ where $%
h_{\alpha \beta }=\underset{\omega }{h}_{\alpha \beta }$ on $G_{T}^{\omega }$%
, $\omega =1,2$, $\left( \underset{\omega }{h}_{\alpha \beta
}\right) $ a symmetric positive definite matrix function with
determinant $1$ at each point of $G_{T}^{\omega },$ $\omega =1,2$,
and $\Omega $ an unknown positive function, we have constructed
$C^{\infty }$ initial data as follows:

$\left( i\right) $ Construction of $g_{\alpha \beta }$, $g_{12}$ on $%
G_{T_{1}}^{1}$ such that $\Gamma ^{1}=0$ and $g_{22,1}=2g_{12,2}$ on $%
G_{T_{1}}^{1}$ under the following conditions:

$%
\begin{array}{l}
g_{22}=g_{23}=g_{24}=0\text{ on }G_{T}^{1},\quad A_{2}=0\text{ on
}G_{T}^{1},
\\
g_{12},\text{ }\Omega \text{ and }\Omega _{,2}\text{ are given }C^{\infty }%
\text{ functions on }\Gamma ,\text{ such that }\Gamma ^{1}=0\text{ and }%
g_{22,1}=2g_{12,2}\text{ on }\Gamma .%
\end{array}%
\medskip $

$\left( ii\right) $ Construction of $g_{\alpha \beta }$, $g_{12}$ on $%
G_{T_{1}}^{2}$ such that $\Gamma ^{2}=0$ and $g_{11,2}=2g_{12,1}$ on $%
G_{T_{1}}^{2}$ under the following supplementary conditions (in
addition to conditions in $\left( i\right) $)

$%
\begin{array}{l}
g_{11}=g_{13}=g_{14}=0\text{ on }G_{T}^{2},\quad A_{1}=0\text{ on
}G_{T}^{2},
\\
\Omega _{,1}\text{ is a given }C^{\infty }\text{ function on
}\Gamma ,\text{
such that }\Gamma ^{2}=0\text{ and }g_{11,2}=2g_{12,1}\text{ on }\Gamma .%
\end{array}%
\medskip $

$\left( iii\right) $ Construction of $g_{1\alpha }$, $A_{1}$ on $%
G_{T_{1}}^{1}$ such that $\Gamma ^{\alpha }=0$ and $\Delta =0$ on $%
G_{T_{1}}^{1}$ under the following supplementary condition (in
addition to
conditions in $\left( i\right) $): $g_{1\alpha ,2}$ and $A_{1,2}$ are given $%
C^{\infty }$ functions on $\Gamma $, such that $\Gamma ^{\alpha }=0$ and $%
\Delta =0$ on $\Gamma $.$\medskip $

$\left( iv\right) $ Construction of $g_{2\alpha }$, $A_{2}$ on $%
G_{T_{1}}^{2} $ such that $\Gamma ^{\alpha }=0$ and $\Delta =0$ on $%
G_{T_{1}}^{2}$ under the following supplementary condition (in
addition to conditions in $\left( ii\right) $): $g_{2\alpha ,1}$
and $A_{2,1}$ are given
$C^{\infty }$ functions on $\Gamma $, such that $\Gamma ^{\alpha }=0$ and $%
\Delta =0$ on $\Gamma $.$\medskip $

$\left( 5i\right) $ Construction of $g_{11}$ on $G_{T_{1}}^{1}$ such that $%
\Gamma ^{2}=0$ on $G_{T_{1}}^{1}$ under the following
supplementary condition (in addition to conditions in $\left(
i\right) $ and $\left( iii\right) $): $g_{11,2}$ is a given
$C^{\infty }$ function on $\Gamma $, such that $\Gamma ^{2}=0$ on
$\Gamma $.$\medskip $

$\left( 6i\right) $ Construction of $g_{22}$ on $G_{T_{1}}^{2}$ such that $%
\Gamma ^{1}=0$ on $G_{T_{1}}^{2}$ under the following
supplementary condition (in addition to conditions in $\left(
ii\right) $ and $\left( iv\right) $): $g_{22,1}$ is a given
$C^{\infty }$ function on $\Gamma $, such that $\Gamma ^{1}=0$ on
$\Gamma $.$\medskip $

We now show how the above adequate conditions in\textbf{\ }$\left(
i\right) , $ $\left( ii\right) ,$ $\left( iii\right) ,$ $\left(
iv\right) ,$ $\left( 5i\right) $ and $\left( 6i\right) $\textbf{\
}are arranged.

Begin by taking $g_{12}=-1$ on $\Gamma $ (This\ is\ a\
non-restrictive property\ that can naturally be imposed to\ any\
metric\ in\ standard\
coordinates, see \cite{24}\ p.\ $232$). Then\ choose $\left( \underset{%
\omega }{h}_{\alpha \beta }\right) $,\ a $C^{\infty }$ symmetric
positive definite matrix function on\ $G_{T}^{\omega }$ with
determinant $1$ at each
point and\ set $g_{\alpha \beta }=\Omega h_{\alpha \beta },$ where $%
h_{\alpha \beta }=\underset{\omega }{h}_{\alpha \beta }$ on $G_{T}^{\omega }$%
, $\omega =1,2$. Let$\ \Omega =v_{0}$ on $\Gamma ,$ where $v_{0}$
is a given $C^{\infty }$ function on $\Gamma $. Take also
\begin{equation*}
\begin{array}{c}
g_{22}=g_{23}=g_{24}=0,\quad A_{2}=0\text{ on }G_{T}^{1}, \\
g_{11}=g_{13}=g_{14}=0,\quad A_{1}=0\text{ on }G_{T}^{2}.%
\end{array}%
\end{equation*}%
Then all the components $g_{ij}$ of the metric are determined on
$\Gamma $ since $g_{\alpha \beta }=\Omega h_{\alpha \beta }$,
$g_{11}=g_{1\alpha }=0$, $g_{22}=g_{2\alpha }=0$ on $\Gamma $,
$\Omega $ and $h_{\alpha \beta }$ are known on $\Gamma $.

Next choose $\underset{\omega }{\Phi },$ $\underset{\omega }{A}_{3}$ and $%
\underset{\omega }{A}_{4},$ which are\ given $C^{\infty }$\ functions on $%
G_{T}^{\omega }$ such that
\begin{equation*}
\underset{1}{\Phi }=\underset{2}{\Phi }\text{ on }\Gamma ,\quad \underset{1}{%
A}_{3}=\underset{2}{A}_{3}\text{ on }\Gamma ,\quad \underset{1}{A}_{4}=%
\underset{2}{A}_{4}\text{ on }\Gamma .
\end{equation*}%
Let $\Omega _{,1}=v_{1}$ and $\Omega _{,2}=v_{2}$ on $\Gamma ,$
where $v_{1}$
and $v_{2}$\ are two given $C^{\infty }$ functions on $\Gamma .$ Then Eqs. $%
\left( 3.6\right) $ and $\left( 3.7\right) $ on $G_{T}^{1}$\ as
well as
their following counterparts on $G_{T}^{2}$%
\begin{equation}
g_{12,1}=\frac{1}{2}g_{12}\frac{\Omega _{,1}}{\Omega }\text{ on\
}G_{T}^{2}, \tag{3.6$\prime $}
\end{equation}%
and

\begin{equation}
\frac{1}{4}g_{,1}^{\alpha \beta }g_{\alpha \beta
,1}-\frac{1}{2}\left( g^{\alpha \beta }g_{\alpha \beta ,1}\right)
_{,1}=\tau _{11}\text{ on}\ G_{T}^{2},  \tag{3.7$\prime $}
\end{equation}%
are\ satisfied and it holds that: (see \cite{24},\ p.\ $233$)
\begin{equation*}
g_{11,2}=g_{22,1}=\frac{1}{4}g^{\alpha \beta }g_{\alpha \beta ,1}\text{ on }%
\Gamma .
\end{equation*}%
This insures
\begin{equation*}
\Gamma ^{1}=\Gamma ^{2}=0\text{ on }\Gamma \equiv G_{T}^{1}\cap
G_{T}^{2}.
\end{equation*}%
Finally, Let $g_{13,2}=\widetilde{\underset{2}{b}}_{3},\quad g_{14,2}=%
\widetilde{\underset{2}{b}}_{4},\quad A_{1,2}=\widetilde{\underset{2}{A}}%
_{1} $ on $\Gamma ,$ where $\widetilde{\underset{2}{b}}_{3},\ \widetilde{%
\underset{2}{b}}_{4},\mathit{\ }\widetilde{\underset{2}{A}}_{1}$ are given $%
C^{\infty }$ functions on $\Gamma $. Then there is only one way to choose $%
g_{2\beta ,1}$ on $\Gamma $ such that $\Gamma ^{3}=\Gamma ^{4}=0$
on $\Gamma
$. In fact, by the definition of $\Gamma ^{\beta }$ (see $\left( \ref{4}%
\right) $), on $G_{T}^{1}$ it holds that

\begin{equation}
g^{12}g_{2\alpha ,1}=g_{\alpha \beta }\Gamma ^{\beta
}-g^{12}g_{1\alpha
,2}-g_{\alpha \beta }g_{,2}^{\beta \lambda }g^{12}g_{1\lambda }-\frac{1}{2}%
\left[ -2g^{12}g_{12,\alpha }+g^{\mu \theta }\left( 2g_{\alpha \mu
,\theta }-g_{\mu \theta ,\alpha }\right) \right] ,  \tag{4.1}
\label{4.41}
\end{equation}%
and on $G_{T}^{2}$ it holds that

\begin{equation}
g^{12}g_{1\alpha ,2}=g_{\alpha \beta }\Gamma ^{\beta
}-g^{12}g_{2\alpha
,1}-g_{\alpha \beta }g_{,1}^{\beta \lambda }g^{12}g_{2\lambda }-\frac{1}{2}%
\left[ -2g^{12}g_{12,\alpha }+g^{\mu \theta }\left( 2g_{\alpha \mu
,\theta }-g_{\mu \theta ,\alpha }\right) \right] .  \tag{4.2}
\label{4.42}
\end{equation}%
Since $g_{1\lambda }=0=g_{2\lambda }$ on $\Gamma $, it follows from $\left( %
\ref{4.41}\right) $ and $\left( \ref{4.42}\right) $ that

\begin{equation}
g^{12}g_{2\alpha ,1}=g_{\alpha \beta }\Gamma ^{\beta }-g^{12}g_{1\alpha ,2}-%
\frac{1}{2}\left[ -2g^{12}g_{12,\alpha }+g^{\mu \theta }\left(
2g_{\alpha \mu ,\theta }-g_{\mu \theta ,\alpha }\right) \right]
\text{ on }\Gamma . \tag{4.3}  \label{4.43}
\end{equation}%
So on $\Gamma ,$\ $\Gamma ^{\beta }=0$\ is equivalent to%
\begin{equation*}
g^{12}g_{2\alpha ,1}=-g^{12}g_{1\alpha ,2}-\frac{1}{2}\left[
-2g^{12}g_{12,\alpha }+g^{\mu \theta }\left( 2g_{\alpha \mu
,\theta }-g_{\mu \theta ,\alpha }\right) \right] .
\end{equation*}%
We now proceed to arrange the condition $\Delta =0$ on $\Gamma $. As $%
A_{2}=0 $ on $G_{T}^{1}$, $A_{1}=0$ on $G_{T}^{2}$, $\underset{\omega }{A}%
_{\alpha }$ are given as $C^{\infty }$ functions on $G_{T}^{\omega
}$, there
is only one way to choose $A_{2,1}$ on $\Gamma $ such that $\Delta =0$ on $%
\Gamma $. In fact, from the definitions of $\Delta $\ and $\Gamma
^{k}$\ (see $\left( 2.20\right) $ and $\left( \ref{4}\right) $),
on $G_{T}^{1}$ it holds that

\begin{equation}
\begin{array}{l}
g^{12}A_{2,1}=\Delta -g^{12}A_{1,2}+\left( 2g^{12}\Gamma
_{12}^{1}+g^{\alpha \beta }\Gamma _{\alpha \beta }^{1}\right)
A_{1}+g^{12}g^{\alpha \lambda
}A_{\alpha }\left( g_{1\lambda ,2}+g_{2\lambda ,1}\right) \\
\text{ \ \ \ \ \ \ \ \ \ \ \ }-\left[ 2g^{12}\Gamma _{2\alpha
}^{\beta }A_{\beta }g^{\alpha \lambda }+2\left( g^{12}\right)
^{2}g_{12,2}g^{\alpha \lambda }A_{\alpha }-g^{12}g^{\alpha \lambda
}A_{\alpha ,2}\right]
g_{1\lambda } \\
\text{ \ \ \ \ \ \ \ \ \ \ \ }-g^{12}g^{\alpha \lambda }A_{\alpha
}g_{12,\lambda }-g^{\alpha \beta }\left( A_{\beta ,\alpha }-\Gamma
_{\alpha
\beta }^{\lambda }A_{\lambda }\right) .%
\end{array}
\tag{4.4}  \label{4.44}
\end{equation}%
Since $g_{1\lambda }=0$\ and $A_{1}=0$\ on $\Gamma $, $\left( \ref{4.44}%
\right) $ implies that on $\Gamma $ the following equality holds

\begin{equation}
g^{12}A_{2,1}=\Delta -g^{12}A_{1,2}+g^{12}g^{\alpha \lambda
}A_{\alpha }\left( g_{1\lambda ,2}+g_{2\lambda ,1}\right)
-g^{12}g^{\alpha \lambda }A_{\alpha }g_{12,\lambda }-g^{\alpha
\beta }\left( A_{\beta ,\alpha }-\Gamma _{\alpha \beta }^{\lambda
}A_{\lambda }\right) .  \tag{4.5} \label{4.45}
\end{equation}%
Hence, on $\Gamma ,$\ $\Delta =0$ is equivalent to%
\begin{equation*}
g^{12}A_{2,1}=-g^{12}A_{1,2}+g^{12}g^{\alpha \lambda }A_{\alpha
}\left( g_{1\lambda ,2}+g_{2\lambda ,1}\right) -g^{12}g^{\alpha
\lambda }A_{\alpha }g_{12,\lambda }-g^{\alpha \beta }\left(
A_{\beta ,\alpha }-\Gamma _{\alpha \beta }^{\lambda }A_{\lambda
}\right) .
\end{equation*}%
It follows from the above discussion that all necessary data are given on $%
\Gamma $ and all necessary assumptions fulfilled.

We can now sum up the $C^{\infty }$\ resolution of the Goursat
problem for the EYMH system in the following theorem, where the
resolution of the evolution problem is a direct consequence of
Theorem 1 of \cite{24} and the constraints problem is solved by
the method just described above in section 3.

\begin{theorem}
Let $T\in (0,T_{0}]$ be a real number and $\omega \in \left\{
1,2\right\} .$
Let $\underset{\omega }{h}_{33},$ $\underset{\omega }{h}_{34},$\ $\underset{%
\omega }{h}_{44}$\ be $C^{\infty }$\ scalar functions on
$G_{T}^{\omega }$\ such that $\left( \underset{\omega }{h}_{\alpha
\beta }\right) $\ is a
symmetric positive definite matrix with determinant $1$\ at each point of $%
G_{T}^{\omega }$ and $\left( \underset{1}{h}_{33},\underset{1}{h}_{34},%
\underset{1}{h}_{44}\right) =\left( \underset{2}{h}_{33},\underset{2}{h}%
_{34},\underset{2}{h}_{44}\right) $\ on $\Gamma $. Let $\widetilde{\underset{%
\omega }{\Phi }},$\ $\underset{\omega }{\widetilde{A}}_{3}$, $\underset{%
\omega }{\widetilde{A}}_{4}$ be $C^{\infty }$\ functions on
$G_{T}^{\omega }$
such that $\left( \widetilde{\underset{1}{\Phi }},\underset{1}{\widetilde{A}}%
_{3},\underset{1}{\widetilde{A}}_{4}\right) =\left( \widetilde{\underset{2}{%
\Phi }},\underset{2}{\widetilde{A}}_{3},\underset{2}{\widetilde{A}}%
_{4}\right) $\ on $\Gamma $.\ Let $C^{\infty }$ functions
$\widetilde{\Omega
},\ \widetilde{\Omega }_{1},\ \widetilde{\Omega }_{2},\ \widetilde{\underset{%
2}{b}}_{3},\ \widetilde{\underset{2}{b}}_{4},\ \widetilde{\underset{2}{A}}%
_{1}$ be given on $\Gamma .$ Then there exists $T_{1}\in (0,T],$ a unique $%
C^{\infty }$\ scalar function $\Omega $\ on $G_{T_{1}}^{1}\cup
G_{T_{1}}^{2}, $ a unique $C^{\infty }$\ Lorentz metric $g_{ij}$\ on $%
L_{T_{1}},$\ a unique $C^{\infty }$\ Yang-Mills potential $A_{k}$\ on $%
L_{T_{1}}$ and a unique $C^{\infty }$\ Higgs function $\Phi $\ on
$L_{T_{1}}$ such that:

$\left( 1\right) $\ $g_{\alpha \beta }=\Omega h_{\alpha \beta }$\ on $%
G_{T_{1}}^{1}\cup G_{T_{1}}^{2},$ where $h_{\alpha \beta }=\underset{\omega }%
{h}_{\alpha \beta }$\textit{on} $G_{T}^{\omega }.$

$\left( 2\right) $\ $u=\left( g_{ij},A_{k},\Phi \right) $\
satisfies the Einstein-Yang-Mills-Higgs equations on $L_{T_{1}},$

$\left( 3\right) $ the given coordinates on $\mathbb{R}^{4}$ are
standard
coordinates for $g_{ij}$ and the Lorentz gauge condition $\nabla _{k}A^{k}=0$%
\ is satisfied on $L_{T_{1}},$ with $A_{2}=0$ on $G_{T_{1}}^{1}$ and $%
A_{1}=0 $ on $G_{T_{1}}^{2},$

$\left( 4\right) $\ $u=\left( g_{ij},A_{k},\Phi \right) $ induce
the given data on $G_{T_{1}}^{1}\cup G_{T_{1}}^{2},$

$\left( 5\right) $ on $\Gamma $ it holds that$:$ $\Omega =\widetilde{\Omega }%
;$ $\Omega _{,1}=\widetilde{\Omega }_{1};$ $\Omega _{,2}=\widetilde{\Omega }%
_{2};$ $g_{13,2}=\widetilde{\underset{2}{b}}_{3};$ $g_{14,2}=\widetilde{%
\underset{2}{b}}_{4}$ and
$A_{1,2}=\widetilde{\underset{2}{A}}_{1}.$
\end{theorem}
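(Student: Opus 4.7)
The plan is to assemble the theorem from two largely independent ingredients: the constraints machinery developed throughout Section~3 (which produces the admissible initial data on $G_{T_1}^1\cup G_{T_1}^2$) and Rendall's existence theorem for the reduced hyperbolic system on $L_{T_1}$, together with a propagation argument showing that the gauge conditions imposed on the two characteristic hypersurfaces persist throughout $L_{T_1}$.

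First I would fix the free data on $\Gamma$ as prescribed in the conclusion paragraph preceding the theorem: set $g_{12}=-1$, use $\widetilde\Omega,\widetilde\Omega_1,\widetilde\Omega_2$ to pin down $\Omega$, $\Omega_{,1}$, $\Omega_{,2}$ on $\Gamma$, and use $\widetilde{\underset{2}{b}}_3,\widetilde{\underset{2}{b}}_4,\widetilde{\underset{2}{A}}_1$ to fix the relevant transverse derivatives. The compatibility equalities of the free data on $\Gamma$ together with the two copies of equations $(3.6)$--$(3.7)$ on $G_T^1$ and $G_T^2$ then force $g_{11,2}=g_{22,1}=\tfrac14 g^{\alpha\beta}g_{\alpha\beta,1}$ on $\Gamma$, hence $\Gamma^1=\Gamma^2=0$ on $\Gamma$; the algebraic manipulations $(4.1)$--$(4.5)$ show that the only remaining free choices for $g_{2\beta,1}$ and $A_{2,1}$ on $\Gamma$ can be (and are) made so as to ensure $\Gamma^3=\Gamma^4=0$ and $\Delta=0$ on $\Gamma$. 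This produces a shrunken $T_1$ and a data set on $\Gamma$ compatible with the hypotheses of every proposition of Section~3.

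Next I would run the three-step hierarchy of Section~3 on both $G_{T_1}^1$ and $G_{T_1}^2$. On $G_{T_1}^1$: Proposition~3.3 gives the conformity factor $\Omega$ (possibly after shrinking $T_1$), Proposition~3.4 gives $g_{12}$, and Proposition~3.5 upgrades these to the relation $\Gamma^1=0$; Proposition~3.8 then yields $(g_{13},g_{14},A_1)$ and Proposition~3.10 upgrades this to $\Gamma^3=\Gamma^4=0$ and $\Delta=0$; finally Proposition~3.12 supplies $g_{11}$ and Proposition~3.13 yields $\Gamma^2=0$. The analogous construction on $G_{T_1}^2$ (the relevant counterparts being stated in Remarks~3.6, 3.11 and~3.14) furnishes the remaining components. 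At the end of this step I have a full set of $C^\infty$ initial data $(g_{ij},A_k,\Phi)$ on $G_{T_1}^1\cup G_{T_1}^2$ satisfying $\Gamma^k=0$ and $\Delta=0$ on the two characteristic hypersurfaces and matching the prescribed values on $\Gamma$.

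Then I would invoke Rendall's Theorem~1 of~\cite{24} applied to the reduced system $(2.21)$, which is a second-order quasilinear hyperbolic system in harmonic-plus-Lorentz form: this produces, after a further possible shrinking of $T_1$, a unique $C^\infty$ solution $(g_{ij},A_k,\Phi)$ on $L_{T_1}$ taking the constructed data on $G_{T_1}^1\cup G_{T_1}^2$. It remains to promote this solution of the reduced system to a solution of the full EYMH system $(2.8)$. For that I would use Remark~2.6$(ii)$: the Bianchi identities together with the conservation laws $(2.13)$ (justified in Appendix~A) imply, upon substituting the reduced equations into $\nabla^i(R_{ij}-\tfrac12 Rg_{ij}-\rho_{ij})=0$ and $\widehat\nabla_i(\widehat\nabla_jF^{ij}-J^j)=0$, that the quintuple $(\Gamma^1,\dots,\Gamma^4,\Delta)$ satisfies a second-order linear homogeneous hyperbolic system; since this quintuple vanishes on $G_{T_1}^1\cup G_{T_1}^2$ together with its first transverse derivatives (a fact one obtains from the two characteristic constructions and the standard-coordinate relations $(3.1)$--$(3.2)$), uniqueness for the Goursat problem of a linear wave-type system forces $\Gamma^k\equiv 0$ and $\Delta\equiv 0$ on $L_{T_1}$. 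By Remark~2.6$(i)$, the reduced solution is then a genuine solution of $(2.8)$ with the required gauges.

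The main obstacle in this outline is the propagation-of-gauge step: the book-keeping needed to show that the initial values of $\Gamma^k$, $\Delta$ and of their characteristic transverse derivatives vanish on $G_{T_1}^1\cup G_{T_1}^2$ requires the identities $(4.1)$--$(4.5)$ and the standard-coordinate relations $(3.1)$--$(3.2)$, and the linear hyperbolic system satisfied by $(\Gamma^k,\Delta)$ must be derived carefully from Bianchi identities specialized to the EYMH stress-energy and current; the construction steps themselves are controlled by the already-proved Propositions~3.3--3.13, so the subtlety is concentrated in verifying that all Cauchy data for the auxiliary homogeneous system indeed vanish on the characteristic initial surfaces.
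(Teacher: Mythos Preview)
Your outline is correct and follows essentially the same route as the paper: assemble compatible data on $\Gamma$ via the algebraic relations of Section~4, run the three-level hierarchy of Section~3 on each null hypersurface to obtain full initial data with $\Gamma^k=0$ and $\Delta=0$ there, invoke Rendall's Theorem~1 for the reduced system $(2.21)$, and then propagate the gauge conditions by Remark~2.6(ii). One small over-caution: for the Goursat problem associated to the second-order homogeneous linear system satisfied by $(\Gamma^k,\Delta)$, only the \emph{values} on $G_{T_1}^1\cup G_{T_1}^2$ are needed as data (transverse derivatives are not part of characteristic data), so the ``main obstacle'' you flag about vanishing of transverse derivatives does not actually arise---Propositions~3.5, 3.9 and 3.13 already give $\Gamma^k=0$, $\Delta=0$ on the hypersurfaces, which suffices.
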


\section*{Appendix A: The conservation laws for the stress-energy tensor and
the current}

We first show that if $\left( g_{ij},A_{k},\Phi \right) $ is such
that the YMH system is satisfied then
\begin{equation*}
\nabla ^{j}\rho _{ij}=0,
\end{equation*}
where
\begin{equation}
\rho _{ij}=F_{ik}.F_{j}^{\quad k}-\frac{1}{4}g_{ij}F_{kl}.F^{kl}+\widehat{%
\nabla }_{i}\Phi .\widehat{\nabla }_{j}\Phi
-\frac{1}{2}g_{ij}\left( \widehat{\nabla }_{k}\Phi
.\widehat{\nabla }^{k}\Phi +V\left( \Phi ^{2}\right) \right) .
\tag{A.1}
\end{equation}%
It holds that%
\begin{equation}
\nabla ^{j}\rho _{ij}=\nabla _{p}\left( g^{jp}\rho _{ij}\right) .
\tag{A.2}
\end{equation}%
A direct calculation shows that
\begin{equation}
\begin{array}{l}
\nabla _{p}\left( g^{jp}\rho _{ij}\right) =\left( \nabla _{p}F_{ik}+\frac{1}{%
2}\nabla _{i}F_{kp}\right) .F^{pk}+F_{ik}.\nabla _{p}F^{pk} \\
\text{ \ \ \ \ \ \ \ \ \ \ \ \ \ \ \ \ \ }+\left( \nabla
_{p}\widehat{\nabla
}_{i}\Phi -\nabla _{i}\widehat{\nabla }_{p}\Phi \right) .\widehat{\nabla }%
^{p}\Phi +\widehat{\nabla }_{i}\Phi .\nabla _{p}\widehat{\nabla }^{p}\Phi -%
\frac{1}{2}\nabla _{i}\left[ V\left( \Phi ^{2}\right) \right] .%
\end{array}
\tag{A.3}
\end{equation}%
We have to handle the terms that appear in the right hand side (r.h.s) of $%
\left( A.3\right) $. Using the Bianchi identity
\begin{equation}
\widehat{\nabla }_{i}F_{jk}+\widehat{\nabla }_{j}F_{ki}+\widehat{\nabla }%
_{k}F_{ij}=0,  \tag{A.4}
\end{equation}%
we get, after a simple calculation,
\begin{equation}
\left( \nabla _{p}F_{ik}+\frac{1}{2}\nabla _{i}F_{kp}\right)
.F^{pk}=\left( \frac{1}{2}\left[ A_{i},F_{kp}\right] +\left[
A_{p},F_{ik}\right] \right) .F^{kp}  \tag{A.5}
\end{equation}%
The property of the scalar product $\left( \ref{66}\right) $ on
the Lie
algebra $\mathcal{G}$ and the antisymmetry of the Yang-Mills field $F$ yield%
\begin{equation}
\left[ A_{i},F_{kp}\right] .F^{pk}=A_{i}.\left[
F_{kp},F^{kp}\right] =0. \tag{A.6}
\end{equation}%
From $\left( A.5\right) $ and $\left( A.6\right) $ it follows that
\begin{equation}
\left( \nabla _{p}F_{ik}+\frac{1}{2}\nabla _{i}F_{kp}\right)
.F^{pk}=\left[ A_{p},F_{ik}\right] .F^{kp}.  \tag{A.7}
\end{equation}%
For the term $F_{ik}.\nabla _{p}F^{pk}$ of the r.h.s of $\left( A.3\right) $%
,\ we use the Yang-Mills equations $\widehat{\nabla
}_{p}F^{pk}=J^{k}$ to
have%
\begin{equation}
F_{ik}.\nabla _{p}F^{pk}=F_{ik}.\left( \widehat{\nabla
}_{p}F^{pk}-\left[ A_{p},F^{pk}\right] \right) =F_{ik}.\left(
J^{k}-\left[ A_{p},F^{pk}\right] \right) .  \tag{A.8}
\end{equation}%
From the expression $\left( \ref{216d}\right) $ of the current $J^{k},$ $%
\left( A.8\right) $ yields%
\begin{equation}
F_{ik}.\nabla _{p}F^{pk}=F_{ip}.\left[ \Phi ,\widehat{\nabla }^{p}\Phi %
\right] -F_{ik}.\left[ A_{p},F^{pk}\right] .  \tag{A.9}
\end{equation}%
Adding $\left( A.9\right) $ to $\left( A.7\right) $\ and using
once more the
property of the scalar product $\left( \ref{66}\right) $ on the Lie algebra $%
\mathcal{G}$, we gain
\begin{equation}
\left( \nabla _{p}F_{ik}+\frac{1}{2}\nabla _{i}F_{kp}\right)
.F^{pk}+F_{ik}.\nabla _{p}F^{pk}=F_{ip}.\left[ \Phi ,\widehat{\nabla }%
^{p}\Phi \right] .  \tag{A.10}
\end{equation}%
We now handle the terms $\left( \nabla _{p}\widehat{\nabla
}_{i}\Phi -\nabla
_{i}\widehat{\nabla }_{p}\Phi \right) .\widehat{\nabla }^{p}\Phi $ and $%
\widehat{\nabla }_{i}\Phi .\nabla _{p}\widehat{\nabla }^{p}\Phi -\frac{1}{2}%
\nabla _{i}\left[ V\left( \Phi ^{2}\right) \right] $ of the r.h.s
of $\left( A.3\right) $. Thanks to Jacobi identity, a
straightforward calculation shows that
\begin{equation}
\left( \nabla _{p}\widehat{\nabla }_{i}\Phi -\nabla _{i}\widehat{\nabla }%
_{p}\Phi \right) .\widehat{\nabla }^{p}\Phi =\left( \left[ F_{pi},\Phi %
\right] +\left[ A_{i},\widehat{\nabla }_{p}\Phi \right] -\left[ A_{p},%
\widehat{\nabla }_{i}\Phi \right] \right) .\widehat{\nabla
}^{p}\Phi . \tag{A.11}
\end{equation}%
For $\widehat{\nabla }_{i}\Phi .\nabla _{p}\widehat{\nabla
}^{p}\Phi $, we use the Higgs equations $\widehat{\nabla
}_{p}\widehat{\nabla }^{p}\Phi =H$
to have%
\begin{equation}
\widehat{\nabla }_{i}\Phi .\nabla _{p}\widehat{\nabla }^{p}\Phi =\widehat{%
\nabla }_{i}\Phi .\left( \widehat{\nabla }_{p}\widehat{\nabla }^{p}\Phi -%
\left[ A_{p},\widehat{\nabla }^{p}\Phi \right] \right) =\widehat{\nabla }%
_{i}\Phi .\left( H\left( \Phi \right) -\left[ A_{p},\widehat{\nabla }%
^{p}\Phi \right] \right) .  \tag{A.12}
\end{equation}%
From the expression $\left( \ref{220}\right) $\ of $H,$\ $\left(
A.12\right) $ implies
\begin{equation}
\widehat{\nabla }_{i}\Phi .\nabla _{p}\widehat{\nabla }^{p}\Phi =\widehat{%
\nabla }_{i}\Phi .\left( V^{\prime }\left( \Phi ^{2}\right) \Phi
-\left[ A_{p},\widehat{\nabla }^{p}\Phi \right] \right) .
\tag{A.13}
\end{equation}%
For the term $\nabla _{i}\left[ V\left( \Phi ^{2}\right) \right] $
we easily
get%
\begin{equation}
\nabla _{i}\left[ V\left( \Phi ^{2}\right) \right] =2V^{\prime
}\left( \Phi ^{2}\right) \Phi .\Phi _{,i}=2V^{\prime }\left( \Phi
^{2}\right) \Phi .\left( \widehat{\nabla }_{i}\Phi -\left[
A_{i},\Phi \right] \right) . \tag{A.14}
\end{equation}%
From $\left( A.11\right) $, $\left( A.13\right) $ and $\left(
A.14\right) $,
we obtain%
\begin{equation}
\left( \nabla _{p}\widehat{\nabla }_{i}\Phi -\nabla _{i}\widehat{\nabla }%
_{p}\Phi \right) .\widehat{\nabla }^{p}\Phi +\widehat{\nabla
}_{i}\Phi .\nabla _{p}\widehat{\nabla }^{p}\Phi -\frac{1}{2}\nabla
_{i}\left[ V\left(
\Phi ^{2}\right) \right] =\left[ F_{pi},\Phi \right] .\widehat{\nabla }%
^{p}\Phi .  \tag{A.15}
\end{equation}%
In view of $\left( A.10\right) $ and $\left( A.15\right) $, we have%
\begin{equation}
\begin{array}{l}
\left( \nabla _{p}F_{ik}+\frac{1}{2}\nabla _{i}F_{kp}\right)
.F^{pk}+F_{ik}.\nabla _{p}F^{pk} \\
+\left( \nabla _{p}\widehat{\nabla }_{i}\Phi -\nabla _{i}\widehat{\nabla }%
_{p}\Phi \right) .\widehat{\nabla }^{p}\Phi +\widehat{\nabla
}_{i}\Phi .\nabla _{p}\widehat{\nabla }^{p}\Phi -\frac{1}{2}\nabla
_{i}\left[ V\left(
\Phi ^{2}\right) \right] =0.%
\end{array}
\tag{A.16}
\end{equation}%
The conservation law $\nabla ^{j}\rho _{ij}=0$ now follows from
$\left( A.2\right) ,$\ $\left( A.3\right) $\ and $\left(
A.16\right) $.

The proof of the conservation law $\widehat{\nabla }_{k}J^{k}=0$
for the current $J^{k}=\left[ \Phi ,\widehat{\nabla }^{k}\Phi
\right] $ is obvious. From $\widehat{\nabla }_{k}\widehat{\nabla
}^{k}\Phi =H=V^{\prime }\left(
\Phi ^{2}\right) \Phi $ and $\left[ \widehat{\nabla }_{k}\Phi ,\widehat{%
\nabla }^{k}\Phi \right] =0$,\ a direct calculation yields
\begin{equation*}
\widehat{\nabla }_{k}J^{k}=\left[ \widehat{\nabla }_{k}\Phi
,\widehat{\nabla
}^{k}\Phi \right] +\left[ \Phi ,\widehat{\nabla }_{k}\widehat{\nabla }%
^{k}\Phi \right] =\left[ \Phi ,H\right] =\left[ \Phi ,V^{\prime
}\left( \Phi ^{2}\right) \Phi \right] =V^{\prime }\left( \Phi
^{2}\right) \left[ \Phi ,\Phi \right] =0.
\end{equation*}

\section*{Appendix B: Proof of relation $\left( \protect\ref{4.16}\right) $}

By virtue of $\left( \ref{217}\right) $, it holds that%
\begin{equation}
R_{22}=\Gamma _{22,k}^{k}-\Gamma _{2k,2}^{k}+\Gamma
_{kl}^{k}\Gamma _{22}^{l}-\Gamma _{2l}^{k}\Gamma _{2k}^{l}.
\tag{B.1}
\end{equation}%
We compute each term of the r.h.s of $\left( B.1\right) $\ on
$G^{1}$ by
using the conditions $\left( \ref{4.14}\right) $ and $\left( \ref{4.15}%
\right) $ to gain%
\begin{eqnarray*}
2\Gamma _{22,1}^{1} &=&g_{,1}^{1m}\left( 2g_{2m,2}-g_{22,m}\right)
+g^{1m}\left( 2g_{2m,21}-g_{22,m1}\right) \\
&=&g_{,1}^{11}\left( 2g_{21,2}-g_{22,1}\right) +g^{12}\left(
g_{22,21}\right) ,
\end{eqnarray*}%
\begin{eqnarray*}
2\Gamma _{22,2}^{2} &=&g_{,2}^{2m}\left( 2g_{2m,2}-g_{22,m}\right)
+g^{2m}\left( 2g_{2m,22}-g_{22,m2}\right) \\
&=&g_{,2}^{21}\left( 2g_{21,2}-g_{22,1}\right) +g^{21}\left(
2g_{21,22}-g_{22,12}\right) ,
\end{eqnarray*}%
\begin{equation*}
2\Gamma _{22,\alpha }^{\alpha }=g_{,\alpha }^{\alpha m}\left(
2g_{2m,2}-g_{22,m}\right) +g^{\alpha m}\left( 2g_{2m,2\alpha
}-g_{22,m\alpha }\right) =0,\text{\quad }\alpha =3,4.
\end{equation*}%
Thus%
\begin{equation}
2\Gamma _{22,k}^{k}=\left( g_{,1}^{11}+g_{,2}^{12}\right) \left(
2g_{12,2}-g_{22,1}\right) +2g^{12}g_{12,22}.  \tag{B.2}
\end{equation}%
By expanding the equalities $\left( g^{1i}g_{2i}\right) _{,1}=0$
and $\left(
g^{2i}g_{2i}\right) _{,2}=0$ on $G^{1}$,\ we get the respective equalities%
\begin{equation}
g_{,1}^{11}=-\left( g^{12}\right) ^{2}g_{22,1},  \tag{B.3}
\end{equation}%
and%
\begin{equation}
g_{,2}^{12}=-\left( g^{12}\right) ^{2}g_{12,2}.  \tag{B.4}
\end{equation}%
Then, considering $\left( B.2\right) $, $\left( B.3\right) $ and
$\left(
B.4\right) $, we obtain%
\begin{equation}
2\Gamma _{22,k}^{k}=-\left( g^{12}\right) ^{2}\left(
g_{12,2}+g_{22,1}\right) \left( 2g_{12,2}-g_{22,1}\right)
+2g^{12}g_{12,22}. \tag{B.5}
\end{equation}%
Similarly we obtain
\begin{equation}
2\Gamma _{2k,2}^{k}=-2\left( g^{12}g_{12,2}\right)
^{2}+2g^{12}g_{12,22}+\left( g^{\alpha \beta }g_{\alpha \beta
,2}\right) _{,2}.  \tag{B.6}
\end{equation}%
$\left( B.5\right) $ and $\left( B.6\right) $ yield%
\begin{equation}
\Gamma _{22,k}^{k}-\Gamma _{2k,2}^{k}=\frac{1}{2}\left(
g^{12}\right) ^{2}g_{22,1}\left( g_{22,1}-g_{12,2}\right)
-\frac{1}{2}\left( g^{\alpha \beta }g_{\alpha \beta ,2}\right)
_{,2}.  \tag{B.7}
\end{equation}%
It also holds that
\begin{equation*}
4\Gamma _{kl}^{k}\Gamma _{22}^{l}=4\left( \Gamma _{k1}^{k}\Gamma
_{22}^{1}+\Gamma _{k2}^{k}\Gamma _{22}^{2}+\Gamma _{k\alpha
}^{k}\Gamma _{22}^{\alpha }\right) .
\end{equation*}%
Straightforward computations on $G^{1}$ give%
\begin{equation*}
\begin{array}{l}
2\Gamma _{22}^{1}=0,\quad 2\Gamma _{22}^{2}=g^{12}\left(
2g_{12,2}-g_{22,1}\right) ,\quad 2\Gamma _{22}^{\alpha }=0, \\
2\Gamma _{12}^{1}=g^{12}g_{22,1},\quad 2\Gamma _{2\alpha }^{\alpha
}=g^{\alpha \beta }g_{\alpha \beta ,2}.%
\end{array}%
\end{equation*}%
This implies
\begin{equation*}
2\Gamma _{k2}^{k}=2g^{12}g_{12,2}+g^{\alpha \beta }g_{\alpha \beta
,2}.
\end{equation*}%
Thus%
\begin{equation}
4\Gamma _{kl}^{k}\Gamma _{22}^{l}=g^{12}\left(
2g_{12,2}-g_{22,1}\right) \left( 2g^{12}g_{12,2}+g^{\alpha \beta
}g_{\alpha \beta ,2}\right) . \tag{B.8}
\end{equation}%
In the same way we have
\begin{equation*}
4\Gamma _{2l}^{k}\Gamma _{2k}^{l}=4\left( \Gamma _{12}^{k}\Gamma
_{2k}^{1}+\Gamma _{22}^{k}\Gamma _{2k}^{2}+\Gamma _{2\alpha
}^{k}\Gamma _{2k}^{\alpha }\right) ,
\end{equation*}%
with%
\begin{eqnarray*}
4\Gamma _{12}^{k}\Gamma _{2k}^{1} &=&4\left( \Gamma
_{12}^{1}\Gamma _{12}^{1}+\Gamma _{12}^{2}\Gamma _{22}^{1}+\Gamma
_{12}^{\alpha }\Gamma
_{2\alpha }^{1}\right) =\left( g^{12}g_{22,1}\right) ^{2}, \\
4\Gamma _{22}^{k}\Gamma _{2k}^{2} &=&4\left( \Gamma
_{22}^{1}\Gamma _{12}^{2}+\Gamma _{22}^{2}\Gamma _{22}^{2}+\Gamma
_{22}^{\alpha }\Gamma _{2\alpha }^{2}\right) =\left[ g^{12}\left(
2g_{12,2}-g_{22,1}\right) \right]
^{2}, \\
4\Gamma _{2\alpha }^{k}\Gamma _{2k}^{\alpha } &=&4\left( \Gamma
_{2\alpha }^{1}\Gamma _{12}^{\alpha }+\Gamma _{2\alpha }^{2}\Gamma
_{22}^{\alpha }+\Gamma _{2\alpha }^{\beta }\Gamma _{2\beta
}^{\alpha }\right) =\left( g^{\beta \lambda }g_{\lambda \alpha
,2}\right) \left( g^{\alpha \mu }g_{\mu \beta ,2}\right) ,
\end{eqnarray*}%
as simple calculation on $G^{1}$ shows that%
\begin{eqnarray*}
\Gamma _{22}^{1} &=&\Gamma _{2\alpha }^{1}=\Gamma _{22}^{\alpha
}=0,\quad 2\Gamma _{12}^{1}=g^{12}g_{22,1},\quad 2\Gamma
_{22}^{2}=g^{12}\left(
2g_{12,2}-g_{22,1}\right) , \\
2\Gamma _{2\alpha }^{\beta } &=&g^{\beta m}\left( g_{m\alpha
,2}+g_{2m,\alpha }-g_{2\alpha ,m}\right) =g^{\beta \lambda
}g_{\lambda \alpha ,2}.
\end{eqnarray*}%
Now the following relations hold
\begin{equation*}
\left( g^{\beta \lambda }g_{\lambda \alpha }\right)
_{,2}=0\Leftrightarrow g^{\beta \lambda }g_{\lambda \alpha
,2}+g_{,2}^{\beta \lambda }g_{\lambda \alpha }=0\Leftrightarrow
g^{\beta \lambda }g_{\lambda \alpha ,2}=-g_{,2}^{\beta \lambda
}g_{\lambda \alpha }.
\end{equation*}%
Therefore
\begin{equation*}
4\Gamma _{2\alpha }^{k}\Gamma _{2k}^{\alpha }=-g_{,2}^{\beta
\lambda }g_{\lambda \alpha }g^{\alpha \mu }g_{\mu \beta
,2}=-g_{,2}^{\beta \lambda }g_{\lambda \beta ,2}.
\end{equation*}%
Thus%
\begin{equation}
4\Gamma _{2l}^{k}\Gamma _{2k}^{l}=\left( g^{12}g_{22,1}\right)
^{2}+\left[ g^{12}\left( 2g_{12,2}-g_{22,1}\right) \right]
^{2}-g_{,2}^{\beta \lambda }g_{\lambda \beta ,2}.  \tag{B.9}
\end{equation}%
$\left( B.8\right) $ and $\left( B.9\right) $ give%
\begin{align}
\Gamma _{kl}^{k}\Gamma _{22}^{l}-\Gamma _{2l}^{k}\Gamma _{2k}^{l}& =\frac{1}{%
4}g^{12}g^{\alpha \beta }g_{\alpha \beta ,2}\left(
2g_{12,2}-g_{22,1}\right)
\notag \\
& +\frac{1}{2}\left( g^{12}\right) ^{2}g_{22,1}\left(
g_{12,2}-g_{22,1}\right) +\frac{1}{4}g_{,2}^{\beta \lambda
}g_{\lambda \beta ,2}.  \tag{B.10}
\end{align}%
In view of $\left( B.1\right) $, $\left( B.7\right) $ and $\left(
B.10\right) $, we finally gain the\ first\ equality\ of\ $\left( \ref{4.16}%
\right) $.

On\ the\ other\ hand, in view of\ $\left( 2.22\right) $, we\ have
\begin{equation}
\tau _{22}=F_{2k}.F_{2}^{\text{\quad }k}-\frac{1}{4}g_{22}F_{kl}.F^{kl}+%
\widehat{\nabla }_{2}\Phi .\widehat{\nabla }_{2}\Phi +\frac{1}{2}%
g_{22}V\left( \Phi ^{2}\right) .  \tag{B.11}
\end{equation}%
On $G^{1}$, given that (see $\left( \ref{4.14}\right) $ and $\left( \ref%
{4.15}\right) $) $g^{11}=g^{1\alpha }=0,$ $g_{22}=g_{2\alpha }=0$, and $%
A_{2}=0$, $\left( B.11\right) $ yields the\ second\ equality\ of $\left( \ref%
{4.16}\right) $ by\ a\ direct\ and\ simple\ calculation.

\section*{Appendix C: Proof of Proposition 3.7}

\textbf{Proof of item }$\left( i\right) $

\textbf{Proof of\ }$\left( \ref{4.24}\right) $\textbf{. }By definition%
\begin{equation}
R_{2\alpha }=\Gamma _{2\alpha ,k}^{k}-\Gamma _{2k,\alpha
}^{k}+\Gamma _{lk}^{k}\Gamma _{2\alpha }^{l}-\Gamma _{l\alpha
}^{k}\Gamma _{2k}^{l}. \tag{C.1}
\end{equation}%
We compute each term of the r.h.s of $\left( C.1\right) $\ on
$G^{1}$ by
using the conditions $\left( \ref{4.14}\right) $ and $\left( \ref{4.15}%
\right) $ to gain
\begin{align}
2\Gamma _{2\alpha ,k}^{k}& =\left( g_{,1}^{11}+g_{,2}^{12}\right)
\left( g_{12,\alpha }+g_{1\alpha ,2}-g_{2\alpha ,1}\right)
+g^{12}\left( g_{22,1\alpha }+g_{12,2\alpha }+g_{1\alpha
,22}-g_{2\alpha ,12}\right)
\notag \\
& +\left( g_{,1}^{1\beta }+g_{,2}^{2\beta }\right) g_{\alpha \beta
,2}+g^{2\beta }g_{\alpha \beta ,22}+\left( g^{\lambda \beta
}g_{\alpha \beta ,2}\right) _{,\lambda }.  \tag{C.1a}
\end{align}%
By expanding the equalities $\left( g^{i\beta }g_{2i}\right) _{,1}=0$ and $%
\left( g^{i\beta }g_{1i}\right) _{,2}=0$ on $G^{1}$,\ we get
\begin{equation}
g_{,1}^{1\beta }=-g^{12}\left( 2g^{2\beta }g_{12,2}+g^{\lambda
\beta }g_{2\lambda ,1}\right) ,\quad g_{,2}^{2\beta
}=-g^{12}\left( g^{2\beta }g_{12,2}+g^{\lambda \beta }g_{1\lambda
,2}+g_{,2}^{\lambda \beta }g_{1\lambda }\right) .  \tag{C.1b}
\end{equation}%
$\left( B.3\right) ,$ $\left( B.4\right) ,$ $\left( C.1a\right) $\ and $%
\left( C.1b\right) $\ yield%
\begin{align}
2\Gamma _{2\alpha ,k}^{k}& =-3\left( g^{12}\right)
^{2}g_{12,2}\left(
g_{12,\alpha }+g_{1\alpha ,2}-g_{2\alpha ,1}\right)   \notag \\
& +g^{12}\left( g_{22,1\alpha }+g_{12,2\alpha }+g_{1\alpha
,22}-g_{2\alpha
,12}\right)   \notag \\
& -g^{12}\left[ 3g^{2\beta }g_{12,2}+g^{\lambda \beta }\left(
g_{1\lambda ,2}+g_{2\lambda ,1}\right) +g_{,2}^{\lambda \beta
}g_{1\lambda }\right]
g_{\alpha \beta ,2}  \notag \\
& +g^{2\beta }g_{\alpha \beta ,22}+\left( g^{\lambda \beta
}g_{\alpha \beta ,2}\right) _{,\lambda }.  \tag{C.2a}
\end{align}%
We now compute\ $\Gamma _{2k}^{k}$ and $\Gamma _{2k,\alpha }^{k}$ on\ $G^{1}$%
\ by using equalities $g^{\beta \lambda }g_{\lambda \beta
,2}=4g^{12}g_{12,2}
$ and $2g_{12,2}-g_{22,1}=0$ (see proof of Proposition 3.5),\ to have%
\begin{equation}
2\Gamma _{2k}^{k}=6g^{12}g_{12,2},\quad \Gamma _{2k,\alpha
}^{k}=3\left( g^{12}g_{12,2}\right) _{,\alpha }.  \tag{C.2b}
\end{equation}%
$\left( C.2a\right) $ and $\left( C.2b\right) $ give%
\begin{align}
\Gamma _{2\alpha ,k}^{k}-\Gamma _{2k,\alpha }^{k}&
=-\frac{3}{2}\left(
g^{12}\right) ^{2}g_{12,2}\left( g_{1\alpha ,2}-g_{2\alpha ,1}\right) +\frac{%
1}{2}g^{12}\left( g_{1\alpha ,22}-g_{2\alpha ,12}\right)   \notag \\
& -\frac{1}{2}g^{12}\left[ 3g^{2\beta }g_{12,2}+g^{\lambda \beta
}\left(
g_{1\lambda ,2}+g_{2\lambda ,1}\right) +g_{,2}^{\lambda \beta }g_{1\lambda }%
\right] g_{\alpha \beta ,2}  \notag \\
& +\frac{1}{2}g^{2\beta }g_{\alpha \beta ,22}+\frac{1}{2}\left(
g^{\lambda \beta }g_{\alpha \beta ,2}\right) _{,\lambda }-3\left(
g^{12}g_{12,2}\right)
_{,\alpha }  \notag \\
& -\frac{3}{2}\left( g^{12}\right) ^{2}g_{12,2}g_{12,\alpha }+\frac{1}{2}%
g^{12}\left( g_{22,1\alpha }+g_{12,2\alpha }\right) .  \tag{C.2c}
\end{align}%
In addition, direct calculations on $G^{1}$ give%
\begin{align}
4\Gamma _{lk}^{k}\Gamma _{2\alpha }^{l}& =6g^{12}g_{12,2}\left[
g^{21}\left( g_{21,\alpha }+g_{1\alpha ,2}-g_{2\alpha ,1}\right)
+g^{2\beta }g_{\beta
\alpha ,2}\right]   \notag \\
& +g^{\beta \lambda }g_{\lambda \alpha ,2}\left[
2g^{12}g_{12,\beta }+g^{\lambda \mu }\left( g_{\mu \beta ,\lambda
}+g_{\lambda \mu ,\beta }-g_{\beta \lambda ,\mu }\right) \right] ,
\tag{C.3a}
\end{align}%
and
\begin{align}
4\Gamma _{l\alpha }^{k}\Gamma _{2k}^{l}& =2\left( g^{12}\right)
^{2}g_{12,2}\left( g_{12,\alpha }+g_{2\alpha ,1}-g_{1\alpha
,2}\right)
\notag \\
& -g^{12}g_{\lambda \alpha ,2}\left[ 2g^{2\lambda
}g_{12,2}+g^{\lambda \beta
}\left( g_{1\beta ,2}+g_{2\beta ,1}-g_{12,\beta }\right) \right]   \notag \\
& +g^{\beta \theta }g_{\theta \alpha ,2}\left[ g^{12}\left(
g_{1\beta ,2}+g_{12,\beta }-g_{2\beta ,1}\right) +g^{2\mu }g_{\mu
\beta ,2}\right]
\notag \\
& +g^{\lambda \theta }g_{\theta \beta ,2}\left[ -g^{\beta
2}g_{\lambda \alpha ,2}+g^{\beta \mu }\left( g_{\lambda \mu
,\alpha }+g_{\mu \alpha ,\lambda }-g_{\lambda \alpha ,\mu }\right)
\right] .  \tag{C.3b}
\end{align}%
From the relation $\left( g^{\beta \lambda }g_{\alpha \beta }\right) _{,2}=0$%
, $\left( C.3a\right) $ and $\left( C.3b\right) $ yield
\begin{align}
\Gamma _{lk}^{k}\Gamma _{2\alpha }^{l}-\Gamma _{l\alpha
}^{k}\Gamma _{2k}^{l}& =2\left( g^{12}\right) ^{2}g_{12,2}\left(
g_{1\alpha
,2}-g_{2\alpha ,1}\right)   \notag \\
& +2g^{12}g_{12,2}g^{2\beta }g_{\beta \alpha
,2}+\frac{1}{2}g^{\beta \lambda
}g_{\alpha \beta ,2}g_{2\lambda ,1}  \notag \\
& +\frac{1}{2}g_{,2}^{\beta \lambda }\left( g_{\lambda \beta
,\alpha }+g_{\lambda \alpha ,\beta }\right) +\left( g^{12}\right)
^{2}g_{12,2}g_{12,\alpha }.  \tag{C.3c}
\end{align}%
In view of $\left( C.1\right) $,$\ \left( C.2c\right) $ and
$\left(
C.3c\right) $ give%
\begin{align}
R_{2\alpha }& =\frac{1}{2}\left( g^{12}\right) ^{2}g_{12,2}\left(
g_{1\alpha ,2}-g_{2\alpha ,1}\right) +\frac{1}{2}g^{12}\left(
g_{1\alpha
,22}-g_{2\alpha ,12}\right)   \notag \\
& +\frac{1}{2}g^{12}g_{12,2}g^{2\beta }g_{\beta \alpha ,2}-\frac{1}{2}g^{12}%
\left[ g^{\lambda \beta }g_{1\lambda ,2}+g_{,2}^{\lambda \beta }g_{1\lambda }%
\right] g_{\alpha \beta ,2}  \notag \\
& +\frac{1}{2}g^{2\beta }g_{\alpha \beta ,22}+\frac{1}{2}\left(
g^{\lambda \beta }g_{\alpha \beta ,2}\right) _{,\lambda }-3\left(
g^{12}g_{12,2}\right)
_{,\alpha }  \notag \\
& -\frac{1}{2}\left( g^{12}\right) ^{2}g_{12,2}g_{12,\alpha }+\frac{1}{2}%
g^{12}\left( g_{22,1\alpha }+g_{12,2\alpha }\right) +\frac{1}{2}%
g_{,2}^{\beta \lambda }\left( g_{\lambda \beta ,\alpha
}+g_{\lambda \alpha ,\beta }\right) .  \tag{C.4}
\end{align}

\textbf{Calculation of }$\tau _{2\alpha }$

In view of\ $\left( 2.22\right) \ $it holds that%
\begin{equation}
\tau _{ij}=F_{ik}.F_{j}^{\text{\quad }k}-\frac{1}{4}g_{ij}F_{kl}.F^{kl}+%
\widehat{\nabla }_{i}\Phi .\widehat{\nabla }_{j}\Phi +\frac{1}{2}%
g_{ij}V\left( \Phi ^{2}\right) .  \tag{C.5a}
\end{equation}%
This gives, since $g_{2\alpha }=0$ on $G^{1}$,
\begin{equation}
\tau _{2\alpha }=F_{2k}F_{\alpha }^{\text{\quad }k}+\widehat{\nabla }%
_{2}\Phi .\widehat{\nabla }_{\alpha }\Phi .  \tag{C.5b}
\end{equation}%
$\left( C.5b\right) $ reads%
\begin{equation}
\tau _{2\alpha }=F_{21}.F_{\alpha }^{\text{\quad }1}+F_{2\beta
}.F_{\alpha }^{\text{\quad }\beta }+\widehat{\nabla }_{2}\Phi
.\widehat{\nabla }_{\alpha }\Phi ,  \tag{C.5c}
\end{equation}%
with%
\begin{align}
F_{21}& =A_{1,2}-A_{2,1}+\left[ A_{2},A_{1}\right] ,\quad F_{\alpha }^{\text{%
\quad }1}=g^{1i}F_{\alpha i}=g^{12}F_{\alpha 2},  \notag \\
F_{\alpha }^{\text{\quad }\beta }& =g^{\beta i}F_{\alpha
i}=g^{2\beta }F_{\alpha 2}+g^{\beta \lambda }F_{\alpha \lambda
}=-g^{12}g^{\beta \lambda }g_{1\lambda }F_{\alpha 2}+g^{\beta
\lambda }F_{\alpha \lambda }.  \tag{C.6}
\end{align}%
$\left( C.5c\right) $ and $\left( C.6\right) $ give%
\begin{align}
\tau _{2\alpha }& =-g^{12}F_{2\alpha }.\left(
A_{1,2}-A_{2,1}+\left[ A_{2},A_{1}\right] \right) +g^{12}g^{\beta
\lambda }F_{2\alpha }.F_{2\beta
}g_{1\lambda }-g^{\beta \lambda }F_{\lambda \alpha }.F_{2\beta }  \notag \\
& +\left( \Phi _{,2}+\left[ A_{2},\Phi \right] \right) .\left(
\Phi _{,\alpha }+\left[ A_{\alpha },\Phi \right] \right) .
\tag{C.7}
\end{align}%
Now
\begin{equation}
F_{2\alpha }=A_{\alpha ,2}-A_{2,\alpha }+\left[ A_{2},A_{\alpha }\right] ,%
\text{ }F_{\lambda \alpha }=A_{\alpha ,\lambda }-A_{\lambda
,\alpha }+\left[ A_{\lambda },A_{\alpha }\right] .  \tag{C.8}
\end{equation}%
In\ view\ of\ $\left( C.8\right) $,\ the assumption $A_{2}=0$ on
$G^{1}$ implies that $F_{2\alpha }=A_{\alpha ,2}$ and $\left[
A_{2},A_{1}\right] =$\
$\left[ A_{2},\Phi \right] =0$ on $G^{1}$. It therefore follows from\ $%
\left( C.7\right) $\ and $\left( C.8\right) $\ that%
\begin{align}
\tau _{2\alpha }& =-g^{12}A_{\alpha ,2}\left(
A_{1,2}-A_{2,1}\right)
+g^{12}g^{\beta \lambda }A_{\alpha ,2}.A_{\beta ,2}g_{1\lambda }  \notag \\
& -g^{\beta \lambda }A_{\beta ,2}\left( A_{\alpha ,\lambda
}-A_{\lambda ,\alpha }+\left[ A_{\lambda },A_{\alpha }\right]
\right) +\left( \Phi _{,2}\right) .\left( \Phi _{,\alpha }+\left[
A_{\alpha },\Phi \right] \right) .  \tag{C.9}
\end{align}%
Similarly, the following equality holds on\textit{\ }$G^{1}$:%
\begin{equation}
\Gamma ^{\beta }=g^{\beta \lambda }g^{12}\left( g_{\lambda
1,2}+g_{\lambda 2,1}\right) +g_{,2}^{\beta \lambda
}g^{12}g_{1\lambda }+\frac{1}{2}\left[ -2g^{\beta \lambda
}g^{12}g_{12,\lambda }+g^{\beta \lambda }g^{\mu \theta }\left(
2g_{\lambda \mu ,\theta }-g_{\mu \theta ,\lambda }\right) \right]
. \tag{C.10}
\end{equation}%
Thus%
\begin{align}
\Gamma _{,2}^{\beta }& =g_{,2}^{\beta \lambda }g^{12}\left(
g_{\lambda 1,2}+g_{\lambda 2,1}\right) +g^{\beta \lambda }\left[
g_{,2}^{12}\left( g_{\lambda 1,2}+g_{\lambda 2,1}\right)
+g^{12}\left( g_{\lambda
1,22}+g_{\lambda 2,12}\right) \right]  \notag \\
& +g_{,22}^{\beta \lambda }g^{12}g_{1\lambda }+g_{,2}^{\beta
\lambda }\left[ g_{,2}^{12}g_{1\lambda }+g^{12}g_{1\lambda
,2}\right] +\frac{1}{2}\left[ -2g^{\beta \lambda
}g^{12}g_{12,\lambda }+g^{\beta \lambda }g^{\mu \theta }\left(
2g_{\lambda \mu ,\theta }-g_{\mu \theta ,\lambda }\right) \right]
_{,2}.  \tag{C.11}
\end{align}%
Replacing $g_{,2}^{12}$ by its expression given by $\left( B.4\right) $ gives%
\begin{align}
\Gamma _{,2}^{\beta }& =g_{,2}^{\beta \lambda }g^{12}\left(
g_{\lambda 1,2}+g_{\lambda 2,1}\right) +g^{\beta \lambda }\left[
-\left( g^{12}\right) ^{2}g_{12,2}\left( g_{\lambda
1,2}+g_{\lambda 2,1}\right) +g^{12}\left(
g_{\lambda 1,22}+g_{\lambda 2,12}\right) \right]  \notag \\
& +g_{,22}^{\beta \lambda }g^{12}g_{1\lambda }+g_{,2}^{\beta
\lambda }\left[ -\left( g^{12}\right) ^{2}g_{12,2}g_{1\lambda
}+g^{12}g_{1\lambda ,2}\right]
\notag \\
& +\frac{1}{2}\left[ -2g^{\beta \lambda }g^{12}g_{12,\lambda
}+g^{\beta \lambda }g^{\mu \theta }\left( 2g_{\lambda \mu ,\theta
}-g_{\mu \theta ,\lambda }\right) \right] _{,2}.  \tag{C.12}
\end{align}%
Then
\begin{align}
g_{\alpha \beta }\Gamma _{,2}^{\beta }& =g_{\alpha \beta
}g_{,2}^{\beta \lambda }g^{12}\left( g_{\lambda 1,2}+g_{\lambda
2,1}\right) +g_{\alpha \beta }g^{\beta \lambda }\left[ -\left(
g^{12}\right) ^{2}g_{12,2}\left( g_{\lambda 1,2}+g_{\lambda
2,1}\right) +g^{12}\left( g_{\lambda
1,22}+g_{\lambda 2,12}\right) \right]  \notag \\
& +g_{\alpha \beta }g_{,22}^{\beta \lambda }g^{12}g_{1\lambda
}+g_{\alpha \beta }g_{,2}^{\beta \lambda }\left[ -\left(
g^{12}\right)
^{2}g_{12,2}g_{1\lambda }+g^{12}g_{1\lambda ,2}\right]  \notag \\
& +\frac{1}{2}g_{\alpha \beta }\left[ -2g^{\beta \lambda
}g^{12}g_{12,\lambda }+g^{\beta \lambda }g^{\mu \theta }\left(
2g_{\lambda \mu ,\theta }-g_{\mu \theta ,\lambda }\right) \right]
_{,2}.  \tag{C.13}
\end{align}%
Using $\left( C.4\right) $ and $\left( C.13\right) $, we finally gain%
\begin{align}
R_{2\alpha }+\frac{1}{2}g_{\alpha \beta }\Gamma _{,2}^{\beta }&
=g^{12}g_{1\alpha ,22}-\left( g^{12}\right) ^{2}g_{12,2}g_{2\alpha ,1}-\frac{%
1}{2}g_{\alpha \beta ,2}g^{\beta \lambda }g^{12}\left(
3g_{1\lambda
,2}+g_{2\lambda ,1}\right)  \notag \\
& +\frac{1}{2}g_{\alpha \beta }\left( g^{12}g_{,2}^{\beta \lambda
}\right) _{,2}g_{1\lambda }-\frac{1}{2}g^{12}g_{,2}^{\lambda \beta
}g_{\alpha \beta ,2}g_{1\lambda }-\frac{1}{2}\left( g^{12}\right)
^{2}g_{12,2}g_{\alpha \beta
,2}g^{\beta \lambda }g_{1\lambda }  \notag \\
& -\frac{1}{2}g^{12}g^{\beta \lambda }g_{\alpha \beta ,22}g_{1\lambda }+%
\frac{1}{2}\left( g^{\lambda \beta }g_{\alpha \beta ,2}\right)
_{,\lambda }-3\left( g^{12}g_{12,2}\right) _{,\alpha
}-\frac{1}{2}\left( g^{12}\right)
^{2}g_{12,2}g_{12,\alpha }  \notag \\
& +\frac{1}{2}g^{12}\left( g_{22,1\alpha }+g_{12,2\alpha }\right) +\frac{1}{2%
}g_{,2}^{\beta \lambda }\left( g_{\lambda \beta ,\alpha
}+g_{\lambda \alpha
,\beta }\right)  \notag \\
& +\frac{1}{2}g_{\alpha \beta }\left[ -2g^{\beta \lambda
}g^{12}g_{12,\lambda }+g^{\beta \lambda }g^{\mu \theta }\left(
2g_{\lambda \mu ,\theta }-g_{\mu \theta ,\lambda }\right) \right]
_{,2}.  \tag{C.14}
\end{align}%
Now $\left( C.10\right) $\ implies that%
\begin{align}
g^{\beta \lambda }g^{12}g_{2\lambda ,1}& =\Gamma ^{\beta
}-g^{\beta \lambda
}g^{12}g_{1\lambda ,2}-g_{,2}^{\beta \lambda }g^{12}g_{1\lambda }  \notag \\
& -\frac{1}{2}\left[ -2g^{\beta \lambda }g^{12}g_{12,\lambda
}+g^{\beta \lambda }g^{\mu \theta }\left( 2g_{\lambda \mu ,\theta
}-g_{\mu \theta ,\lambda }\right) \right] ,  \tag{C.15}
\end{align}%
and%
\begin{align}
g^{12}g_{2\alpha ,1}& =g_{\alpha \beta }\Gamma ^{\beta
}-g^{12}g_{1\alpha
,2}-g_{\alpha \beta }g_{,2}^{\beta \lambda }g^{12}g_{1\lambda }  \notag \\
& -\frac{1}{2}\left[ -2g^{12}g_{12,\alpha }+g^{\mu \theta }\left(
2g_{\alpha \mu ,\theta }-g_{\mu \theta ,\alpha }\right) \right] .
\tag{C.16}
\end{align}%
The insertion of $\left( C.15\right) $ and $\left( C.16\right) $
in $\left( C.14\right) $ yields the expected relation $\left(
\ref{4.24}\right) $.

\textbf{Proof of\ }$\left( \ref{4.25}\right) $\textbf{. }From
$\left(
2.22\right) $ we have%
\begin{equation}
\begin{array}{l}
LA_{2}=g^{ik}A_{2,ik}+g_{,2}^{ki}A_{k,i}+g^{ik}\left[
A_{k},A_{2}\right]
_{,i} \\
+g_{j2}\left[ \left( g^{ik}g^{jl}\right) _{,i}\left[
A_{l,k}-A_{k,l}+\left[
A_{k},A_{l}\right] \right] +\Gamma _{im}^{i}F^{mj}+\Gamma _{im}^{j}F^{im}+%
\left[ A_{i},F^{ij}\right] \right] .%
\end{array}
\tag{C.17}
\end{equation}

\textbf{Calculation of }$g^{ik}A_{2,ik},$\textbf{\ }$g_{,2}^{ki}A_{k,i}$%
\textbf{\ and }$g^{ik}\left[ A_{k},A_{2}\right] _{,i}$. It\ holds\ that%
\begin{equation}
g^{ik}A_{2,ik}=2g^{12}A_{2,12}+g^{22}A_{2,22}+2g^{2\alpha
}A_{2,2\alpha }+g^{\alpha \beta }A_{2,\alpha \beta },\quad \alpha
,\beta \in \left\{ 3,4\right\} .  \tag{C.18}
\end{equation}%
The expression of $g^{22}$ in terms of $g_{1\lambda }$ and
$g_{11}$ via the equalities
\begin{equation*}
g^{2i}g_{1i}=0\text{ and }g^{2\alpha }=-g^{12}g^{\alpha \lambda
}g_{1\lambda }
\end{equation*}%
gives
\begin{equation}
g^{22}=-\left( g^{12}\right) ^{2}g_{11}+\left( g^{12}\right)
^{2}g^{\alpha \lambda }g_{1\alpha }g_{1\lambda }.  \tag{C.19}
\end{equation}%
Using the assumption $A_{2}=0$ on $G^{1}$,\ $\left( C.18\right) $\ yields%
\begin{equation}
g^{ik}A_{2,ik}=2g^{12}A_{2,12}.  \tag{C.20}
\end{equation}%
The calculation of $g_{,2}^{ki}A_{k,i}$ gives
\begin{equation}
g_{,2}^{ki}A_{k,i}=g_{,2}^{12}\left( A_{1,2}+A_{2,1}\right)
+g_{,2}^{2\alpha }\left( A_{\alpha ,2}+A_{2,\alpha }\right) .
\tag{C.21}
\end{equation}%
We calculate $g_{,2}^{2\alpha }$ by using the relation $g^{2\alpha
}=-g^{12}g^{\alpha \lambda }g_{1\lambda }$ satisfied on $G^{1}$ to
have
\begin{equation}
g_{,2}^{2\alpha }=\left( \left( g^{12}\right)
^{2}g_{12,2}g^{\alpha \lambda }-g^{12}g_{,2}^{\alpha \lambda
}\right) g_{1\lambda }-g^{12}g^{\alpha \lambda }g_{1\lambda ,2}.
\tag{C.22}
\end{equation}%
Since $g_{,2}^{12}=-\left( g^{12}\right) ^{2}g_{12,2}$ (see
$\left(
B.4\right) $), we deduce from $\left( C.21\right) $ and $\left( C.22\right) $%
\ that

\begin{align}
g_{,2}^{ki}A_{k,i}& =-\left( g^{12}\right) ^{2}g_{12,2}\left(
A_{1,2}+A_{2,1}\right)  \notag \\
& +A_{\alpha ,2}\left[ \left( \left( g^{12}\right)
^{2}g_{12,2}g^{\alpha \lambda }-g^{12}g_{,2}^{\alpha \lambda
}\right) g_{1\lambda }-g^{12}g^{\alpha \lambda }g_{1\lambda
,2}\right] .  \tag{C.23}
\end{align}%
As $g^{11}=g^{1\alpha }=0$ and $A_{2}=0$\ on $G^{1},$ the calculation of $%
g^{ik}\left[ A_{k},A_{2}\right] _{,i}$ on $G^{1}$\ easily gives
\begin{equation}
g^{ik}\left[ A_{k},A_{2}\right] _{,i}=0.  \tag{C.24}
\end{equation}%
$\left( C.20-C.24\right) $ imply%
\begin{align}
g^{ik}A_{2,ik}+g_{,2}^{ki}A_{k,i}+g^{ik}\left[ A_{k},A_{2}\right]
_{,i}& =2g^{12}A_{2,12}-\left( g^{12}\right) ^{2}g_{12,2}\left(
A_{1,2}+A_{2,1}\right)  \notag \\
& +A_{\alpha ,2}\left[ \left( \left( g^{12}\right)
^{2}g_{12,2}g^{\alpha \lambda }-g^{12}g_{,2}^{\alpha \lambda
}\right) g_{1\lambda }-g^{12}g^{\alpha \lambda }g_{1\lambda
,2}\right] .  \tag{C.25}
\end{align}

\textbf{Calculation of }$g_{j2}\left[ \left( g^{ik}g^{jl}\right)
_{,i}\left[ A_{l,k}-A_{k,l}+\left[ A_{k},A_{l}\right] \right]
+\Gamma _{im}^{i}F^{mj}+\Gamma _{im}^{j}F^{im}+\left[
A_{i},F^{ij}\right] \right] .$

Using the assumption $g_{22}=g_{23}=g_{24}=0$ on $G^{1}$, we get

\begin{align}
& g_{j2}\left[ \left( g^{ik}g^{jl}\right) _{,i}\left[
A_{l,k}-A_{k,l}+\left[
A_{k},A_{l}\right] \right] +\Gamma _{im}^{i}F^{mj}+\Gamma _{im}^{j}F^{im}+%
\left[ A_{i},F^{ij}\right] \right]  \notag \\
& =g_{12}\left[ \left( g^{ik}g^{1l}\right) _{,i}\left(
A_{l,k}-A_{k,l}+\left[
A_{k},A_{l}\right] \right) +\Gamma _{im}^{i}F^{m1}+\Gamma _{im}^{1}F^{im}+%
\left[ A_{i},F^{i1}\right] \right] .  \tag{C.26}
\end{align}

From the assumption $g^{11}=g^{13}=g^{14}=0,$ $A_{2}=0$ on
$G^{1}$, a simple
calculation shows that%
\begin{align}
g_{,i}^{ik}g^{1l}\left( A_{l,k}-A_{k,l}+\left[ A_{k},A_{l}\right]
\right) & =g^{12}\left( g_{,1}^{1k}+g_{,2}^{2k}+g_{,\beta }^{\beta
k}\right) \left(
A_{2,k}-A_{k,2}+\left[ A_{k},A_{2}\right] \right)  \notag \\
& =g^{12}\left\{ \left( g_{,1}^{11}+g_{,2}^{12}\right) \left(
A_{2,1}-A_{1,2}\right) -\left( g_{,1}^{1\alpha }+g_{,2}^{2\alpha
}+g_{,\beta }^{\alpha \beta }\right) A_{\alpha ,2}\right\} ,
\tag{C.27}
\end{align}%
with%
\begin{equation}
g^{1k}g_{,1}^{1l}\left( A_{l,k}-A_{k,l}+\left[ A_{k},A_{l}\right]
\right) =g^{12}\left\{ g_{,1}^{11}\left( A_{1,2}-A_{2,1}\right)
+g_{,1}^{1\alpha }A_{\alpha ,2}\right\} ,  \tag{C.28}
\end{equation}%
\begin{equation}
g^{2k}g_{,2}^{1l}\left( A_{l,k}-A_{k,l}+\left[ A_{k},A_{l}\right]
\right) =g_{,2}^{12}\left\{ g^{12}\left( A_{2,1}-A_{1,2}\right)
-g^{2\alpha }A_{\alpha ,2}\right\} ,  \tag{C.29}
\end{equation}%
\begin{equation}
g^{\beta k}g_{,\beta }^{1l}\left( A_{l,k}-A_{k,l}+\left[
A_{k},A_{l}\right] \right) =-g^{\alpha \beta }g_{,\beta
}^{12}A_{\alpha ,2}.  \tag{C.30}
\end{equation}%
$\left( C.27-C.30\right) $ yield%
\begin{align}
& \left( g^{ik}g^{1l}\right) _{,i}\left( A_{l,k}-A_{k,l}+\left[ A_{k},A_{l}%
\right] \right)  \notag \\
& =g^{12}\left\{ 2g_{,2}^{12}\left( A_{2,1}-A_{1,2}\right) -\left(
g_{,2}^{2\alpha }+g_{,\beta }^{\alpha \beta }\right) A_{\alpha
,2}\right\} -g_{,2}^{12}g^{2\alpha }A_{\alpha ,2}-g^{\alpha \beta
}g_{,\beta }^{12}A_{\alpha ,2}.  \tag{C.31}
\end{align}%
Inserting the relations%
\begin{align}
g^{2\alpha }& =-g^{12}g^{\alpha \lambda }g_{1\lambda },\quad
g_{,2}^{12}=-\left( g^{12}\right) ^{2}g_{12,2},\quad g_{,\beta
}^{12}=-\left( g^{12}\right) ^{2}g_{12,\beta },  \notag \\
g_{,2}^{2\alpha }& =\left( \left( g^{12}\right)
^{2}g_{12,2}g^{\alpha \lambda }-g^{12}g_{,2}^{\alpha \lambda
}\right) g_{1\lambda }-g^{12}g^{\alpha \lambda }g_{1\lambda ,2},
\tag{C.32}
\end{align}%
in\ $\left( C.31\right) $,\ we get%
\begin{align}
g_{12}\left( g^{ik}g^{1l}\right) _{,i}\left(
A_{l,k}-A_{k,l}+\left[ A_{k},A_{l}\right] \right) & =-2\left(
g^{12}\right) ^{2}g_{12,2}\left(
A_{2,1}-A_{1,2}\right)  \notag \\
& -\left( 2\left( g^{12}\right) ^{2}g_{12,2}g^{\alpha \lambda
}-g^{12}g_{,2}^{\alpha \lambda }\right) A_{\alpha ,2}g_{1\lambda }  \notag \\
& +g^{12}g^{\alpha \lambda }A_{\alpha ,2}g_{1\lambda ,2}-g_{,\beta
}^{\alpha \beta }A_{\alpha ,2}+g^{\alpha \beta }g^{12}g_{12,\beta
}A_{\alpha ,2}. \tag{C.33}
\end{align}

We now handle\textbf{\ }$g_{j2}\left[ \Gamma
_{im}^{i}F^{mj}+\Gamma
_{im}^{j}F^{im}+\left[ A_{i},F^{ij}\right] \right] $. Using the assumption $%
g_{22}=g_{23}=g_{24}=0$ on $G^{1}$, we get%
\begin{equation}
g_{j2}\Gamma _{im}^{i}F^{mj}=g_{12}\left( \Gamma
_{i2}^{i}F^{21}+\Gamma _{i\beta }^{i}F^{\beta 1}\right) .
\tag{C.34}
\end{equation}%
Simple calculations on $G^{1}$\ give
\begin{equation}
\Gamma _{i2}^{i}=3g^{12}g_{12,2},\quad 2\Gamma _{i\beta
}^{i}=2g^{12}g_{12,\beta }+g^{\lambda \mu }\left( g_{\mu \beta
,\lambda }+g_{\lambda \mu ,\beta }-g_{\beta \lambda ,\mu }\right)
,  \tag{C.35}
\end{equation}%
and%
\begin{equation}
F^{21}=\left( g^{12}\right) ^{2}\left( A_{2,1}-A_{1,2}\right)
+\left( g^{12}\right) ^{2}g^{\alpha \lambda }A_{\alpha
,2}g_{1\lambda },,\quad F^{\beta 1}=-g^{12}g^{\beta \alpha
}A_{\alpha ,2}.  \tag{C.36}
\end{equation}%
$\left( C.35\right) $ and $\left( C.36\right) $\ give%
\begin{align}
\Gamma _{i2}^{i}F^{21}& =3\left( g^{12}\right) ^{3}g_{12,2}\left[
\left( A_{2,1}-A_{1,2}\right) +g^{\alpha \lambda }A_{\alpha
,2}g_{1\lambda }\right]
,  \notag \\
\Gamma _{i\beta }^{i}F^{\beta 1}& =-g^{12}g^{\beta \alpha
}A_{\alpha ,2} \left[ g^{12}g_{12,\beta }+\frac{1}{2}g^{\lambda
\mu }\left( g_{\mu \beta ,\lambda }+g_{\lambda \mu ,\beta
}-g_{\beta \lambda ,\mu }\right) \right] , \tag{C.37}
\end{align}%
and then
\begin{align}
g_{12}\Gamma _{i2}^{i}F^{21}& =3\left( g^{12}\right)
^{2}g_{12,2}\left[
\left( A_{2,1}-A_{1,2}\right) +g^{\alpha \lambda }A_{\alpha ,2}g_{1\lambda }%
\right] ,  \notag \\
g_{12}\Gamma _{i\beta }^{i}F^{\beta 1}& =-g^{\beta \alpha
}A_{\alpha ,2} \left[ g^{12}g_{12,\beta }+\frac{1}{2}g^{\lambda
\mu }\left( g_{\mu \beta ,\lambda }+g_{\lambda \mu ,\beta
}-g_{\beta \lambda ,\mu }\right) \right] . \tag{C.38a}
\end{align}%
For the term $g_{j2}\Gamma _{im}^{j}F^{im}$, since $\Gamma
_{im}^{j}=\Gamma _{mi}^{j}$ and $F^{im}=-F^{mi}$, a simple
computation gives\
\begin{equation}
g_{j2}\Gamma _{im}^{j}F^{im}=0.  \tag{C.38b}
\end{equation}%
The calculation of $g_{j2}\left[ A_{i},F^{ij}\right] $ gives%
\begin{equation}
g_{j2}\left[ A_{i},F^{ij}\right] =-g^{\beta \alpha }\left[
A_{\beta },A_{\alpha ,2}\right] .  \tag{C.39}
\end{equation}%
Finally, from $\left( C.17\right) $, $\left( C.25\right) $,\
$\left(
C.33\right) $, $\left( C.38a-b\right) $\ and $\left( C.39\right) $,\ we gain%
\begin{align}
LA_{2}& =2g^{12}A_{2,12}-g^{\beta \alpha }\left[ A_{\beta },A_{\alpha ,2}%
\right] +g^{12}g_{12,\beta }g^{\alpha \beta }  \notag \\
& -2\left( g^{12}\right) ^{2}g_{12,2}A_{1,2}+2\left( g^{12}\right)
^{2}g_{12,2}g^{\alpha \lambda }A_{\alpha ,2}g_{1\lambda }  \notag \\
& +\left[ g^{\alpha \beta }\left( -\left[ g^{12}g_{12,\beta }+\frac{1}{2}%
g^{\lambda \mu }\left( g_{\mu \beta ,\lambda }+g_{\lambda \mu
,\beta
}-g_{\beta \lambda ,\mu }\right) \right] \right) -g_{,\beta }^{\alpha \beta }%
\right] A_{\alpha ,2}.  \tag{C.40}
\end{align}

\textbf{Calculation of\ }$\nabla _{2}\nabla ^{k}A_{k}$

By definition it holds that
\begin{equation}
\nabla ^{k}A_{k}=g^{ik}\nabla _{i}A_{k}=g^{ik}\left(
A_{k,i}-\Gamma _{ik}^{l}A_{l}\right) .  \tag{C.41}
\end{equation}%
Using the assumption $g^{11}=g^{1\alpha }=0$ and $A_{2}=0$\ on
$G^{1}$\ and the equality $g^{2\alpha }=-g^{12}g^{\alpha \lambda
}g_{1\lambda }\ $on$\
G^{1}$, we have%
\begin{equation}
g^{ki}A_{k,i}=g^{12}\left( A_{1,2}+A_{2,1}\right) -g^{12}g^{\alpha
\lambda }\left( A_{\alpha ,2}+A_{2,\alpha }\right) g_{1\lambda }.
\tag{C.42}
\end{equation}%
Using the notation $\Gamma ^{l}=g^{ik}\Gamma _{ik}^{l}$, the calculation of $%
g^{ik}\Gamma _{ik}^{l}A_{l}$ gives $g^{ik}\Gamma
_{ik}^{l}A_{l}=\Gamma ^{l}A_{l}$. Since $\Gamma ^{1}=0\ $on
$G^{1}$ at this step of the construction process, we deduce that
\begin{equation}
\nabla ^{k}A_{k}=g^{12}\left( A_{1,2}+A_{2,1}\right)
-g^{12}g^{\alpha
\lambda }A_{\alpha ,2}g_{1\lambda }+\Gamma ^{\alpha }A_{\alpha }\text{ on }%
G^{1}.  \tag{C.43}
\end{equation}%
From $\left( C.10\right) $ we\ deduce\ that on\ $G^{1}$ the\
following\
equality\ holds:%
\begin{align}
\Gamma _{,2}^{\alpha }& =\left[ g^{12}g_{,2}^{\beta \lambda
}\right] _{,2}g_{1\lambda }+g^{12}g_{,2}^{\beta \lambda
}g_{1\lambda ,2}+g^{12}g^{\alpha \lambda }\left( g_{2\lambda
,12}+g_{1\lambda ,22}\right) +\left( g^{12}g^{\alpha \lambda
}\right) _{,2}\left( g_{2\lambda
,1}+g_{1\lambda ,2}\right)  \notag \\
& +\left( -g^{12}g^{\alpha \lambda }g_{12,\lambda
}+\frac{1}{2}\left[ g^{\alpha \delta }g^{\beta \mu }\left( g_{\mu
\delta ,\beta }+g_{\delta \beta ,\mu }-g_{\mu \beta ,\delta
}\right) \right] \right) _{,2}.  \tag{C.44}
\end{align}%
$\left( C.43\right) $ and $\left( C.10\right) $ yield
\begin{align}
\nabla ^{k}A_{k}& =g^{12}\left( A_{1,2}+A_{2,1}\right) +\left[
g^{12}g_{,2}^{\beta \lambda }A_{\alpha }-g^{12}g^{\alpha \lambda
}A_{\alpha ,2}\right] g_{1\lambda }+g^{12}g^{\alpha \lambda
}A_{\alpha }\left(
g_{2\lambda ,1}+g_{1\lambda ,2}\right)  \notag \\
& -g^{12}g^{\alpha \lambda }g_{12,\lambda }A_{\alpha
}+\frac{1}{2}\left[ g^{\alpha \delta }g^{\beta \mu }\left( g_{\mu
\delta ,\beta }+g_{\delta \beta ,\mu }-g_{\mu \beta ,\delta
}\right) \right] A_{\alpha }.  \tag{C.45}
\end{align}%
Differentiation of $\left( C.45\right) $ w.r.t. $x^{2}$ gives%
\begin{align}
\nabla _{2}\left( \nabla ^{k}A_{k}\right) & =g^{12}\left(
A_{1,22}+A_{2,12}\right) +g_{,2}^{12}\left( A_{1,2}+A_{2,1}\right)  \notag \\
& +\left[ g^{12}g_{,2}^{\beta \lambda }A_{\alpha }-g^{12}g^{\alpha
\lambda }A_{\alpha ,2}\right] _{,2}g_{1\lambda }+\left[
g^{12}g_{,2}^{\beta \lambda }A_{\alpha }-g^{12}g^{\alpha \lambda
}A_{\alpha ,2}\right] g_{1\lambda ,2}
\notag \\
& +\left( g^{12}g^{\alpha \lambda }A_{\alpha }\right) _{,2}\left(
g_{2\lambda ,1}+g_{1\lambda ,2}\right) +g^{12}g^{\alpha \lambda
}A_{\alpha
}\left( g_{2\lambda ,12}+g_{1\lambda ,22}\right)  \notag \\
& +\left( -g^{12}g^{\alpha \lambda }g_{12,\lambda }A_{\alpha }+\frac{1}{2}%
\left[ g^{\alpha \delta }g^{\beta \mu }\left( g_{\mu \delta ,\beta
}+g_{\delta \beta ,\mu }-g_{\mu \beta ,\delta }\right) \right]
A_{\alpha }\right) _{,2}.  \tag{C.46}
\end{align}%
As $g_{,2}^{12}=-\left( g^{12}\right) ^{2}g_{12,2}$ (see\ $\left(
B.4\right) $)\ and $\nabla ^{k}A_{k}=\Delta ,$ we gain
\begin{align}
\Delta _{,2}& =g^{12}\left( A_{1,22}+A_{2,12}\right) -\left(
g^{12}\right)
^{2}g_{12,2}\left( A_{1,2}+A_{2,1}\right)  \notag \\
& +\left[ g^{12}g_{,2}^{\beta \lambda }A_{\alpha }-g^{12}g^{\alpha
\lambda }A_{\alpha ,2}\right] _{,2}g_{1\lambda }+\left[
g^{12}g_{,2}^{\beta \lambda }A_{\alpha }-g^{12}g^{\alpha \lambda
}A_{\alpha ,2}\right] g_{1\lambda ,2}
\notag \\
& +\left( g^{12}g^{\alpha \lambda }A_{\alpha }\right) _{,2}\left(
g_{2\lambda ,1}+g_{1\lambda ,2}\right) +g^{12}g^{\alpha \lambda
}A_{\alpha
}\left( g_{2\lambda ,12}+g_{1\lambda ,22}\right)  \notag \\
& +\left( -g^{12}g^{\alpha \lambda }g_{12,\lambda }A_{\alpha }+\frac{1}{2}%
\left[ g^{\alpha \delta }g^{\beta \mu }\left( g_{\mu \delta ,\beta
}+g_{\delta \beta ,\mu }-g_{\mu \beta ,\delta }\right) \right]
A_{\alpha }\right) _{,2}.  \tag{C.47}
\end{align}%
$\left( C.40\right) $ and $\left( C.47\right) $ yield%
\begin{align}
LA_{2}-2\Delta _{,2}& =-2g^{12}A_{1,22}+2\left( g^{12}\right)
^{2}g_{12,2}A_{2,1}+2\left( g^{12}\right) ^{2}g_{12,2}g^{\alpha
\lambda
}A_{\alpha ,2}g_{1\lambda }  \notag \\
& +\left[ 4\left( g^{12}\right) ^{2}g_{12,2}g^{\alpha \lambda
}A_{\alpha }+g^{12}g^{\alpha \lambda }g^{\beta \mu }g_{\mu \beta
,2}A_{\alpha }+2g^{12}g^{\alpha \lambda }A_{\alpha ,2}\right]
_{,2}g_{1\lambda }  \notag
\\
& +\left[ 4\left( g^{12}\right) ^{2}g_{12,2}g^{\alpha \lambda
}A_{\alpha }+g^{12}g^{\alpha \lambda }g^{\beta \mu }g_{\mu \beta
,2}A_{\alpha
}+2g^{12}g^{\alpha \lambda }A_{\alpha ,2}\right] g_{1\lambda ,2}  \notag \\
& -2\left( g^{12}g^{\alpha \lambda }A_{\alpha }\right) _{,2}\left(
g_{2\lambda ,1}+g_{1\lambda ,2}\right) -2g^{12}g^{\alpha \lambda
}A_{\alpha
}\left( g_{2\lambda ,12}+g_{1\lambda ,22}\right)  \notag \\
& +\left[ g^{\alpha \beta }\left( g^{12}g_{12,\beta }-\left[
g^{12}g_{12,\beta }+\frac{1}{2}g^{\lambda \mu }\left( g_{\mu \beta
,\lambda }+g_{\lambda \mu ,\beta }-g_{\beta \lambda ,\mu }\right)
\right] \right)
-g_{,\beta }^{\alpha \beta }\right] A_{\alpha ,2}  \notag \\
& -g^{\beta \alpha }\left[ A_{\beta },A_{\alpha ,2}\right] +\left(
2g^{12}g^{\alpha \lambda }g_{12,\lambda }A_{\alpha }-\left[
g^{\alpha \delta }g^{\beta \mu }\left( g_{\mu \delta ,\beta
}+g_{\delta \beta ,\mu }-g_{\mu \beta ,\delta }\right) \right]
A_{\alpha }\right) _{,2}.  \tag{C.48}
\end{align}%
From $\left( C.16\right) $\ we have%
\begin{align}
\left( g_{2\lambda ,1}+g_{1\lambda ,2}\right) & =g_{12}g_{\nu
\lambda }\Gamma ^{\nu }+g^{\beta \mu }g_{\mu \beta ,2}g_{1\lambda
}+g_{12,\lambda }
\notag \\
& -\frac{1}{2}g_{12}\left[ g^{\beta \mu }\left( g_{\mu \lambda
,\beta }+g_{\lambda \beta ,\mu }-g_{\mu \beta ,\lambda }\right)
\right] . \tag{C.49}
\end{align}%
Thus%
\begin{align}
\left( g_{2\lambda ,12}+g_{1\lambda ,22}\right) & =\left(
g_{12}g_{\nu \lambda }\right) _{,2}\Gamma ^{\nu }+g_{12}g_{\nu
\lambda }\Gamma _{,2}^{\nu }+\left[ \frac{1}{2}g^{\beta \mu
}g_{\mu \beta ,2}\right] _{,2}g_{1\lambda }
\notag \\
& +g^{\beta \mu }g_{\mu \beta ,2}g_{1\lambda ,2}+g_{12,2\lambda }-\frac{1}{2}%
\left( g_{12}\left[ g^{\beta \mu }\left( g_{\mu \lambda ,\beta
}+g_{\lambda \beta ,\mu }-g_{\mu \beta ,\lambda }\right) \right]
\right) _{,2}. \tag{C.50}
\end{align}%
One proceeds in the same way to calculate $A_{2,1}$ from$\ \left(
C.45\right) \ $and $\left( C.49\right) $ to have%
\begin{align}
A_{2,1}& =g_{12}\Delta -A_{\alpha }g_{12}\Gamma ^{\alpha
}-A_{1,2}+g^{\alpha \lambda }A_{\alpha ,2}g_{1\lambda }-A_{\alpha
}g^{\alpha \lambda
}g_{12,\lambda }  \notag \\
& +\frac{1}{2}g_{12}g^{\alpha \lambda }A_{\alpha }\left[ g^{\beta
\mu }\left( g_{\mu \lambda ,\beta }+g_{\lambda \beta ,\mu }-g_{\mu
\beta ,\lambda }\right) \right] +g^{\alpha \lambda }g_{12,\lambda
}A_{\alpha }
\notag \\
& -\frac{1}{2}g_{12}\left[ g^{\alpha \delta }g^{\beta \mu }\left(
g_{\mu \delta ,\beta }+g_{\delta \beta ,\mu }-g_{\mu \beta ,\delta
}\right) \right] A_{\alpha }.  \tag{C.51}
\end{align}%
Inserting $\left( C.49\right) $, $\left( C.50\right) $\ and
$\left( C.51\right) $ in $\left( C.48\right) $ and using the
equality $g^{\beta \mu }g_{\mu \beta ,2}=4g^{12}g_{12,2}$\ on
$G^{1}$,\ we gain the desired relation $\left( \ref{4.25}\right)
$.

\textbf{Proof of item }$\left( ii\right) $. One uses the expression of $%
A_{2,1}$\ given above in $\left( C.51\right) $\ to obtain, from
$\left(
C.9\right) $, the following expression of $\tau _{2\alpha }$%
\begin{align}
\tau _{2\alpha }& =-2g^{12}A_{\alpha ,2}.A_{1,2}+A_{\alpha
,2}.\Delta -A_{\nu }.A_{\alpha ,2}\Gamma ^{\nu }+2g^{\nu \lambda
}g^{12}A_{\alpha
,2}.A_{\nu ,2}g_{1\lambda }  \notag \\
& -g^{\beta \lambda }\left( A_{\alpha ,\lambda }-A_{\lambda
,\alpha }+\left[ A_{\lambda },A_{\alpha }\right] \right) .A_{\beta
,2}+\left( \Phi _{,2}\right) .\left( \Phi _{,\alpha }+\left[
A_{\alpha },\Phi \right] \right) .  \tag{C.52}
\end{align}%
The insertion of $\left( C.52\right) $\ in $\left(
\ref{4.26}\right) $ gives the desired system $\left(
\ref{4.27}\right) $ with the appropriate known coefficients given
by $\left( \ref{4.28}\right) $.

\section*{Appendix D: Proof of Proposition 3.11}

\textbf{Proof of item }$\left( i\right) $. By definition, we have%
\begin{equation}
R_{\alpha \beta }=\Gamma _{\alpha \beta ,k}^{k}-\Gamma _{\alpha
k,\beta }^{k}+\Gamma _{lk}^{k}\Gamma _{\alpha \beta }^{l}-\Gamma
_{l\beta }^{k}\Gamma _{\alpha k}^{l}.  \tag{D.1}
\end{equation}%
As in the previous steps, each term in the r.h.s of $\left(
D.1\right) $ is
calculated meticulously on $G^{1}$. Here one uses the equality%
\begin{equation*}
g_{,1}^{12}=\left( g^{12}\right) ^{3}g_{22,1}g_{11}+\left(
g^{12}\right) ^{2}g_{22,1}g^{2\mu }g_{1\mu }-g^{12}\left(
g^{12}g_{12,1}+g^{2\mu }g_{2\mu ,1}\right) ,
\end{equation*}%
which follows from the equalities $\left( g^{2i}g_{2i}\right)
_{,1}=0,\quad
g^{2i}g_{1i}=0$ on\ $G^{1}$, to gain%
\begin{align}
2\Gamma _{\alpha \beta ,k}^{k}& =-\left(
g_{,1}^{11}+g_{,2}^{12}\right) g_{\alpha \beta
,1}-2g^{12}g_{\alpha \beta ,12}+\left( g^{12}\right)
^{2}g_{\alpha \beta ,2}g_{12,1}  \notag \\
& +\left( g^{12}\right) ^{2}g_{\alpha \beta ,2}g_{11,2}+\left\{
\left[ \left( g^{12}\right) ^{2}g_{\alpha \beta ,2}\right]
_{,2}-\left(
g^{12}\right) ^{3}g_{22,1}g_{\alpha \beta ,2}\right\} g_{11}  \notag \\
& -g_{\alpha \beta ,2}\left[ \left( g^{12}\right)
^{2}g_{22,1}g^{2\mu
}g_{1\mu }-g^{12}g^{2\mu }g_{2\mu ,1}\right]   \notag \\
& +g_{,1}^{11}\left( g_{1\beta ,\alpha }+g_{1\alpha ,\beta
}\right) +g_{,1}^{1\mu }\left( g_{\beta \mu ,\alpha }+g_{\mu
\alpha ,\beta
}-g_{\alpha \beta ,\mu }\right)   \notag \\
& +g^{12}\left( g_{2\beta ,1\alpha }+g_{2\alpha ,1\beta }\right)
+g_{\alpha
\beta ,2}\left( g^{12}g^{2\mu }g_{1\mu }\right) _{,2}  \notag \\
& +g^{12}g^{2\mu }g_{1\mu }g_{\alpha \beta ,22}+g_{,2}^{12}\left(
g_{1\beta ,\alpha }+g_{1\alpha ,\beta }\right) +g_{,2}^{2\mu
}\left( g_{\mu \beta
,\alpha }+g_{\mu \alpha ,\beta }-g_{\alpha \beta ,\mu }\right)   \notag \\
& +g^{12}\left( g_{1\beta ,2\alpha }+g_{1\alpha ,2\beta }\right)
+g^{2\mu }\left( g_{\beta \mu ,2\alpha }+g_{\mu \alpha ,2\beta
}-g_{\alpha \beta
,2\mu }\right)   \notag \\
& -g_{,\lambda }^{2\lambda }g_{\alpha \beta ,2}+g_{,\lambda
}^{\lambda \mu }\left( g_{\mu \beta ,\alpha }+g_{\mu \alpha ,\beta
}-g_{\alpha \beta ,\mu
}\right)   \notag \\
& -g^{2\lambda }g_{\alpha \beta ,2\lambda }+g^{\lambda \mu }\left(
g_{\mu \beta ,\lambda \alpha }+g_{\mu \alpha ,\lambda \beta
}-g_{\alpha \beta ,\lambda \mu }\right) .  \tag{D.2a}
\end{align}%
Similarly, we have%
\begin{equation}
2\Gamma _{\alpha k,\beta }^{k}=\left[ 2g^{12}g_{12,\alpha
}+g^{\lambda \mu }\left( g_{\mu \lambda ,\alpha }+g_{\mu \alpha
,\lambda }-g_{\alpha \lambda ,\mu }\right) \right] _{,\beta }.
\tag{D.2b}
\end{equation}%
$\left( D.2a\right) $ and $\left( D.2b\right) $\ yield%
\begin{align}
2\left( \Gamma _{\alpha \beta ,k}^{k}-\Gamma _{\alpha k,\beta
}^{k}\right) & =-\left( g_{,1}^{11}+g_{,2}^{12}\right) g_{\alpha
\beta ,1}-2g^{12}g_{\alpha \beta ,12}+\left( g^{12}\right)
^{2}g_{\alpha \beta ,2}g_{12,1}+\left(
g^{12}\right) ^{2}g_{\alpha \beta ,2}g_{11,2}  \notag \\
& +\left\{ \left[ \left( g^{12}\right) ^{2}g_{\alpha \beta
,2}\right] _{,2}-\left( g^{12}\right) ^{3}g_{22,1}g_{\alpha \beta
,2}\right\} g_{11}+N_{\alpha \beta },  \tag{D.2c}
\end{align}%
where $N_{\alpha \beta }$ is known and given on $G^{1}$ by $\left(
3.25\right) $.

The calculation of\textbf{\ }$\Gamma _{lk}^{k}\Gamma _{\alpha
\beta }^{l}$
and $\Gamma _{l\beta }^{k}\Gamma _{\alpha k}^{l}\ $gives%
\begin{align}
4\Gamma _{lk}^{k}\Gamma _{\alpha \beta }^{l}& =-2\left(
g^{12}\right) ^{2}g_{\alpha \beta ,2}g_{12,1}-g^{12}\left(
3g^{12}g_{22,1}+g^{\lambda \mu
}g_{\lambda \mu ,2}\right) g_{\alpha \beta ,1}  \notag \\
& +\left( g^{12}\right) ^{2}g_{\alpha \beta ,2}\left(
4g^{12}g_{22,1}+g^{\lambda \mu }g_{\lambda \mu ,2}\right) g_{11}  \notag \\
& -g^{12}g_{\alpha \beta ,2}\left[ -g^{12}g_{22,1}g^{2\mu }g_{1\mu
}+2g^{2\lambda }g_{2\lambda ,1}+g^{\lambda \mu }\left( g_{1\mu
,\lambda
}+g_{\lambda \mu ,1}-g_{1\lambda ,\mu }\right) \right]   \notag \\
& +\left[ g^{12}g^{2\mu }g_{1\mu }g_{\alpha \beta ,2}+g^{12}\left(
g_{1\beta ,\alpha }+g_{1\alpha ,\beta }\right) +g^{2\mu }\left(
g_{\mu \beta ,\alpha }+g_{\mu \alpha ,\beta }-g_{\alpha \beta ,\mu
}\right) \right] \left(
3g^{12}g_{22,1}+g^{\lambda \mu }g_{\lambda \mu ,2}\right)   \notag \\
& +\left[ 2g^{12}g_{12,\lambda }+g^{\mu \theta }\left( g_{\mu
\theta
,\lambda }+g_{\theta \lambda ,\mu }-g_{\mu \lambda ,\theta }\right) \right] %
\left[ -g^{2\lambda }g_{\alpha \beta ,2}+g^{\mu \lambda }\left(
g_{\mu \beta ,\alpha }+g_{\mu \alpha ,\beta }-g_{\alpha \beta ,\mu
}\right) \right] , \tag{D.3a}
\end{align}%
and%
\begin{align}
4\Gamma _{l\beta }^{k}\Gamma _{\alpha k}^{l}& =2\left(
g^{12}\right) ^{2}g^{\lambda \mu }g_{\lambda \beta ,2}g_{\alpha
\mu ,2}g_{11}-2g^{12}g^{\lambda \mu }g_{\lambda \beta ,2}g_{\alpha
\mu ,1}
\notag \\
& +\left( g^{12}\right) ^{2}\left( g_{12,\beta }+g_{2\beta
,1}-g_{1\beta ,2}\right) \left( g_{12,\alpha }+g_{2\alpha
,1}-g_{1\alpha ,2}\right)
\notag \\
& -g^{12}g_{\mu \alpha ,2}\left[ g^{2\mu }\left( g_{2\beta
,1}+g_{12,\beta }-g_{1\beta ,2}\right) +g^{\lambda \mu }\left(
g_{\lambda \beta
,1}+g_{1\lambda ,\beta }-g_{1\beta ,\lambda }\right) \right]   \notag \\
& +2g^{12}g^{2\lambda }g_{1\lambda }g^{\theta \mu }g_{\theta \beta
,2}g_{\alpha \mu ,2}  \notag \\
& +g^{12}g^{\lambda \mu }g_{\lambda \beta ,2}\left( g_{1\mu
,\alpha }+g_{1\alpha ,\mu }\right) +g^{\theta \mu }g_{\theta \beta
,2}g^{2\lambda }\left( g_{\lambda \mu ,\alpha }+g_{\lambda \alpha
,\mu }-g_{\alpha \mu
,\lambda }\right)   \notag \\
& +\left[ g^{12}\left( g_{12,\beta }+g_{1\beta ,2}-g_{2\beta
,1}\right) +g^{2\mu }g_{\mu \beta ,2}\right] \left[ g^{12}\left(
g_{12,\alpha
}+g_{1\alpha ,2}-g_{2\alpha ,1}\right) +g^{2\lambda }g_{\lambda \alpha ,2}%
\right]   \notag \\
& +g^{12}g^{\lambda \theta }g_{\theta \alpha ,2}\left( g_{1\beta
,\lambda }+g_{1\lambda ,\beta }\right) +g^{2\mu }g^{\lambda \theta
}g_{\theta \alpha ,2}\left( g_{\mu \beta ,\lambda }+g_{\mu \lambda
,\beta }-g_{\lambda \beta
,\mu }\right)   \notag \\
& -g^{12}g_{\lambda \beta ,2}\left[ g^{2\lambda }\left( g_{2\alpha
,1}+g_{12,\alpha }-g_{1\alpha ,2}\right) +g^{\lambda \mu }\left(
g_{\mu
\alpha ,1}+g_{1\mu ,\alpha }-g_{1\alpha ,\mu }\right) \right]   \notag \\
& +\left[ -g^{2\mu }g_{\lambda \beta ,2}+g^{\theta \mu }\left(
g_{\theta \lambda ,\beta }+g_{\theta \beta ,\lambda }-g_{\lambda
\beta ,\theta }\right) \right] \left[ -g^{2\lambda }g_{\alpha \mu
,2}+g^{\delta \lambda }\left( g_{\delta \mu ,\alpha }+g_{\delta
\alpha ,\mu }-g_{\alpha \mu ,\delta }\right) \right] .  \tag{D.3b}
\end{align}%
From\ $\left( D.3a\right) $ and $\left( D.3b\right) $ it follows
that
\begin{align}
4\left( \Gamma _{lk}^{k}\Gamma _{\alpha \beta }^{l}-\Gamma
_{l\beta }^{k}\Gamma _{\alpha k}^{l}\right) & =-2\left(
g^{12}\right) ^{2}g_{\alpha \beta ,2}g_{12,1}-g^{12}\left(
3g^{12}g_{22,1}+g^{\lambda \mu }g_{\lambda
\mu ,2}\right) g_{\alpha \beta ,1}  \notag \\
& +\left( g^{12}\right) ^{2}g_{\alpha \beta ,2}\left(
4g^{12}g_{22,1}+g^{\lambda \mu }g_{\lambda \mu ,2}\right)
g_{11}-2\left( g^{12}\right) ^{2}g^{\lambda \mu }g_{\lambda \beta
,2}g_{\alpha \mu ,2}g_{11}
\notag \\
& +2g^{12}g^{\lambda \mu }g_{\lambda \beta ,2}g_{\alpha \mu
,1}+2g^{12}g_{\mu \alpha ,2}g^{\lambda \mu }g_{\lambda \beta
,1}+M_{\alpha \beta },  \tag{D.3c}
\end{align}%
where $M_{\alpha \beta }$ is known and given on $G^{1}$ by $\left( \ref{4.34}%
\right) $ and $\left( \ref{4.34a}\right) $. $\left( D.2c\right) $ and $%
\left( D.3c\right) $ yield%
\begin{align}
R_{\alpha \beta }& =-\frac{1}{2}\left(
g_{,1}^{11}+g_{,2}^{12}\right) g_{\alpha \beta ,1}-g^{12}g_{\alpha
\beta ,12}+\frac{1}{2}\left(
g^{12}\right) ^{2}g_{\alpha \beta ,2}g_{11,2}  \notag \\
& +\frac{1}{2}\left\{ \left[ \left( g^{12}\right) ^{2}g_{\alpha \beta ,2}%
\right] _{,2}-\left( g^{12}\right) ^{3}g_{22,1}g_{\alpha \beta
,2}\right\}
g_{11}  \notag \\
& -\frac{1}{4}g^{12}\left( 3g^{12}g_{22,1}+g^{\lambda \mu
}g_{\lambda \mu
,2}\right) g_{\alpha \beta ,1}  \notag \\
& +\frac{1}{4}\left( g^{12}\right) ^{2}g_{\alpha \beta ,2}\left(
4g^{12}g_{22,1}+g^{\lambda \mu }g_{\lambda \mu ,2}\right) g_{11}  \notag \\
& -\frac{1}{2}\left( g^{12}\right) ^{2}g^{\lambda \mu }g_{\lambda
\beta ,2}g_{\alpha \mu ,2}g_{11}+\frac{1}{2}g^{12}g^{\lambda \mu
}g_{\lambda \beta ,2}g_{\alpha \mu ,1}+\frac{1}{4}\left( N_{\alpha
\beta }+M_{\alpha \beta }\right) .  \tag{D.4}
\end{align}%
Now using the following relations (see $\left( C.3-C.4\right) $,
and proof of Proposition 3.5)
\begin{equation*}
g_{,1}^{11}=-\left( g^{12}\right) ^{2}g_{22,1}=-2\left(
g^{12}\right)
^{2}g_{12,2},\text{\quad }g_{,2}^{12}=-\left( g^{12}\right) ^{2}g_{12,2},%
\text{\quad }g^{\lambda \mu }g_{\lambda \mu ,2}=4g^{12}g_{12,2},
\end{equation*}%
together with $\left( D.4\right) $, we gain%
\begin{align}
R_{\alpha \beta }& =\frac{1}{2}g^{12}g^{\lambda \mu }g_{\lambda
\beta ,2}g_{\alpha \mu ,1}+\frac{1}{4}g^{12}g_{\alpha \beta
,2}g^{\lambda \mu
}g_{\lambda \mu ,1}  \notag \\
& +\frac{1}{2}g^{12}g_{\mu \alpha ,2}g^{\lambda \mu }g_{\lambda
\beta ,1}-\left( g^{12}\right) ^{2}g_{12,2}g_{\alpha \beta
,1}-g^{12}g_{\alpha \beta ,12}+\frac{1}{2}\left( g^{12}\right)
^{2}g_{\alpha \beta ,2}g_{11,2}
\notag \\
& +\frac{1}{2}\left\{ \left[ \left( g^{12}\right) ^{2}g_{\alpha \beta ,2}%
\right] _{,2}+4\left( g^{12}\right) ^{3}g_{12,2}g_{\alpha \beta
,2}-\left( g^{12}\right) ^{2}g^{\lambda \mu }g_{\lambda \beta
,2}g_{\alpha \mu
,2}\right\} g_{11}  \notag \\
& +\frac{1}{4}\left( N_{\alpha \beta }+M_{\alpha \beta }\right) .
\tag{D.5}
\end{align}%
From $\left( D.5\right) $, after some supplementary calculation, we obtain%
\begin{align}
g^{\alpha \beta }R_{\alpha \beta }& =-g^{12}g_{,2}^{\alpha \beta
}g_{\alpha \beta ,1}-g^{12}g^{\alpha \beta }g_{\alpha \beta
,12}+2\left( g^{12}\right)
^{3}g_{12,2}g_{11,2}  \notag \\
& +\left\{ 4\left( g^{12}\right) ^{4}\left( g_{12,2}\right) ^{2}+\frac{1}{2}%
\left( g^{12}\right) ^{2}\left( g^{\alpha \beta }g_{\alpha \beta
,2}\right) _{,2}\right\} g_{11}+\frac{1}{4}g^{\alpha \beta }\left(
N_{\alpha \beta }+M_{\alpha \beta }\right) .  \tag{D.6}
\end{align}%
We now expand the equality $2\Gamma ^{2}=g^{ij}g^{2m}\left(
2g_{mi,j}-g_{ij,m}\right) $ by using the equalities
\begin{equation*}
g^{11}=g^{13}=g^{14}=0,\quad g_{22}=g_{23}=g_{24}=0,
\end{equation*}%
assumed on $G^{1}$ to gain
\begin{equation}
\Gamma ^{2}=\left( g^{12}\right)
^{2}g_{11,2}-\frac{1}{2}g^{12}g^{\lambda \mu }g_{\lambda \mu
,1}+S,  \tag{D.7}
\end{equation}%
where $S$\ is known and given on $G^{1}$\ by $\left( \ref{4.35}\right) $. $%
\left( D.7\right) $\ gives%
\begin{align}
\Gamma _{,2}^{2}& =\left( g^{12}\right) ^{2}g_{11,22}-2\left(
g^{12}\right) ^{3}g_{12,2}g_{11,2}+\frac{1}{2}\left( g^{12}\right)
^{2}g_{12,2}g^{\lambda
\mu }g_{\lambda \mu ,1}  \notag \\
& -\frac{1}{2}g^{12}g_{,2}^{\lambda \mu }g_{\lambda \mu ,1}-\frac{1}{2}%
g^{12}g^{\lambda \mu }g_{\lambda \mu ,12}+W,  \tag{D.8}
\end{align}%
where $W$\ is known and given on $G^{1}$\ by $\left( \ref{4.35}\right) $%
\textit{.} $\left( D.6\right) $, $\left( D.7\right) $\textit{\
}and\textit{\ }$\left( D.8\right) $\textit{\ }yield the first
equality of $\left( 3.24\right) $.

We now prove the second equality of\textbf{\ }$\left( 3.24\right) $.\ From $%
\left( 2.22\right) $ we have%
\begin{equation}
g^{\alpha \beta }\tau _{\alpha \beta }=g^{\alpha \beta }F_{\alpha
k}.F_{\beta i}g^{ki}-\frac{1}{2}F_{kl}.F^{kl}+g^{\alpha \beta }\widehat{%
\nabla }_{\alpha }\Phi .\widehat{\nabla }_{\beta }\Phi +V\left(
\Phi ^{2}\right) .  \tag{D.9}
\end{equation}%
It is worth noting at this step of the construction process that,
apart from
$g_{11}$ and $F_{1\alpha }$, all the $g_{ij}$ and $F_{ij}$ are known on $%
G^{1}$. One deduces that, apart from $g^{22}$ and $F^{2\alpha }$,\ all the $%
g^{ik}$ and $F^{ik}$\ are known on $G^{1}$. More precisely, the
following
equalities hold on $G^{1}$.%
\begin{align}
g^{22}& =-\left( g^{12}\right) ^{2}g_{11}-g^{12}g^{2\lambda
}g_{1\lambda },
\notag \\
F^{12}& =\left( g^{12}\right) ^{2}F_{21}+g^{12}g^{2\lambda
}F_{2\lambda },
\notag \\
F^{1\alpha }& =g^{12}g^{\alpha \lambda }F_{2\lambda },  \notag \\
F^{2\alpha }& =g^{21}g^{\alpha 2}F_{12}+g^{21}g^{\alpha \lambda
}F_{1\lambda }+g^{22}g^{\alpha \lambda }F_{2\lambda
}+g^{23}g^{\alpha
2}F_{32}+g^{23}g^{\alpha 4}F_{34}  \notag \\
& +g^{24}g^{\alpha 2}F_{42}+g^{24}g^{\alpha 3}F_{43},  \notag \\
F^{34}& =g^{32}g^{4\lambda }F_{2\lambda
}+g^{33}g^{42}F_{32}+g^{33}g^{44}F_{34}+g^{34}g^{42}F_{42}+g^{34}g^{43}F_{43}.
\tag{D.10}
\end{align}%
We will then examine all the terms of the r.h.s of $\left(
D.9\right) $ in
order to highlight the unknown functions. The following equalities hold on $%
G^{1}$ via direct calculations%
\begin{equation}
g^{\alpha \beta }g^{ki}F_{\alpha k}.F_{\beta i}=2g^{12}g^{\alpha
\beta }F_{1\alpha }.F_{2\beta }+g^{22}g^{\alpha \beta }F_{2\alpha
}.F_{2\beta }+2g^{\alpha \beta }g^{2\lambda }F_{2\alpha
}.F_{\lambda \beta }+g^{\alpha \beta }g^{\mu \lambda }F_{\mu
\alpha }.F_{\lambda \beta }.  \tag{D.11}
\end{equation}%
As the tensor\ $\left( F^{ij}\right) $\ is antisymmetric, from the
above computations, we obtain
\begin{align}
\frac{1}{2}F_{kl}.F^{kl}& =2g^{12}g^{\alpha \lambda }F_{1\alpha
}.F_{2\lambda }+g^{22}g^{\lambda \beta }F_{2\beta }.F_{2\lambda
}+F_{12}.F^{12}+F_{34}.F^{34}  \notag \\
& +F_{2\lambda }.\left[ g^{21}g^{\lambda 2}F_{12}+g^{23}g^{\lambda
2}F_{32}+g^{23}g^{\lambda 4}F_{34}+g^{24}g^{\lambda
2}F_{42}+g^{24}g^{\lambda 3}F_{43}\right] .  \tag{D.12}
\end{align}%
$\left( D.11\right) $ and\ $\left( D.12\right) $\ yield%
\begin{align}
g^{\alpha \beta }g^{ki}F_{\alpha k}.F_{\beta
i}-\frac{1}{2}F_{kl}.F^{kl}& =2g^{\alpha \beta }g^{2\lambda
}F_{2\alpha }.F_{\lambda \beta }+g^{\alpha \beta }g^{\mu \lambda
}F_{\mu \alpha }.F_{\lambda \beta
}-F_{12}.F^{12}-F_{34}.F^{34}  \notag \\
& -F_{2\lambda }.\left[ g^{21}g^{\lambda 2}F_{12}+g^{23}g^{\lambda
2}F_{32}+g^{23}g^{\lambda 4}F_{34}+g^{24}g^{\lambda
2}F_{42}+g^{24}g^{\lambda 3}F_{43}\right] .  \tag{D.13}
\end{align}%
The second equality of $\left( 3.24\right) $\ follows
straightforwardly from $\left( D.9\right) $\ and $\left(
D.13\right) $.

\textbf{Proof of item }$\left( ii\right) $\textbf{.} In view of
$\left( 3.24\right) $, the equation
\begin{equation*}
g^{\alpha \beta }R_{\alpha \beta }-2\Gamma
_{,2}^{2}-2g^{12}g_{12,2}\Gamma ^{2}=g^{\alpha \beta }\tau
_{\alpha \beta },
\end{equation*}%
is equivalent to%
\begin{align}
K& =-2\left( g^{12}\right) ^{2}g_{11,22}+4\left( g^{12}\right)
^{3}g_{12,2}g_{11,2}  \notag \\
& +\left\{ 4\left( g^{12}\right) ^{4}\left( g_{12,2}\right) ^{2}+\frac{1}{2}%
\left( g^{12}\right) ^{2}\left( g^{\alpha \beta }g_{\alpha \beta
,2}\right)
_{,2}\right\} g_{11}  \notag \\
& +\frac{1}{4}g^{\alpha \beta }\left( N_{\alpha \beta }+M_{\alpha
\beta }\right) -2W-2g^{12}g_{12,2}S.  \tag{D.14}
\end{align}%
$\left( D.14\right) $ is arranged under the simplified form $\left( \ref%
{4.37}\right) $ with $\chi $ and $\psi $ given by $\left( \ref{4.38}\right) $%
.\medskip

\textbf{Acknowledgement.}\textit{\ I wish to acknowledge warm
welcome and hospitality from Professor Mamadou Sango at University
of Pretoria where this work was finalized at the beginning of my
postdoctoral fellowship. I would also like to express my
thankfulness to Professor Marcel Dossa for his constant support
and advice.}


\begin{thebibliography}{99}
\bibitem{1} A. Balakin, H. Dehnen, and A.\ E.\ Zayats, Effective metrics in
the non-minimal Einstein-Yang-Mills-Higgs theory,\ \textit{Ann.
Phys.} \textbf{323} (2008) 2183-2207.

\bibitem{2} A.\ Cabet, Local existence of a solution of a semilinear wave
equation with Gradient in a neighborhood of Initial Characteristic
Hypersurfaces of a Lorentzian Manifold,\ \textit{Commun. Part.
Diff. Eq.} \textbf{33}, (2008) 2105-2156.

\bibitem{3} G. Caciotta and F. Nicolo, Global characteristic problem for
Einstein vacuum equations with small initial data: (I) The initial
constraints,\ \textit{JHDE} \textbf{2} (1) (2005) 201-277.

\bibitem{4} F. Cagnac, Probl\`{e}me de Cauchy sur un cono\"{\i}de caract\'{e}%
ristique pour des \'{e}quations quasi-lin\'{e}aires,\ \textit{Ann.
Mat. Pura ed Applicata} \textbf{IV} (CXXIX), 1980\ 13-41.

\bibitem{5} F. Cagnac, Probl\`{e}me de Cauchy sur un cono\"{\i}de caract\'{e}%
ristique,\ \textit{Ann. Fac. Sci. Toulouse} \textbf{11}, (1980)
11-19.

\bibitem{6} F. Cagnac and M. Dossa, Probl\`{e}me de Cauchy sur un cono\"{\i}%
de caract\'{e}ristique. Applications \`{a} certains syst\`{e}mes non lin\'{e}%
aires d'origine physique,\ \textit{Physics on Manifolds},
Proceedings of the International Colloquium in honour of Yvonne
Choquet-Bruhat Paris, june 3-5, 1992 (35-47), edited by Flato,
Kerner, \textit{Lichnerovicz Mathematical Physics Studies}
\textbf{15} (1994), Kluwer Academic Publishers.

\bibitem{7} Y. Choquet-Bruhat, Th\'{e}or\`{e}me d'existence pour certains
syst\`{e}mes d'\'{e}quations aux d\'{e}riv\'{e}es partielles non lin\'{e}%
aires,\ \textit{Acta Math.} \textbf{88} (1952) 141--225.

\bibitem{8} Y. Choquet-Bruhat, Yang-Mills-Higgs fields in three space-time
dimensions,\ \textit{M\'{e}m. Soc. Math. France}, N. S.
\textbf{46} (1991) 73-97; \textit{Analyse globale et physique
math\'{e}matique}, Lyon, (1989).

\bibitem{9} Y. Choquet-Bruhat and D. Christodoulou, Existence of global
solutions of the Yang-Mills, Higgs and spinor field equations in
$3+1$ dimensions,\ \textit{Ann. Sci. Ecole Norm. Sup.} \textbf{14}
(1981) 481-506.

\bibitem{10} D. Christodoulou, The Formation of black holes in General
Relativity,\ \textit{Arxiv: gr-qc/0805.3880v1},\ 594 pp.,\ May\
2008, \textit{EMIS},\ 2009.

\bibitem{11} D. Christodoulou and H. M\"{u}ller zum Hagen, Probl\`{e}me de
valeur initiale caract\'{e}ristique pour des syst\`{e}mes quasi lin\'{e}%
aires du second ordre,\ \textit{C. R. Acad.Sci. Paris}, Ser.
I\textbf{\ 293} (1981) 39-42.

\bibitem{12} T. Damour and B.\ G.\ Schmidt, Reliability of perturbation
theory in General Relativity,\ \textit{J. Math. Phys.} \textbf{31}
(1990)2241-2453.

\bibitem{13} M. Dossa, Espaces de Sobolev non isotropes \`{a} poids et probl%
\`{e}mes de Cauchy quasi-lin\'{e}aires sur un cono\"{\i}de caract\'{e}%
ristique,\ \textit{Ann. Inst. Henri Poincar\'{e}, Phys. Th\'{e}o.} \textbf{66%
} (1) (1997) 37-107.

\bibitem{14} M. Dossa and C. Tadmon, The Goursat problem for the
Einstein-Yang-Mills-Higgs system in weighted Sobolev spaces,\
\textit{C. R. Acad. Sci. Paris}, S\'{e}rie I \textbf{348} (2010)\
35-39.

\bibitem{15} M. Dossa and C. Tadmon, The characteristic initial value
problem for the Einstein-Yang-Mills-Higgs system in weighted
Sobolev spaces,\ \textit{Applied Math. Research eXpress}
\textbf{2010}, (2), (2010) 154-231.

\bibitem{16} G.\ F.\ R. Ellis et al., Ideal Observational Cosmology,\
\textit{Physics Reports} \textbf{124}, (1985) 315-417.

\bibitem{17} F.\ G.\ Friedlander, On the radiation field of pulse solutions
of the wave equations,\ \textit{Proc. R. Soc.} \textbf{A 269}
(1962) 53-65. ibid., \textbf{279} (1964) 386-394. ibid.,
\textbf{299}, 264-278.

\bibitem{18} S.\ W.\ Hawking\ and G.\ F.\ R.\ Ellis, The large scale
structure of space-time,\ \textit{Cambridge University Press},
1973.

\bibitem{19} D.\ E.\ Houpa and M.\ Dossa, Probl\`{e}mes de Goursat pour les
syst\`{e}mes semi-lin\'{e}aires hyperboliques,\ \textit{C. R.\
Acad. Sci. Paris}, Ser. I \textbf{341} (2005) 15-20.

\bibitem{20} J. Kannar, On the existence of $C^{\infty }$ solution to the
asymptotic characteristic initial value problem in General
Relativity,\ \textit{Proc. R. Soc. Lond.}\ \textbf{A452} (1996)
945-952.

\bibitem{21} H.\ Lindblad and I. Rodnianski, Global existence for the
Einstein vacuum equations in wave coordinates,\ \textit{Commun.
Math. Phys.} 256, 43-110, 2005.

\bibitem{22} H. M\"{u}ller zum Hagen and F.\ H.\ J\"{u}rgen Seifert, On
characteristic Initial-Value and Mixed Problems,\ \textit{Gen.
Rel. Grav.} \textbf{8 }(1977) 259-301.

\bibitem{23} H. M\"{u}ller zum Hagen, Characteristic initial value problem
for hyperbolic systems of second order differential equations,\
\textit{Ann. Inst. Henri Poincar\'{e}, Phys. Th\'{e}o.}
\textbf{53} (1990) 159-216.

\bibitem{24} A. D. Rendall, Reduction of the characteristic initial value
problem to the Cauchy problem and its applications to the Einstein
equations,\ \textit{Proc. R. Soc. Lond.} \textbf{A 427} (1990)
221-239.

\bibitem{25} A.\ D.\ Rendall, The characteristic initial value problem for
the Einstein Equations, Non linear hyperbolic equations and field
theory\
\textit{(Lake Como 1991)} \textit{Pitman, Res. Notes, Maths-ser.} \textbf{253%
} \textit{Longman Sci. Tech. Harlow} (1992)154-163.

\bibitem{26} J.\ M.\ Stewart, Classical General Relativity,\ ed. W. B.\
Bonnor, \textit{J. N. Islam and M. A. H.\ Mac Callum}, Cambridge
University Press, 1984.

\bibitem{27} J.\ M.\ Stewart and H.\ Friedrich, Numerical Relativity I. The
Initial Value Problem,\ \textit{Proc. R. Soc. Lond.} \textbf{A
384}, (1982) 427-454.

\bibitem{28} C.\ Svedberg, Future Stability of the Einstein-Maxwell-Scalar
Field System, \textit{Ann. Henri Poincar\'{e}} \textbf{12} (2011)
849--917

\bibitem{29} M.\ S.\ Volkov and\ D.\ V.\ Gal'tsov, Gravitating non-abelian
solitons and black holes with Yang-Mills fields,\ \textit{Phys.
Rep.} \textbf{319} (1999) 1-83.
\end{thebibliography}
\end{document}